\documentclass[11pt]{article}
\usepackage[
    top=1in,
    bottom=1in,
    left=0.75in,
    right=0.75in
]{geometry}

\usepackage[T1]{fontenc}
\usepackage[ttscale=.875]{libertine}
\usepackage{microtype}
\usepackage{setspace}
\usepackage[title]{appendix}

\usepackage{amsfonts}
\usepackage{amsmath}
\usepackage{amssymb}
\usepackage{amsthm}
\usepackage{mathabx}
\usepackage{thmtools, thm-restate}
\usepackage{bbm}
\usepackage{bm}
\usepackage{dsfont}
\usepackage{fullpage}
\usepackage{tikz}
\usetikzlibrary{tikzmark,calc,decorations.pathreplacing}
\usepackage[most]{tcolorbox}
\usepackage{xcolor}
\usepackage{ytableau}
\usepackage{multicol}

\newcommand*{\myfont}{\fontfamily{bch}\selectfont}
\DeclareTextFontCommand{\textmyfont}{\myfont}

\usepackage[linesnumbered,ruled,vlined]{algorithm2e} 

\SetKwComment{Comment}{/* }{ */}

\SetCommentSty{mycommfont}

\let\oldnl\nl
\newcommand{\nonl}{\renewcommand{\nl}{\let\nl\oldnl}}

\definecolor{algocitecolor}{RGB}{242, 118, 19}

\newcommand{\hpol}{\F_{2}[\mathbf{Z}]_{\Hom}^{\leq 1}}
\newcommand{\zero}[1]{{{#1}_{0}}}
\newcommand{\lines}[1]{{{#1}_{1}}}

\newcommand{\setml}{\Psi}
\newcommand{\rand}[1]{{\color{red}{#1}}}
\definecolor{darknavy}{HTML}{1F4E79}
\newcommand{\randsecond}[1]{{\color{darknavy}{#1}}}
\newcommand{\had}[2]{\mathrm{Had}^{(#1)}\left(#2\right)}
\newcommand{\hadfamily}[2]{\mathrm{Had}^{(#1)}_{#2}}

\newcommand{\corr}{\textcolor{algocitecolor}{\mathsf{NewLC}}}
\newcommand{\stdcorr}{\textcolor{algocitecolor}{\mathsf{StdLC}}}
\newcommand{\tabcorr}{\textcolor{algocitecolor}{\mathsf{TabLC}}}
\newcommand{\tabldt}{\textcolor{algocitecolor}{\mathsf{TabLDT}}}
\newcommand{\vldt}{\textcolor{algocitecolor}{\mathsf{NewLDT}}}
\newcommand{\stdldt}{\textcolor{algocitecolor}{\mathsf{StdLDT}}}
\newcommand{\vvanish}{\textcolor{algocitecolor}{\mathsf{ZERO}}}
\newcommand{\hadtest}{\textcolor{algocitecolor}{\mathsf{HadTest}}}

\newcommand{\LDE}{\mathrm{LDE}}

\usepackage[hidelinks,hypertexnames=false]{hyperref}
\usepackage[nameinlink]{cleveref}

\crefname{appendix}{Appendix}{Appendices}
\Crefname{appendix}{Appendix}{Appendices}
\usepackage{url}            
\usepackage{footnotebackref}

\newtheorem{theorem}{Theorem}[section]
\newtheorem{corollary}[theorem]{Corollary}
\newtheorem{lemma}[theorem]{Lemma}
\newtheorem{observation}[theorem]{Observation}

\newtheorem{definition}[theorem]{Definition}
\newtheorem{claim}[theorem]{Claim}
\newtheorem{fact}[theorem]{Fact}
\newtheorem{remark}[theorem]{Remark}

\newcommand{\F}{\mathbb{F}}

\newcommand{\N}{\mathbb{N}}

\newcommand{\Boo}{\{0,1 \}}

\newcommand{\bigO}{\mathcal{O}}

\newcommand{\cV}{\mathcal{V}}

\newcommand{\cP}{\mathcal{P}}

\newcommand{\veca}{\mathbf{a}}
\newcommand{\vecb}{\mathbf{b}}

\newcommand{\ba}{\mathbf{a}}
\newcommand{\bu}{\mathbf{u}}
\newcommand{\bv}{\mathbf{v}}
\newcommand{\bb}{\mathbf{b}}
\newcommand{\bx}{\mathbf{x}}

\newcommand{\bZ}{\mathbf{Z}}

\newcommand{\by}{\mathbf{y}}

\newcommand{\rej}{\mathrm{Rej}}

\DeclareMathOperator{\Hom}{hom}
\DeclareMathOperator{\poly}{poly}

\newcommand{\paren}[1]{\left( #1 \right)}

\newcommand{\brac}[1]{\left[ #1 \right]}

\newcommand{\set}[1]{\left\{ #1 \right\}}

\newcommand{\setcond}[2]{\left\{ #1 \;\middle\vert\; #2 \right\}}

\newcommand{\accept}{\textsc{Accept}}
\newcommand{\reject}{\textsc{Reject}}

\newcommand{\eqq}{\overset{?}{=}}
\DeclareMathOperator*{\E}{\mathbb{E}}

\definecolor{thmcolor}{RGB}{173, 255, 222}
\definecolor{citecolor}{RGB}{1, 210, 56}
\definecolor{lemmacolor}{RGB}{130, 169, 252}
\definecolor{emphcolor}{RGB}{98, 17, 245}

\newtcolorbox{algobox}{colback=lightgray!5!white,colframe=gray!75!gray}
\newtcolorbox{thmbox}{colback=thmcolor!5!white,colframe=teal!75!teal}
\newtcolorbox{lemmabox}{colback=lemmacolor!5!white,colframe=blue!75!blue}

\usepackage{enumerate}
\usepackage[parfill]{parskip}
\usepackage{changepage}
\usepackage{framed}

\allowdisplaybreaks
\hypersetup{
	colorlinks,
	linkcolor={blue},
	citecolor={citecolor},
	urlcolor={blue}
}

\usepackage[
	backend=biber,
	style=alphabetic,
	sorting=nyt,
	backref=true,
	maxcitenames = 8,
	mincitenames = 5,
	maxalphanames = 8,
	minalphanames = 5,
	maxnames = 10,
	minnames = 5
]{biblatex}

\date{\today}

\begin{document}

\title{A Simple Algebraic Proof of the PCP Theorem}


\author{Prashanth Amireddy\thanks{School of Engineering and Applied Sciences, Harvard University, Cambridge, Massachusetts, USA. Supported in part by a Simons Investigator Award and NSF Award CCF 2152413 to Madhu Sudan and a Simons Investigator Award to Salil Vadhan. Part of this work was done during a visit to the University of Copenhagen supported by the European Research Council (ERC) under grant agreement no. 101125652 (ALBA). Email: \texttt{pamireddy@g.harvard.edu}} \and
		Amik Raj Behera\thanks{Department of Computer Science, University of Copenhagen, Denmark. Supported by Srikanth Srinivasan's start-up grant from the University of Copenhagen. Email: \texttt{ambe@di.ku.dk} } \and
		Srikanth Srinivasan \thanks{Department of Computer Science, University of Copenhagen, Denmark. Supported by the European Research Council (ERC) under grant agreement no. 101125652 (ALBA). Email: \texttt{srsr@di.ku.dk} } \and
		Madhu Sudan\thanks{School of Engineering and Applied Sciences, Harvard University, Cambridge, Massachusetts, USA. Supported in part by a Simons Investigator Award, NSF Award CCF 2152413 and AFOSR award FA9550-25-1-0112. Email: \texttt{madhu@cs.harvard.edu}} \and Sophus Valentin Willumsgaard \thanks{Department of Computer Science, University of Copenhagen, Denmark. Supported by the European Research Council (ERC) under grant agreement no. 101125652 (ALBA). Email: \texttt{sophus.willumsgaard@di.ku.dk} }   }

\maketitle

\begin{abstract}
	We give the simplest known algebraic proof of the PCP theorem, involving only ingredients like code concatenation, polynomial interpolation, and polynomial multiplication. Specifically, we prove that graph 3-coloring has a polynomial-sized proof that can be verified by a verifier tossing logarithmically many coins and querying a constant number of bits in the proof. In particular, our proof does not involve any PCP compositions; notably, it does not invoke the NP-completeness of any fixed problem, such as SAT or 3-coloring, in the construction of the verifier. The main innovation in our work is a clean, coding theoretic, way to encode univariate polynomials that allows us to implement ``low-degree testing'' using just a constant number of bits of queries. Insights from recent attempts to simplify the PCP proof by the authors (STOC 2026) and Goldreich (ECCC 2025) allow us to observe that low-degree was the key bottleneck in converting previous algebraic constructions of the PCP verifier into a constant query PCP. Thus, by overcoming this bottleneck, we get the full PCP verifier using elementary and self-contained steps. As concrete support for the claimed simplicity, we include the full pseudocode of the PCP verifier, assuming finite field arithmetic, and a full description of the completeness (aka ``honest'') prover, assuming multivariate polynomial arithmetic including interpolation and evaluation, that fit in about a page each.
\end{abstract}
\thispagestyle{empty}

\newpage


\tableofcontents

\thispagestyle{empty}

\newpage

\setcounter{page}{1}
\section{Introduction}

Ever since the discovery of the PCP theorem in the early 90s~\cite{AroraS,ALMSS}, a ``simple proof'' of the PCP theorem has been an object of desire in the CS community. Indeed, this quest has led to many improvements to the original algebraic proof, while also launching radically novel directions as in the proof of the PCP theorem by gap amplification by Dinur~\cite{Dinur} (see also Dinur and Reingold~\cite{DinurReingold}). The algebraic proofs and gap amplification proofs are quite far from each other, and the relative simplicity is hard to compare.
In this work, we focus on the algebraic direction and further simplify it to be able to achieve two concrete milestones: Specifically, we give pseudocodes (assuming finite field and polynomial arithmetic) for the PCP verifier, as well as the completeness (aka ``honest'') prover, each of which fits in about a page (see \Cref{app:full-verifier}). One notable aspect of our proof is that there is no composition of proofs in our verifier, and in particular we do not rely on NP-completeness proofs of some canonical problem to recursively verify proofs. To elaborate further, we give some background on past algebraic proofs of the PCP theorem and the ingredients therein.

Before delving into the details, let us recall that PCPs are typically parameterized by two main parameters --- the {\em randomness} complexity of the verifier, ideally logarithmic in the length of the classical proof; and the {\em query}  complexity, ideally constant. Achieving both ideal limits yields the PCP theorem, which is the goal of this paper; however, the constructions involve ingredients that are weak in one or both parameters. Of course our objective is to achieve these ideal parameters with a ``simple'' proof. As indicated above, we use this term informally to capture two ingredients, namely (1) the PCP verifier who, given an instance of say graph 3-coloring, determines the length of the proof and the queries to be performed and the acceptance condition and (2) the ``honest'' prover (formally the completeness prover), who given a 3-coloring of a 3-colorable graph, computes the bits of the PCP proof to supply to the verifier (which the verifier should accept with probability one). Indeed, one could argue that the complexity of this honest prover is perhaps the most natural measure of complexity of a PCP proof in that it explains what the PCP proof looks like; and this has been a complex transformation in all previous proofs and can be viewed as the primary motivation of this work.

\paragraph{The Ingredients in an Algebraic PCP:}Algebraic PCPs, starting with the works of Arora and Safra~\cite{AroraS} have relied on three kinds of ingredients: (1) ``Atomic PCPs'', which are self-contained, simple constructions of PCPs for some natural problem (like SAT or Graph-Coloring), which are deficient in at least one of the two parameters. (2) ``Composition'' --- a technique that combines two PCP-like objects that we will call outer and inner verifiers to get a new object that has randomness complexity close to that of the outer verifier and query complexity closer to that of the inner verifier. Applying this operator judiciously can get both parameters closer to the ideal limit. (3) ``Composition-preparation'' --- we coin this term and use it to describe a variety of steps involved to convert the atomic PCPs into outer and inner verifiers suitable for composition. In some cases, like in the original proof of the PCP theorem in the work of Arora, Lund, Motwani, Sudan and Szegedy~\cite{ALMSS}, this can be a technically complex operation. But in all cases, the preparations are needed to make the atomic verifier into a ``robust'' one suitable for use as an outer verifier (involving operations termed ``parallelization'' or ``bundling'' or ``robustification'') as well as into a ``PCP of proximity'' or ``assignment tester'' which makes them suitable for use as an inner verifier. (Notably, the use of the NP completeness of the ``natural problem'' for which the PCP was originally designed to get a verifier for less natural and harder-to-specify problems.) Together, ``composition'' with ``composition-preparation'' contribute to the bulk of the conceptual complexity of algebraic PCPs, making them hard to describe and even harder to teach. In this work, our goal is to build a PCP whose complexity is not much more than that of the atomic PCP, foregoing composition-preparation altogether and simplifying ``composition'' to the level of ``code-concatenation'' which we expand on next.

\paragraph{Composition of PCPs vs. Concatenation of Codes (see also Goldreich~\cite[Section 4]{Goldreich-one-half-PCP-composition}):} Concatenation of codes is an elementary operation that converts codes over large alphabets into codes over smaller ones. Concatenation uses two ingredients - an outer code over a large alphabet, where the large alphabet makes it easy to achieve a large distance between codewords; and an inner code that shows how to encode elements of the large alphabet as, say, binary strings. Concatenation simply converts the codewords of the outer code into binary strings by writing each element of the alphabet in binary using the inner code. Notationally, concatenation is as simple as composition of functions (viewed appropriately).

Composition of PCPs is an extension of this concept to PCPs but becomes significantly more complex. Viewed in terms of what the honest prover would do, composition takes an outer prover who builds a PCP proof over a large alphabet (in which the verifier queries a constant number of symbols) and then converts this proof into a PCP proof on bits. But this translation involves a number of complexities beyond concatenation. (1) In the simplest form, the bits of the composed proof are encodings of all $c$-tuples of symbols of the outer code for some constant $c$. (2) In a slightly more complex form, a special family of $c$-tuples of the symbols of the outer codes is chosen to be re-encoded. (3) In the most complex form, the encoding of each $c$-tuple depends on the acceptance condition that the outer verifier intends to apply, and thus varies from $c$-tuple to $c$-tuple. (Typically this step involves use of the Cook-Levin theorem to express arbitrary outer verifiers in some canonical form using an NP-complete problem.)\footnote{In our estimate, all previous algebraic proofs including~\cite{ALMSS,BGHSV,BenSassonS,ABSSW-PCP-one-composition} use (3) though all but \cite{ALMSS} don't rely on (2). The PCP in \cite{Dinur} relies crucially on (2) but does not need (3).}

\paragraph{Our result and ideas:}The  PCP we present in this paper eliminates the need for compositions in the sense above.\footnote{One could argue that the use of the low-degree test, described below, which necessitates a suitable `proof oracle', is a non-trivial mechanism that might count for `0.5' compositions in the sense of Goldreich~\cite{Goldreich-one-half-PCP-composition}.} As a consequence the honest prover as well as the verifier become significantly simpler and can be described in a relatively straightforward manner using standard polynomial manipulations. Broadly, our PCP is obtained by taking the simplest known atomic outer PCP, namely the one from \cite{ABSSW-PCP-one-composition}; identifying the key bottleneck to a simple composition there, namely the ``low-degree test'', and then giving a new encoding and implementation of the low-degree test that allows for a simple composition. We expand on these steps by first giving a quick overview of the atomic test.

The atomic test we use goes back to the work of Babai, Fortnow, and Lund~\cite{BFL}, which was first converted to a ``large alphabet'' constant query test in \cite{ALMSS}. These works effectively show that verifying graph 3-coloring on $n$ vertex graphs ``reduces'' to verifying that a constant number of $m$-variate polynomials over a field $\F_q$ are of degree at most $d$ for $m,d,q = \poly(\log n)$ and verifying that these polynomials are zero on some subset of the domain of the form $H^m$ for $H \subseteq \F_q$.~\footnote{A similar reduction also works for SAT but the reduction is cleaner and more straightforward for $3$-coloring.} The latter task used to involve the famed ``sum-check protocol'' but this aspect (while elegant) is cleaned up significantly in the work of Ben-Sasson and Sudan \cite{BenSassonS}  who show that the final task reduces to $m$ low-degree tests on $m$ additional polynomials, and further cleaned up by
previous work of the authors \cite{ABSSW-PCP-one-composition}
who reduce this to one low-degree test on just one additional $2m$-variate polynomial. The complexity hidden by the ``reductions'' alluded to above involves querying a constant number of polynomials at a constant number of locations and verifying that the query responses are zeroes of a constant number of constant-degree polynomials.

Thus, apart from the ``low-degree test'' (still to be discussed), all aspects of the proof thus far require querying the proof in a constant number of places, where every element of the proof is from $\F_q$ and thus requires $\Theta(\log\log n)$ bits to describe. The query complexity here can be reduced to $\bigO(1)$ bits by simple concatenation with the ``low-degree long code'' which involves writing down the evaluation of every $\bigO(1)$-degree polynomial of these $\bigO(\log\log n)$ bits. This encoding is known to be locally testable and sufficient to perform the tests used in the reductions mentioned above; and their encoding takes exponential length, but on strings of length $\log \log n$ we can afford this blow-up! This leads us to the key bottleneck, namely the low-degree test.

\paragraph{The low-degree test:} The low-degree testing problem is a simple one to specify: We are given oracle access to a function $f:\F_q^m \to \F_q$, a degree parameter $d$ and we would like a test that queries $f$ in $\bigO(1)$ locations and accepts degree $d$ polynomials while rejecting functions that are, say $0.01$-far (in relative Hamming distance) from degree $d$ polynomials with probability say $0.0001$. Unfortunately, as stated, this problem is not solvable, and so one allows an auxiliary ``proof'' oracle $f_1$. Now we would like that the test is allowed $\bigO(1)$-queries into $f$ and $f_1$, such that for every degree $d$ polynomial $f$ there exists $f_1$ such that the verifier always accepts $(f,f_1)$ while if $f$ is $0.01$-far, then for every $f_1$ the pair $(f,f_1)$ is rejected with probability at least $0.0001$. Such tests are well-known --- we use the line-point test of Rubinfeld and Sudan~\cite{RubSud} that suffices for our purpose. Here the function $f_1:\F_q^{2m} \to \F_q^{d+1}$ where the domain represents a pair of points in $\F_q^m$ and the $f_1(a,b)$ purportedly represents the degree $d$ univariate polynomial that describes $f$ restricted to a line containing $a$ with slope $b$. The test picks $a,b\in \F_q^m$ and a non-zero $\lambda \in \F_q$ randomly and verifies that $f_1[a,b](\lambda) \eqq f(a+\lambda b)$ which is an $\F_q$-linear test involving just one query each to $f$ and $f_1$.

Unfortunately, we cannot reduce the alphabet size of the prover in this test by using a ``low-degree long code''. In particular evaluations of $f_1$ are $d\log q = \poly(\log n)$ bit strings (specifically the strings are $\bigO(\log n)^{c_1}$ bits long for some $c_1 > 1$.) Low-degree long-code encodings map $k$-bit strings to $2^{k^{c_2}}$-bit strings for some constant $c_2>1$. Composing the two gives a quasi-polynomial sized PCP  (equivalently, the resulting randomness complexity would be $\bigO(\log n)^{c_1 c_2}$) which does not meet our ideal goal. This leads us to our key problem and solution described next.

\paragraph{A new multivariate encoding:} Our goal is to come up with some encoding $E:\F_q^d \to \F_2^{N}$ such that this encoding is easy to test with $\bigO(1)$ queries and so that given $\lambda \in \F_q$, the encoding $E[f_1(a,b)]$ allows us to ``locally retrieve'' $f_1(a,b)(\lambda)$ (specifically we need a ``self-corrector'', one that, even if the encoding is slightly corrupted, for every $\lambda$ correctly reports $f_1(a,b)(\lambda)$ for the nearest codeword). As a starting point we get such an encoding $E_0:\F_q^d \to \F_q^{N'}$. Concatenation with a Hadamard code converts this to a binary code in a standard way and we omit the details here.

To get this encoding $E_0$, we represent a univariate degree $< d$ polynomial $P(X)$ by a multivariate polynomial of lower degree. We call this the \emph{Set-Multilinear encoding} of $P$ as it is essentially a ``set-multilinearization''  of the well-known Inverse Kronecker map that encodes a high-degree univariate polynomial as a low-degree multivariate polynomial. The class of multivariate polynomials we use in the encoding is set-multilinear\footnote{A multivariate polynomial in $c$ disjoint sets of variables $Y_1,\ldots,Y_c$ is  \emph{set-multilinear} if every monomial contains exactly one variable from each $Y_i.$} polynomials in $c$ blocks of variables, with each block containing $d^{1/c}$ variables. Denoting these blocks of variables by $\{Y_{i,j}\}_{i \in [c], j \in [\ell]}$ for $\ell = \lceil d^{1/c}\rceil$, we think of the variable $Y_{ij}$ as representing the monomial $X^{(j-1)\ell^{i-1}}$.  Writing every exponent in base $\ell$ we get a way to represent every monomial $X^e$ as a set multilinear form in the $Y_{ij}$'s. Thus every polynomial $P(X)$ can be represented as a set-multilinear polynomial, and hence degree $\leq c$, polynomial $\widetilde{P}$ in the variables $Y_{ij}$. Our encoding of $P$ is simply the evaluation of $\widetilde{P}$ over $\F_q^{c\ell}$. We test this encoding loosely by simply testing that it is a degree $\leq c$ polynomial. Since $c$ is a constant, this only requires constantly many queries into the encoding. Every honest encoding of a univariate polynomial passes the test. But if a function passes the test, we only get that it is close to the evaluation of some $\widetilde{P}$ that has total degree $\leq c$. But this is sufficient to argue that $\widetilde{P}((X^{(j-1)\ell^{{i-1}}})_{i,j})$ is a (univariate) polynomial of degree $\bigO(cd)$ and furthermore this encoding can be locally self-corrected. The degree guarantee is weaker than strictly needed in a low-degree test\footnote{This could be rectified with a few more tests, but we do not do so in the interest of keeping the verifier simple} but this is completely fine for the soundness proof.

Combining this encoding of univariate low-degree polynomials with the other ingredients listed above (the atomic PCP from \cite{ABSSW-PCP-one-composition} and the low-degree long code) now suffices to give us our PCP.

Our new multivariate encoding (along with the replacement for the sum-check protocol based on~\cite{BenSassonS,ABSSW-PCP-one-composition}) allows us to extend this technique to get the randomness complexity down to $\bigO(\log n),$ at the expense of one more code-concatenation.

\paragraph{Comparison with the work of Goldreich~\cite{Goldreich-one-half-PCP-composition}:} This result is motivated by a result of Goldreich~\cite{Goldreich-one-half-PCP-composition}, who shows how to obtain a PCP theorem with randomness complexity $n^{\varepsilon}$ (for arbitrarily small $\varepsilon$) and $\bigO(1)$ queries using only code-concatenation. This result is limited by the use of a version of the sum-check protocol, which has constant query complexity only for a restricted range of parameters. One can view our construction as modifying the construction from \cite{Goldreich-one-half-PCP-composition} with a replacement for the sum-check protocol based on~\cite{BenSassonS,ABSSW-PCP-one-composition}. This brings down the query complexity to constant, but still does not quite get the randomness down to $\bigO(\log n).$ For this, we need the new multivariate encoding, which allows us to get the full PCP theorem at the expense of one more code-concatenation.

\subsection*{Organization of the paper}
In \Cref{sec:preliminaries}, we give the necessary definitions, background results, and state the PCP Theorem (\Cref{thm:pcp}). In \Cref{sec:encodings}, we describe the two encodings that we will use - set-multilinear encoding (\Cref{subsec:first-encoding}) and low-degree Hadamard encoding (\Cref{subsec:second-encoding}). We also discuss their syntactic test and self-correctors in \Cref{subsec:tests-set-ml} and \Cref{subsec:tests-hadamard}, respectively. We provide proofs from this \Cref{sec:encodings} in \Cref{app:encodings-app} and \Cref{app:syntactic-test-gen-Hadamard}. Using these two encodings, we give our low-degree test in \Cref{sec:ldt} and analyze it. In \Cref{sec:lc}, we give our local corrector over alphabet $\F_{2}$ and analyze it in \Cref{app:proof-local-correction}. In \Cref{sec:zero-test}, we give our zero-on-grid test over alphabet $\F_{2}$ and analyze it in \Cref{app:proof-zero-test}. Using these three algorithms (low-degree test, local corrector, and zero-on-grid test), we describe our PCP verifier in \Cref{sec:pcp}. In \Cref{app:full-verifier}, we give a complete standalone description of the PCP verifier and an honest prover.

\section{Preliminaries}\label{sec:preliminaries}

\paragraph{Probabilistic Algorithms and Oracles}For an algorithm $\mathcal{V}$ and a string $\Pi$ over alphabet $\Sigma$, we will use $\mathcal{V}^{\Pi}$ to denote the algorithm $\mathcal{V}$ with oracle access to string $\Pi$. By oracle access, we mean $\mathcal{V}$ can query $\Pi[i]$ for any $1 \leq i \leq |\Pi|$. For a probabilistic algorithm $\mathcal{V}$, we will use the notation $\mathcal{V}^{\Pi}(x;R)$ to say that the algorithm $\mathcal{V}$ has oracle access to $\Pi$, has input $x$, and access to a random string $R$. In this notation, $\mathcal{V}^{\Pi}(x;R)$ is a deterministic algorithm, and the randomness is in the choice of $R$. For a string $\Pi$ and a subset $S \subseteq [|\Pi|]$, $\Pi|_{S}$ refers to the substring formed by entries of $\Pi$ corresponding to $S$.

\subsection{PCPs}
Throughout this article, whenever we mention a verifier, we mean a \textit{standard verifier}, which we define next. We will always use a standard verifier, both in PCP and in the subroutines such as low-degree test.\\

\begin{definition}[Standard Verifier]\label{defn:standard-verifier}
	For functions $r,\ell: \mathbb{Z}^{\geq 0} \to \mathbb{Z}^{\geq 0}$, define a $(r,\ell)$-standard verifier $\mathcal{V}$ as follows:
	On input $\mathbf{x} \in \Boo^{n}$ of length $n$, a string\footnote{This string $R$ is the random string fed into $\mathcal{V}$.} $R \in \Boo^{r(n)}$, and oracle access to a string\footnote{called a {\em proof}} $\Pi \in \{0,1\}^{\mathrm{size}(n)}$,
	\begin{itemize}
		\item $\mathcal{V}$ computes a subset $Q = Q(\mathbf{x},R) \subseteq [\mathrm{size}(n)]$ of cardinality $\ell(n)$. These are the \emph{queries} of $\mathcal{V}$ on input $\mathbf{x}$ and randomness $R$.
		\item $\mathcal{V}$ then either returns $\accept$ or $\reject$ depending on $\Pi|_{Q}$.
	\end{itemize}

	Furthermore, the running time of $\mathcal{V}$ has to be $\poly(n\cdot 2^{r(n)})$.
\end{definition}

\noindent
Observe that in the above definition, $\mathcal{V}$ makes $\ell(n)$ queries to $\Pi$ using the $r(n)$ coin tosses. In particular, $\mathcal{V}$ can only query a coordinate within range of $[0, \ell(n) \cdot 2^{r(n)}-1]$. So without loss of generality, we can  always assume that the proof size $|\Pi|$ is $\mathrm{size}(n) = \bigO(\ell(n) \cdot 2^{r(n)})$.\\

In this paper, we will focus on the language $3$-$\mathsf{COLOR}$ of $3$-colorable graphs, which is known to be $\mathsf{NP}$-complete. This is only because the $3$-$\mathsf{COLOR}$ fits quite nicely with our PCP proof. We could always construct a PCP verifier similar to ours for other $\mathsf{NP}$ problems. In particular, we show that certifying membership in $3$-$\mathsf{COLOR}$ admits $(\bigO(\log n),\bigO(1))$-standard verifiers:

\begin{thmbox}
	\begin{restatable}[The PCP Theorem {\cite{ALMSS}}]{theorem}{pcpthm}\label{thm:pcp}
		There exists a randomized oracle algorithm $\cV$, a constant $\gamma>0$, and a polynomially growing function $\mathrm{size}(n)$ such that for every $n$ and for every input graph $G=(V,E)$ over $n$ vertices, we have the following.
		\begin{itemize}
			\item If $G \in 3$-$\mathsf{COLOR}$, then there exists $\Pi\in \{0,1\}^{\mathrm{size}(n)}$ such that
			      \begin{align*}
				      \Pr[\cV^{\Pi}(G) \text{ returns \accept}] = 1.
			      \end{align*}

			\item If $G \notin 3$-$\mathsf{COLOR}$, then for every $\widetilde{\Pi} \in \{0,1\}^{\mathrm{size}(n)}$, we have
			      \begin{align*}
				      \Pr[\cV^{\widetilde{\Pi}}(G) \text{ returns \reject}] \ge \gamma.
			      \end{align*}
		\end{itemize}
		Furthermore, for every $\Pi\in \{0,1\}^{\mathrm{size}(n)}$, $\cV^\Pi(G)$ runs in time $\poly(n)$, uses $\bigO(\log n)$ bits of randomness, and makes $\bigO(1)$ oracle queries.
	\end{restatable}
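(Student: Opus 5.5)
We prove \Cref{thm:pcp} by combining the three subroutines outlined in the introduction: the low-degree test, the local corrector, and the zero-on-grid test. All of them work over alphabet $\F_2$ with $\bigO(1)$ queries and $\bigO(\log n)$ randomness. The verifier runs the algebraic arithmetization of 3-coloring, in the spirit of \cite{BFL,BenSassonS,ABSSW-PCP-one-composition}, on top of these subroutines.

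\textbf{Step 1 (arithmetization).} Pick parameters $h = \Theta(\log n)$ and $m = \Theta(\log n/\log\log n)$ so that $|H|^m \ge n$ for some $H\subseteq \F_q$ with $|H| = h$. Pick $q = \poly(\log n)$ large enough compared to the degree $d = \bigO(mh)$. Identify $V$ with a subset of $H^m$, and let the edge set be given by the low-degree extension $\hat E$ of its indicator on $H^{2m}$.

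A coloring becomes a function $A: H^m\to\{0,1,2\}\subseteq \F_q$, with low-degree extension $\hat A$. The coloring is valid iff the polynomial
\[
C(x,y) = \hat E(x,y)\cdot \prod_{c\in\{1,2\}}\bigl(\hat A(x)-\hat A(y)-c\bigr)\cdot\ldots
\]
vanishes on $H^{2m}$, where the product is chosen to vanish exactly on properly colored pairs. We also need $\hat A(x)(\hat A(x)-1)(\hat A(x)-2)$ to vanish on $H^m$. Following \cite{ABSSW-PCP-one-composition}, "vanishes on $H^{k}$" is certified by quotient polynomials $Q_i$ with $C = \sum_i Z_H(x_i)\,Q_i$, where $Z_H$ is the vanishing polynomial of $H$. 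These $Q_i$ can be combined into a single $2m$-variate auxiliary polynomial.

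\textbf{Step 2 (encoding and tests).} The proof $\Pi$ contains three kinds of oracles:
\begin{itemize}
\item the evaluations of $\hat A$ and of the auxiliary polynomial, each $\F_q$-symbol written using the low-degree Hadamard encoding;
\item the line oracles $f_1$ used by the low-degree test, each written with the set-multilinear encoding followed by a Hadamard-type concatenation;
\item any further oracles required by the zero-on-grid test.
\end{itemize}
The verifier then does three things. It runs the low-degree test on each polynomial oracle. It runs the zero-on-grid test on the identity at a uniformly random point, evaluating $\hat E$ itself since it is computable from $G$. It reads off the needed $\F_q$ values through the local corrector. Every step uses $\bigO(1)$ bit queries.

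For randomness, a point of $\F_q^{2m}$ costs $2m\log q = \bigO(\log n)$ bits. An index into a Hadamard-type encoding of an $\bigO(\log\log n)$-bit or $\poly\log n$-length symbol also costs $\bigO(\log n)$ bits. This second bound holds because the set-multilinear encoding has length $q^{c\ell}$ with $\ell = d^{1/c}$, and we take $c$ large enough to make this manageable.

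\textbf{Step 3 (soundness).} Completeness is immediate from the honest construction. For soundness, suppose the low-degree tests reject with probability below a small constant. Then each oracle is close to a low-degree polynomial, possibly of degree $\bigO(cd)$ rather than $d$, which still suffices. The local corrector then returns the values of these polynomials with high probability. If the zero-on-grid test also passes with good probability, the polynomial identity holds on most points, and by Schwartz–Zippel it holds everywhere. This forces $\hat A|_{H^m}$ to be a valid 3-coloring, so $G \in 3$-$\mathsf{COLOR}$.

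The main obstacle is the parameter bookkeeping, not the logic. We must ensure three things simultaneously: that $q \gg$ the effective degree, so the Schwartz–Zippel and low-degree-test soundness bounds are constants; that the encoding lengths stay $2^{\bigO(\log n)}$; and that the per-subroutine soundness guarantees combine via a union bound into a constant $\gamma$.
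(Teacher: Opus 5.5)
Your proposal has a genuine gap at the step where the verifier ``reads off the needed $\F_q$ values through the local corrector.'' With $\bigO(1)$ bit queries the verifier never sees an $\F_q$ value. The local corrector and the zero-on-grid test only deliver bits of the form $L(\rho(P(\mathbf a)))$ for a random $L\in\hpol$. So to run the zero-on-grid test on your edge constraint $C$, you need oracle access to $\had{1}{C}$, i.e.\ to the bit $L(\rho(C(\mathbf a,\mathbf b)))$.

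But $C(\mathbf a,\mathbf b)$ is a nonlinear function of two \emph{separately} encoded field elements, $\hat A(\mathbf a)$ and $\hat A(\mathbf b)$. Hence $L(\rho(C(\mathbf a,\mathbf b)))$ contains cross terms such as $(\rho(x),\rho(y))\mapsto L(\rho(xy))$, which is a bilinear form of full rank $t$ for every nonzero $L$. Each query to $\had{r}{\hat A(\mathbf a)}$ or to $\had{r}{\hat A(\mathbf b)}$ depends on only one of the two values. By the inner-product lower bound, no $\bigO(1)$ such queries can compute, or reliably check, this bit. Your list of oracles (``any further oracles required by the zero-on-grid test'') does not address this.

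The paper's fix is a specific design choice:
\begin{itemize}
\item It adds a separate $2m$-variate oracle for $\chi'(\mathbf a,\mathbf b)=\chi(\mathbf a)-\chi(\mathbf b)$, encoded with $\mathrm{Had}^{(3)}$ and low-degree tested like the others.
\item It ties $\chi'$ to $\chi$ by the \emph{linear} check $\widetilde\Pi_{\chi'}[(\mathbf a,\mathbf b),L]=\widetilde\Pi_\chi[\mathbf a,L]-\widetilde\Pi_\chi[\mathbf b,L]$, which is valid because $L$ is homogeneous.
\item It proves via Schwartz--Zippel (\Cref{clm:chi-chi}) that the decoded polynomials satisfy $\widetilde{\chi'}(\mathbf a,\mathbf b)=\widetilde\chi(\mathbf a)-\widetilde\chi(\mathbf b)$.
\end{itemize}
With colors $\{1,\omega,\omega^2\}$, both constraints $\chi^3-1$ and $\widehat E(\mathbf a,\mathbf b)\cdot(\chi'^3-1)$ become cubic polynomials in a \emph{single} encoded element. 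That is exactly the situation handled by \Cref{claim:checking-relation-via-Hadamard}.

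Even for a single-element constraint, the glue between the subroutines is missing. The index $\Lambda_{P,L'}$ from \Cref{claim:checking-relation-via-Hadamard} is far from uniform in $\F_2[\mathbf Z]^{\le 3}$. So the guarantee that $\widetilde\Pi_\chi$ is close to $\had{3}{\widetilde\chi}$ says nothing about the queried bit. The paper self-corrects with a shared random $R$ (the $\texttt{SC}$ subroutine). It then averages over $R$ to fix one $R$ for which the simulated oracle $\widetilde\Pi_{\mathsf{Val}}$ is both $\bigO_c(\gamma)$-close to $\had{1}{\widetilde\chi^{3}-1}$ and rarely rejected by the zero-on-grid test. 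Only then can \Cref{lemma:zero-on-variety}, with a degree bound $\bigO(cD)$, be applied to a fixed oracle.

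Finally, your arithmetization breaks in the characteristic the subroutines require. They need $q=2^t$ for the $\F_2$-linear $\rho$, so $2=0$ in $\F_q$. Your palette $\{0,1,2\}$ then has only two colors, and $\hat A(\hat A-1)(\hat A-2)=\hat A^2(\hat A-1)$ vanishes on $H^m$ only for $2$-colorings. Completeness therefore fails on non-bipartite $3$-colorable graphs. Cube roots of unity repair this: for $x,y\in\{1,\omega,\omega^2\}$ one has $(x-y)^3=1$ if and only if $x\neq y$. This is why $\chi'^3-1$ is the right properness polynomial.
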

\end{thmbox}

For a prime power $q$, we denote the finite field of size $q$ by $\F_q$. Throughout this article, we will have $q = 2^{t}$ for some $t \in \mathbb{N}$. We will use $\mathcal{P}_{d}(m,\F)$ to denote the space of functions that can be represented by $m$-variate polynomials of degree at most $d$ (or simply {\em degree-$d$}) over the field $\F$. For a finite string $\Pi$ and an algorithm $\mathcal{A}$, we will use $\mathcal{A}^{\Pi}$ to denote that the algorithm $\mathcal{A}$ has oracle access to $\Pi$. In particular, $\mathcal{A}$ can query the string $\Pi$ at any coordinate/entry. For us, $\Pi$ will be evaluation of functions on some domain $D$, and so $\mathcal{A}^{\Pi}$ means that $\mathcal{A}$ can access the value of $\Pi[\mathbf{x}]$ for any point $\mathbf{x} \in D$. When $D=D_1\times D_2$, for $\mathbf{x} \in D_1$, we will use $\Pi[\mathbf{x},\cdot]$ to denote the restriction of $\Pi$ corresponding to fixing the first coordinate to $\mathbf{x}$. We define $\Pi[\mathbf{x},\cdot,\cdot]$ similarly when $D=D_1\times D_2\times D_3$. For a probabilistic algorithm $\mathcal{A}$, we will use $\mathcal{A}(\mathbf{x}; \mathbf{r})$ to denote the output of the algorithm on input $\mathbf{x}$ and random string $\mathbf{r}$.

\noindent
For two functions $f,g: S \to T$, we use $\delta(f,g)$ to denote their relative Hamming distance, i.e., the fraction of points in the domain $S$ where they differ. We say two functions $f,g:S\rightarrow T$ are $\delta$-close if $\delta(f,g) \leq \delta$; else, we say they are $\delta$-far. For a family $\mathcal{F}$ of functions from $S$ to $T$ and a function $f: S \to T$, we use $\delta(f,\mathcal{F})$ to denote the minimum relative distance of $f$ from a function in $\mathcal{F}$, i.e., $\delta(f,\mathcal{F}) = \min_{g \in \mathcal{F}} \delta(f,g)$.

\noindent
For a subset $S \subseteq \F_{q}^{m}$, we will use the notation $\mathbf{x} \sim S$ to denote that $\mathbf{x}$ is uniformly distributed over the subset $S$.
Let $q=2^{t}$ for an even integer $t$, and $\mathbf{Z}=(Z_1,\dots,Z_t)$ denote $t$ many variables. Let $\F_2[\mathbf{Z}]^{\le i}$ denote the set of polynomials in $\F_2[\mathbf{Z}]$ of degree at most $i$. Also, we will use $\hpol$ to denote the family of homogeneous degree $1$ polynomials. For a set of polynomials $\set{G_{1},\ldots,G_{k}}$, we will denote the ideal generated by them as $\mathbb{I}(G_{1},\ldots,G_{k})$. Let $\omega \in \F_q\setminus \{1\}$ be a cube root of the multiplicative identity. We also state a few standard facts about finite fields and low-degree extensions.\\

\begin{fact}[Construction of finite fields,~{see e.g.~\cite[Chapter 20]{Shoup}}]\label{fact:construction-field-generator} Let $q = 2^{t}$ for some $t \in \mathbb{N}$. A description of the finite field $\F_{q}$ along with a generator can be computed\footnote{i.e., Given $t$ as input, output the ``truth tables'' of the field addition and multiplication. Consequently, we can also find an element of any given order (if it exists).} in time $\mathrm{poly}(q)$.
\end{fact}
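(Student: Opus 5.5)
To construct $\F_q$ for $q=2^t$ in time $\poly(q)$, I will find an irreducible polynomial of degree $t$ over $\F_2$ by brute force, build the arithmetic tables of $\F_2[X]/(g)$, and then search for a generator of the multiplicative group. Since the target running time is $\poly(q)=\poly(2^t)$, exhaustive search over all $2^t$ candidates at each stage is affordable. This avoids any randomized or number-theoretic cleverness.

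\textbf{Finding an irreducible polynomial.} I will enumerate all $2^t$ monic polynomials $g\in\F_2[X]$ of degree $t$. For each one, I test irreducibility by trial division by every polynomial of degree between $1$ and $\lfloor t/2\rfloor$. There are at most $2^{t}$ divisors, and each division costs $\poly(t)$, so each $g$ is tested in $\poly(q)$ time. Some irreducible $g$ of degree $t$ must exist. The cleanest justification is that $\F_{2^t}$ exists as the splitting field of $X^{2^t}-X$; the minimal polynomial over $\F_2$ of a primitive element has degree exactly $t$. Alternatively one can cite the standard counting formula $\frac1t\sum_{d\mid t}\mu(d)2^{t/d}>0$. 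I expect this existence fact to be the only non-routine step, and I would simply cite it from \cite{Shoup}.

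\textbf{Building the tables.} Given $g$, the quotient $\F_2[X]/(g)$ is a field with $2^t$ elements, because $g$ is irreducible. I represent each element by its coefficient vector of length $t$. The addition table is computed by XOR, and the multiplication table by polynomial multiplication followed by reduction mod $g$. There are $q^2$ entries, each costing $\poly(t)$, for $\poly(q)$ in total.

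\textbf{Finding a generator.} For each nonzero $a$, I compute the powers $a,a^2,\dots,a^{q-1}$ using the multiplication table. I check whether the first power equal to $1$ is $a^{q-1}$, which costs $O(q)$ per candidate and $O(q^2)$ overall. A generator exists because the multiplicative group of a finite field is cyclic, again a standard fact.

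\textbf{Elements of a given order.} For the footnote's consequence, once a generator $\gamma$ is known, an element of order $k$ exists if and only if $k\mid q-1$. In that case $\gamma^{(q-1)/k}$ has order $k$. For example, $\omega$ exists whenever $3\mid 2^t-1$, which holds for even $t$.
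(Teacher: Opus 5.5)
Your proof is correct, but it takes a different route from the paper, which gives no argument and just cites Shoup's Chapter~20. The methods there are the \emph{efficient} ones. One samples random monic degree-$t$ polynomials and applies a fast irreducibility test (via $X^{2^i} \bmod g$ and gcd computations); this succeeds after an expected $O(t)$ samples, so it runs in expected time polynomial in $t=\log q$. A primitive element is then typically found by sampling and checking its order against the prime factors of $q-1$.

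You instead observe that the allowed budget is $\poly(q)$, not $\poly(\log q)$. That budget is in any case forced by the footnote's requirement to output truth tables with $q^2$ entries. So exhaustive search with trial division and brute-force order computation suffices, and the only nontrivial inputs are the existence of a degree-$t$ irreducible polynomial over $\F_2$ and the cyclicity of $\F_q^\times$.

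The cited route buys an exponentially faster running time, which is irrelevant here since $q=\poly\log n$. Your route buys full elementarity and determinism, and determinism is genuinely useful in this paper. Taking, say, the lexicographically first irreducible $g$ and the first primitive element $\gamma$ (with $\omega,\zeta$ its appropriate powers) yields a canonical representation $\rho$ and canonical constants. The honest prover and the verifier then automatically agree on these. A randomized construction would instead have to be fixed in advance or derandomized for that purpose.
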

\noindent
\begin{fact}[Low-Degree Extension]\label{fact:low-deg-extension}
	Let $H \subseteq \F_{q}$ be an arbitrary subset and $m \in \mathbb{N}$ denote the number of variables. For every function $f: H^{m} \to \F_{q}$, we denote by $\LDE(f) \in \F_{q}[X_{1},\ldots,X_{m}]$, the \emph{low-degree extension} of $f$
	which is the unique polynomial satisfying:
	\begin{itemize}
		\item Individual degree\footnote{Individual degree is defined to be the maximum degree of any variable appearing in the polynomial. Hence, the total degree of $\LDE(f)$ is at most $(|H|-1)\cdot m$.} of~$\LDE(f)$ is at most $|H|-1$.
		\item For every $\mathbf{a} \in H^{m}$, $\LDE(f)(\mathbf{a}) = f(\mathbf{a})$.
	\end{itemize}
	Furthermore, $\LDE(f)$ can be constructed in time $\mathrm{poly}(|H|^{m}, \log q)$ via solving a system of linear equations (coefficients of $\LDE(f)$ are the variables and evaluations of $f$ on $H^{m}$ are the constraints).
\end{fact}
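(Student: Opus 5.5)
The statement to prove is \Cref{fact:low-deg-extension}: existence, uniqueness, and efficient constructibility of $\LDE(f)$. The plan is to write down the extension explicitly via Lagrange interpolation, then get uniqueness from a dimension count together with the Schwartz--Zippel-type vanishing lemma for polynomials of bounded individual degree.

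\textbf{Existence.} For each $h \in H$ define the univariate Lagrange basis polynomial
\[
L_h(X) = \prod_{h' \in H \setminus \{h\}} \frac{X - h'}{h - h'},
\]
which has degree $|H|-1$, satisfies $L_h(h)=1$, and satisfies $L_h(h')=0$ for $h' \in H \setminus \{h\}$. Then set
\[
\LDE(f)(X_1,\ldots,X_m) = \sum_{\mathbf{a} \in H^m} f(\mathbf{a}) \prod_{i=1}^m L_{a_i}(X_i).
\]
Each product term has individual degree at most $|H|-1$. At a point $\mathbf{b} \in H^m$, every term vanishes except the one with $\mathbf{a}=\mathbf{b}$, which contributes exactly $f(\mathbf{b})$. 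So this polynomial has both required properties.

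\textbf{Uniqueness.} Suppose two such polynomials exist and let $D$ be their difference. Then $D$ has individual degree at most $|H|-1$ and vanishes on all of $H^m$; we show by induction on $m$ that $D=0$.
\begin{itemize}
\item For $m=1$: a nonzero univariate polynomial of degree at most $|H|-1$ has at most $|H|-1$ roots, so it cannot vanish on all $|H|$ points of $H$.
\item Inductive step: write $D = \sum_{j<|H|} D_j(X_1,\ldots,X_{m-1})\, X_m^j$. Fix any $\mathbf{a}' \in H^{m-1}$. The univariate polynomial $D(\mathbf{a}', X_m)$ has degree at most $|H|-1$ and vanishes on $H$, so by the base case all its coefficients $D_j(\mathbf{a}')$ are zero. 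Hence each $D_j$ vanishes on $H^{m-1}$, and by induction each $D_j = 0$.
\end{itemize}
Equivalently, evaluation on $H^m$ is a linear map from a space of dimension $|H|^m$ (monomials with individual degree at most $|H|-1$) to $\F_q^{H^m}$. Existence shows the map is surjective, so it is a bijection.

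\textbf{Running time.} The bijection above means the linear system described in the statement is square, of size $|H|^m$, and invertible. Gaussian elimination therefore solves it in $\poly(|H|^m)$ field operations, each costing $\poly(\log q)$. Alternatively, one can expand the explicit Lagrange formula directly within the same bound.

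None of the steps is a real obstacle. The only point needing care is the inductive uniqueness argument, and that is routine.
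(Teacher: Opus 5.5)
Your proof is correct and complete: Lagrange interpolation gives existence, the induction on $m$ (or equivalently the dimension count) gives uniqueness, and invertibility of the square $|H|^m \times |H|^m$ system gives the $\mathrm{poly}(|H|^m,\log q)$ running time. The paper states this as a standard fact without proof, mentioning only the linear-system construction; that is the same route you take for the algorithmic part, and your existence and uniqueness arguments correctly fill in the details it omits.
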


We start by recalling polynomial distance lemmas, which are the crucial reason why low-degree polynomials are so useful in constructing PCPs, and they will be the backbone of our analysis.\\

\begin{theorem}[Polynomial Distance Lemmas,~{\cite{ore1922hohere, DL78, Schwartz80, Zippel79}}]\label{thm:odlsz}
	Fix a field $\F_{q}$, degree parameter $d \in \mathbb{N}$, and a non-zero polynomial $P \in \F_{q}[X_{1},\ldots,X_{m}]$ of degree at most $d$.
	\begin{itemize}
		\item If $d\le q$, we have \begin{align*}
			      \Pr_{\mathbf{a} \sim \F_{q}^{m}}[P(\mathbf{a}) \neq 0] \; \geq \; 1 - \dfrac{d}{q}.
		      \end{align*}
		\item  If $q=2$ and $P$ is multilinear, we have
		      \begin{align*}
			      \Pr_{\mathbf{a} \sim \F_{2}^{m}}[P(\mathbf{a}) \neq 0] \; \geq \; \dfrac{1}{2^{d}}.
		      \end{align*}
	\end{itemize}
\end{theorem}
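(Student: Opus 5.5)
Both parts go by induction on the number of variables $m$. In each case we single out the last variable $X_m$, view $P$ as a polynomial in $X_m$ whose coefficients are polynomials in $X_1,\ldots,X_{m-1}$, and combine the induction hypothesis on the leading coefficient with a one-variable root count. The base case $m=0$ is immediate in both parts: $P$ is a nonzero constant, so it is nonzero with probability $1$.

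\emph{First item.} Write
\[
P(X_1,\ldots,X_m)=\sum_{i=0}^{k} P_i(X_1,\ldots,X_{m-1})\,X_m^{i},
\]
where $k$ is the largest index with $P_k\neq 0$. Since $\deg P\le d$, we have $\deg P_k\le d-k$.
\begin{itemize}
\item Draw $\mathbf{a}=(\mathbf{a}',a_m)$ uniformly from $\F_q^{m-1}\times\F_q$.
\item By the induction hypothesis, $\Pr_{\mathbf{a}'}[P_k(\mathbf{a}')=0]\le (d-k)/q$.
\item Condition on $P_k(\mathbf{a}')\neq 0$. Then $P(\mathbf{a}',X_m)$ is a nonzero univariate polynomial of degree $k$, so it has at most $k$ roots, and $\Pr_{a_m}[P(\mathbf{a})=0]\le k/q$.
\end{itemize}
A union bound over these two failure events gives $\Pr[P(\mathbf{a})=0]\le (d-k)/q+k/q=d/q$. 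The hypothesis $d\le q$ only ensures the bound is meaningful; the argument does not otherwise use it.

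\emph{Second item.} Multilinearity lets us write $P=X_m\,A+B$, where $A$ and $B$ are multilinear in $X_1,\ldots,X_{m-1}$.
\begin{itemize}
\item If $A=0$, then $P=B$ is a nonzero multilinear polynomial in $m-1$ variables of degree at most $d$. The induction hypothesis gives probability at least $2^{-d}$ directly.
\item If $A\neq 0$, then $\deg A\le d-1$, so by induction $\Pr_{\mathbf{a}'}[A(\mathbf{a}')\neq 0]\ge 2^{-(d-1)}$.
\item For any $\mathbf{a}'$ with $A(\mathbf{a}')\neq0$, the map $a_m\mapsto A(\mathbf{a}')a_m+B(\mathbf{a}')$ is a nonconstant affine function on $\F_2$. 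It is therefore nonzero at exactly one of the two choices of $a_m$.
\end{itemize}
Hence in the second case $\Pr[P(\mathbf{a})\neq 0]\ge \tfrac12\cdot 2^{-(d-1)}=2^{-d}$.

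There is no real obstacle here. The only point needing care is the bookkeeping of degrees, namely that the leading coefficient has degree at most $d-k$ (respectively $d-1$), so that the induction applies with the right parameter. In the binary case one must use multilinearity to restrict to degree at most $1$ in $X_m$. This restriction is essential, since for example $X^2+X$ is a nonzero polynomial that vanishes on all of $\F_2$.
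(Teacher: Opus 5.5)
Your proof is correct. Both parts use the standard induction on the number of variables: the leading coefficient in $X_m$ plus a univariate root count, and over $\F_2$ the split $P = X_m A + B$, with the degree bookkeeping handled properly. The paper gives no proof of this statement and simply cites Ore, DeMillo--Lipton, Schwartz and Zippel; your argument is essentially the classical one from those sources.
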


\noindent
An immediate and useful corollary of \Cref{thm:odlsz} is the following: If two degree \(\leq d\) polynomials $P$ and $Q$ agree on strictly more than $d/q$-fraction of $\F_{q}^{m}$, then $P = Q$.

\paragraph*{Lines Table}For every $m \in \mathbb{N}$, field $\F_{q}$, and points ${\bf a},{\bf b}\in\F_q^m$,
let $\ell_{{\bf a},{\bf b}}:\F_q \to \F_q^m$ (read as ``line passing through ${\bf a}$ with slope ${\bf b}$'') be defined as $\ell_{{\bf a},{\bf b}}(\lambda ):={\bf a}+\lambda {\bf b}$.\\

\begin{definition}[Lines Table]\label{defn:lines-table}
	Fix a field $\F_{q}$.
	Let $d \in \N$ be the degree parameter and $m \in \N$ be the number of variables.
	For every degree \(\leq d\) polynomial $f: \F_{q}^{m} \to \F_{q}$,
	we define the lines table of $f$ as the function
	$f_{\mathrm{lines}} \; : \; \F_q^{2m} \; \longrightarrow \; \F_{q}^{d+1}$ that maps an input
	$({\bf a},{\bf b})\in \F_q^{2m}$
	to
	$f(\ell_{{\bf a},{\bf b}}(X))$, where $X$ is an indeterminate.
	We note that
	$f(\ell_{{\bf a},{\bf b}}(X))$
	is indeed a univariate polynomial of degree at most $d$ in the indeterminate $X$ and can be specified by the $d+1$ coefficients of $X^0,X^1,\ldots, X^d$.
\end{definition}

Throughout the paper, whenever we have a function $F$ that has range in $\F_{q}^{d+1}$, we will interpret the output as a univariate polynomial of degree at most $d$.

\paragraph*{Local characterization of low-degree polynomials}Another reason why low-degree polynomials are useful in PCPs is the fact that the family of low-degree polynomials is the same as the family of functions that are low-degree on every line. We state it formally in the following fact. The fact can be proved using a closely related characterization in \cite[Theorem 11]{Friedl-Sudan} and univariate polynomial interpolation.\\

\begin{fact}[Characterization of Low-Degree Polynomials]\label{fact:characterize-low-deg}
	Fix a degree parameter $c \in \mathbb{N}$ and fix a field $\F_{q}$ with $q \geq c+2$. For every set of $\{\zeta_0=0,\zeta_1,\zeta_2,\dots, \zeta_{c+1}\}$ pairwise distinct elements from $\F_{q}$, there exists non-zero coefficients $\eta_0, \eta_1, \eta_2, \cdots, \eta_{c+1}\in \F_q^\times$ such that the following holds.\newline
	A function $P: \F_{q}^{m} \to \F_{q}$ is a degree-$c$ polynomial if and only if
	\begin{align*}
		\sum_{i=0}^{c+1} \; \eta_{i} \cdot P(\mathbf{u} + \zeta_{i} \cdot \mathbf{v}) \; = \; 0, &  & \text{for every } \mathbf{u},\mathbf{v} \in \F_{q}^{m},
	\end{align*}
	where for scalar $\alpha \in \F_{q}$, the point $\alpha \cdot \mathbf{y}$ is defined as $(\alpha y_{1},\ldots, \alpha y_{m})$.\newline
	Furthermore, if $q$ is a power of $2$, $\zeta\in \F_q^\times$ is of order $c+1$ (assuming $c+1$ divides $q-1$), and $\zeta_i:=\zeta^i$ for $i\in [c+1]$, then we have $\eta_{i} = 1$ for all $i\in [0,c+1]$.
\end{fact}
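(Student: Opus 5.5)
We would prove the two directions by restricting to lines and using univariate Lagrange interpolation, then compute the coefficients explicitly in the ``furthermore'' case.

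\emph{Choice of coefficients.} Set
\[
\eta_i := \prod_{j\neq i}(\zeta_i-\zeta_j)^{-1}, \qquad i\in[0,c+1].
\]
These are well defined and nonzero because the $\zeta_i$ are pairwise distinct. By Lagrange interpolation on the $c+2$ nodes $\zeta_0,\dots,\zeta_{c+1}$, every univariate $g$ satisfies the following: $\sum_i \eta_i\, g(\zeta_i)$ equals the coefficient of $X^{c+1}$ in the unique interpolant of degree at most $c+1$ through $(\zeta_i,g(\zeta_i))_i$. Equivalently, $\sum_i\eta_i\zeta_i^j=0$ for $0\le j\le c$ and $\sum_i\eta_i\zeta_i^{c+1}=1$.

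\emph{Forward direction.} Let $P$ have degree at most $c$ and fix $\mathbf{u},\mathbf{v}$. Then $g(\lambda):=P(\mathbf{u}+\lambda\mathbf{v})$ is a univariate polynomial of degree at most $c$, so it is its own interpolant and its $X^{c+1}$-coefficient is $0$. Hence $\sum_i\eta_i P(\mathbf{u}+\zeta_i\mathbf{v})=0$.

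\emph{Reverse direction.} This is the main obstacle. Suppose the identity holds for all $\mathbf{u},\mathbf{v}$. Fix a line $\mathbf{u}+\lambda\mathbf{v}$ and replace $(\mathbf{u},\mathbf{v})$ by $(\mathbf{u}+\alpha\mathbf{v},\beta\mathbf{v})$. With $g(\lambda)=P(\mathbf{u}+\lambda\mathbf{v})$ this gives
\[
\sum_i\eta_i\, g(\alpha+\beta\zeta_i)=0 \qquad\text{for all }\alpha,\beta\in\F_q .
\]
So $g$ passes the same $(c+2)$-point test on every affine image of the node set. Write $g=\sum_{k<q}a_kX^k$ and expand:
\[
\sum_i\eta_i(\alpha+\beta\zeta_i)^k=\sum_j\binom{k}{j}\alpha^{k-j}\beta^j S_j, \qquad S_j:=\sum_i\eta_i\zeta_i^j .
\]
We want to conclude that $a_k=0$ for every $k>c$. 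The difficulty is that in characteristic $2$ Lucas' theorem kills many binomial coefficients, so reading off monomial coefficients does not obviously work.

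This is exactly what the characterization of \cite[Theorem 11]{Friedl-Sudan} handles when $q\ge c+2$. We would invoke it in two steps. First, it shows that $g$ satisfying the test on all affine images of the nodes has degree at most $c$; equivalently, $P$ restricted to every line has degree at most $c$. Second, it gives that a function whose restriction to every line has degree at most $c$ is itself a polynomial of degree at most $c$. If a self-contained argument were wanted, we would first try to show that the span of the shifted and scaled test functionals is the annihilator of degree-$\le c$ univariate polynomials, using $S_{c+1}=1$.

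\emph{Furthermore part.} Take $\zeta_0=0$ and $\zeta_i=\zeta^i$, which are the $c+1$ roots of $X^{c+1}-1$. We compute $\eta_i^{-1}=\prod_{j\ne i}(\zeta_i-\zeta_j)$.
\begin{itemize}
\item For $i\ge1$: the factor $j=0$ contributes $\zeta_i$. The remaining factors give the derivative of $X^{c+1}-1$ at $\zeta_i$, namely $(c+1)\zeta_i^{c}$. So $\eta_i^{-1}=(c+1)\zeta_i^{c+1}=c+1$.
\item For $i=0$: $\eta_0^{-1}=\prod_j(-\zeta_j)=\pm 1$, since the product of all roots of $X^{c+1}-1$ is $\pm1$.
\end{itemize}
Since $c+1$ divides $q-1$, it is odd, and in characteristic $2$ signs disappear. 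Hence $\eta_i=1$ for all $i\in[0,c+1]$.
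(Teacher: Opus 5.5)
Your forward direction is correct, and your computation of the weights in the ``furthermore'' case is a valid alternative to the paper's. The paper shows that the all-ones functional annihilates degree-$c$ polynomials on the nodes, by checking $\sum_{i=0}^{c+1}\zeta_i^{j}=0$ for $0\le j\le c$ (via $c+2=0$ and geometric sums). You instead compute the Lagrange weights directly from $\prod_{j\neq i}(\zeta_i-\zeta_j)=\zeta_i\cdot(c+1)\zeta_i^{c}$. Your route identifies the normalized functional exactly, instead of relying implicitly on the annihilator being one-dimensional. For the converse you, like the paper, defer to Friedl--Sudan.

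The genuine gap is that converse, and no citation can fill it. Your first step (passing the test on all affine images of the nodes forces $\deg g\le c$) is false in the setting of the ``furthermore'' clause, for exactly the Lucas-theorem reason you flagged. Take $s$ with $c<2^{s}<q$, which is possible whenever $q>2c$ (e.g.\ for $q>Cc^{3}$), and let $g(x)=x^{2^{s}}$. Since $g$ is additive,
\[
\sum_{i=0}^{c+1} g(\alpha+\beta\zeta_i)=(c+2)\,\alpha^{2^{s}}+\beta^{2^{s}}\sum_{i=1}^{c+1}\bigl(\zeta^{2^{s}}\bigr)^{i}=0,
\]
because $c$ is even and $\zeta^{2^{s}}\neq 1$ again has order $c+1$. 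So $P(\mathbf{x})=x_1^{2^{s}}$ satisfies the identity for every $\mathbf{u},\mathbf{v}$. It is nevertheless not a degree-$c$ polynomial, being reduced of degree $2^{s}$ with $c<2^{s}<q$. The annihilator of degree-$c$ polynomials on $c+2$ points is one-dimensional, so no other choice of $\eta$ helps, and the ``if'' direction of the Fact is false as stated. The paper's proof has the same hole, since it too only gestures at Friedl--Sudan; indeed $x_1^{2^{s}}$ passes the test of \Cref{lemma:syntactic-test-degree-reduction} with probability one.

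Your second step is also misattributed. A Friedl--Sudan-type line characterization needs a hypothesis like $c<q-q/p$, not $q\ge c+2$: over $\F_4$ with $c=2$, $x_1^2x_2^2$ has degree at most $2$ on every line.

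What survives is the forward direction together with $\eta_i=1$, which is all that the completeness arguments and \Cref{fact:local-correction-via-lines} need. Your univariate step does go through over prime fields $\F_p$ with $p\ge c+2$: there the coefficient $\binom{k}{c+1}a_k$ of $\alpha^{k-c-1}\beta^{c+1}$ forces $a_k=0$ for $c<k<p$.
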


\begin{proof}[Proof of `Furthermore' part in \Cref{fact:characterize-low-deg}] 
	We will show that for every degree-$c$ univariate polynomial $P \in \mathcal{P}_{c}(1, \F_{q})$, we have,
	\begin{equation}\label{eqn:eta-all-simplified}
		\sum_{i=0}^{c+1} P(\zeta_{i}) \; = \; 0.
	\end{equation}
	Using \Cref{eqn:eta-all-simplified} and following the proof of \cite[Theorem 11]{Friedl-Sudan} gives us the `furthermore' remark. To prove \Cref{eqn:eta-all-simplified}, due to linearity, it suffices to show that for every $0 \leq j \leq c$, we have $\sum_{i=0}^{c+1} \zeta_{i}^{j} = 0$.\\
	Since $\F_{q}$ has characteristic $2$ and $c$ must be even (since $c+1$ divides $q-1$ which is odd), we have \begin{equation}\label{eqn:eta-eqn-1}
		\sum_{i=0}^{c+1} \eta_i = c+2 = 0.
	\end{equation}

	Now fix an arbitrary $1 \leq j \leq c$. Since $\zeta$ has order greater than $c$ in $\F_q^\times$, we have
	\begin{equation}\label{eqn:eta-eqn-2}
		\sum_{i=0}^{c+1} \eta_i\cdot \zeta_i^j = 0+\sum_{i=1}^{c+1} \zeta^{ij} = \sum_{i=1}^{c+1} (\zeta^j)^i = (\zeta^j-1)^{-1}\cdot ((\zeta^j)^{c+2}-1) - 1=0,
	\end{equation}
	where we are using the fact that $(\zeta^j)^{c+2} = \zeta^j \cdot \zeta^{j(c+1)} = \zeta^j$. Thus, using \Cref{eqn:eta-eqn-1} and \Cref{eqn:eta-eqn-2}, we get \Cref{eqn:eta-all-simplified} by linearity.
\end{proof}

\noindent
A simple corollary of the above characterization is that we can evaluate a low-degree polynomial at any given point using its evaluations on a line passing through that point. We record it formally below.\\

\begin{fact}\label{fact:local-correction-via-lines}
	Fix a degree-parameter $c \in \mathbb{N}$ and fix a field $\F_{q}$ of characteristic $2$ such that an element $\zeta\in \F_q^\times$ of order $c+1$ exists.
	For every $m$-variate degree-$c$ polynomial $P \in \F_{q}[X_{1}, \ldots, X_{m}]^{\leq c}$, we have,
	\begin{align*}
		P(\mathbf{u}) \, = \, \sum_{i=1}^{c+1} P(\mathbf{u} + \zeta^{i} \cdot \mathbf{v}) \quad \quad \text{for every } \mathbf{u},\mathbf{v} \in \F_{q}^{m}.
	\end{align*}
\end{fact}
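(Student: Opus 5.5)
This follows directly from the ``furthermore'' part of \Cref{fact:characterize-low-deg}. First I check that its hypotheses hold. The field $\F_q$ has characteristic $2$, so $q$ is a power of $2$. The element $\zeta$ has order $c+1$, so $c+1$ divides $q-1$. Take $\zeta_0=0$ and $\zeta_i=\zeta^i$ for $i\in[c+1]$. These are pairwise distinct: the powers $\zeta,\zeta^2,\dots,\zeta^{c+1}=1$ are distinct because $\zeta$ has order exactly $c+1$, and none of them is $0$. This gives $c+2$ distinct elements, which forces $q\ge c+2$.

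The fact then provides coefficients $\eta_i=1$ for all $i\in[0,c+1]$. For every degree-$c$ polynomial $P$ and all $\mathbf{u},\mathbf{v}\in\F_q^m$ it yields
\begin{align*}
P(\mathbf{u}+0\cdot\mathbf{v})+\sum_{i=1}^{c+1}P(\mathbf{u}+\zeta^i\cdot\mathbf{v}) \;=\; 0 .
\end{align*}
Since $-1=1$ in characteristic $2$, moving the first term to the right-hand side gives $P(\mathbf{u})=\sum_{i=1}^{c+1}P(\mathbf{u}+\zeta^i\mathbf{v})$, which is the claim.

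For a self-contained argument that avoids the characterization, I would restrict to the line $\lambda\mapsto\mathbf{u}+\lambda\mathbf{v}$. The function $g(\lambda)=P(\mathbf{u}+\lambda\mathbf{v})$ is a univariate polynomial of degree at most $c$, so the claim reduces to showing $\sum_{i=0}^{c+1}g(\zeta_i)=0$. By linearity it suffices to check this for each monomial $\lambda^j$ with $0\le j\le c$:
\begin{itemize}
\item For $j=0$ the sum counts $c+2$ ones. Since $c+1$ divides the odd number $q-1$, $c$ is even, so $c+2$ is even and the sum is $0$.
\item For $1\le j\le c$ the $\zeta_0=0$ term vanishes, leaving $\sum_{i=1}^{c+1}(\zeta^j)^i$. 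Here $\zeta^j\ne1$ and $(\zeta^j)^{c+1}=1$, so this geometric sum over a full period is $0$.
\end{itemize}
This is exactly the computation already carried out in the proof of the furthermore part above. No step is a real obstacle; the only points needing care are the sign (handled by characteristic $2$) and the parity of $c+2$.
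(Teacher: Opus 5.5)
Your proof is correct and matches the paper's, which presents this fact as an immediate corollary of the ``furthermore'' part of \Cref{fact:characterize-low-deg} (all $\eta_i = 1$, $\zeta_0 = 0$), moving the $i=0$ term to the other side using characteristic $2$. Your self-contained variant is exactly the computation in the paper's proof of that ``furthermore'' part: you restrict to the line $\lambda \mapsto \mathbf{u}+\lambda\mathbf{v}$ and check each monomial $\lambda^j$. Since this uses only the easy direction of the characterization, it also avoids any appeal to the Friedl--Sudan result.
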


\paragraph*{Vanishing of Low-Degree Polynomials on a Subset} For a subset $H\subseteq \F_q$ and for every $i \in [m]$,
we will define the polynomial $Z^{(i)}_{H}(\mathbf{X})$ as the vanishing polynomial on $H$ for the $i^{th}$ variable $X_{i}$, i.e.,
\begin{align*}
	Z^{(i)}_{H}(\mathbf{X}) \; := \; \prod_{\lambda \in H} (X_{i}-\lambda).
\end{align*}
We will use $Z_{H}(X_{1},\ldots,X_{m})$ to denote the tuple of polynomials $(Z^{(1)}_{H}, \ldots, Z^{(m)}_{H})$. The next lemma gives an equivalent condition for when a polynomial vanishes on the subset $H^{m} \subseteq \F_{q}^{m}$.\\
\begin{lemma}[Combinatorial Nullstellensatz]\label{lemma:comb-null}
	Fix any polynomial $P \in \F_{q}[\mathbf{X}]$. The polynomial $P$ vanishes on $H^{m}$, i.e. $P|_{H^{m}} \equiv 0$, if and only if there exist polynomials $Q_{1}(\mathbf{X}), \ldots, Q_{m}(\mathbf{X})$ with $\deg(Q_{1}),\ldots,\deg(Q_{m}) \leq \deg(P)-|H|$ such that
	\begin{align*}
		P(\mathbf{X}) \; = \; \sum_{i=1}^{m} \; Q_{i}(\mathbf{X}) \cdot Z^{(i)}_{H}(\mathbf{X}).
	\end{align*}
\end{lemma}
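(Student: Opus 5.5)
The two directions are handled separately, and the main tool is division by the monic univariate polynomials $Z^{(i)}_{H}$.

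\emph{The ``if'' direction.} Suppose $P=\sum_{i} Q_i Z^{(i)}_H$. For every $\mathbf{a}\in H^m$ and every $i\in[m]$ we have $a_i\in H$, so $Z^{(i)}_H(\mathbf{a})=\prod_{\lambda\in H}(a_i-\lambda)=0$. Every summand vanishes at $\mathbf{a}$, hence $P(\mathbf{a})=0$.

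\emph{The ``only if'' direction: division.} Each $Z^{(i)}_H$ is monic of degree $|H|$ in $X_i$. Its leading monomial is $X_i^{|H|}$, and every other term has total degree strictly less than $|H|$.

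I reduce $P$ one monomial at a time. Whenever a monomial $c\,\mathbf{X}^{\mathbf{e}}$ of the current remainder has some exponent $e_i\ge |H|$, I pick such an $i$. I then add $c\,\mathbf{X}^{\mathbf{e}}/X_i^{|H|}$ to $Q_i$ and subtract $c\,(\mathbf{X}^{\mathbf{e}}/X_i^{|H|})\,Z^{(i)}_H$ from the remainder.

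This step removes $\mathbf{X}^{\mathbf{e}}$ and introduces only monomials that are strictly smaller in the graded order (total degree first, then a fixed tie-break). Hence the process terminates. The output is
\[
P=\sum_i Q_i Z^{(i)}_H + R,
\]
where $R$ has individual degree at most $|H|-1$ in every variable.

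\emph{The degree bound.} Each quotient term added to $Q_i$ has total degree $|\mathbf{e}|-|H|$, and $|\mathbf{e}|\le \deg(P)$, because the total degree of the remainder never increases. Therefore $\deg(Q_i)\le \deg(P)-|H|$. This bookkeeping, that no term of degree above $\deg P$ is ever created, is the one step that needs care. It holds because $Z^{(i)}_H$ has no terms of total degree exceeding its leading term.

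\emph{Showing $R=0$.} On $H^m$ we have $R=P-\sum_i Q_iZ^{(i)}_H$, and both terms on the right vanish there, so $R$ vanishes on $H^m$. Since $R$ has individual degree at most $|H|-1$, the uniqueness part of Fact~\ref{fact:low-deg-extension} shows that $R$ equals $\LDE$ of the zero function, which is $0$.

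If one prefers not to rely on that fact, a direct induction on $m$ works. Write $R=\sum_{j<|H|} R_j(X_1,\dots,X_{m-1})\,X_m^j$. Fix $(a_1,\dots,a_{m-1})\in H^{m-1}$. The resulting univariate polynomial in $X_m$ has degree less than $|H|$ and has $|H|$ roots, so it is identically zero. Hence each $R_j$ vanishes on $H^{m-1}$, and $R_j=0$ by induction.

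Finally, in the degenerate case $\deg P<|H|$ the division step is never applied. The argument above then gives $P=R=0$, so the representation holds with all $Q_i=0$.
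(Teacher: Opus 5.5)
Your proof is correct and is the standard argument: the paper gives no proof of this lemma (it is quoted as the Combinatorial Nullstellensatz), but your reduction of $P$ modulo the monic $Z^{(i)}_H$ with the degree bookkeeping, followed by the argument that the remainder (of individual degree $<|H|$, vanishing on $H^m$) is zero, is exactly Alon's proof. It also matches how the paper's honest prover computes the $Q_i$ via \texttt{MultiDivide} in Algorithm~\ref{algo:compl-prover}, which divides successively by $Z_H(X_1),\dots,Z_H(X_m)$ rather than monomial by monomial.
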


\noindent
The above lemma can be interpreted as giving a certificate for when a low-degree polynomial $P$ vanishes. In particular, the certificate consists of low-degree polynomials $Q_{1}, \ldots, Q_{m}$. Similar to the approach in \cite{ABSSW-PCP-one-composition}, we will define a ``vanishing certificate polynomial'' consisting of the $Q_{i}$'s and use that as a certificate of $P|_{H^{m}} \equiv 0$. Essentially, we will consider the polynomial we get after replacing $Z_{H}^{(i)}(\mathbf{x})$ by a new variable $y_{i}$.

\begin{lemmabox}
	\begin{corollary}[Vanishing Certificate of a Polynomial] \label{coro:meta-vanishing-poly}
		Fix any polynomial $P \in \F_{q}[\mathbf{X}]$. The polynomial $P$ vanishes on $H^{m}$, if and only if there exists a \textit{vanishing certificate polynomial} $\mathcal{M}_{P} \in \F_{q}[\mathbf{X}, Y_{1},\ldots, Y_{m}]$ satisfying the following properties:
		\begin{enumerate}
			\item Degree of $\mathcal{M}_{P}$ is at most $\deg(P)$.
			\item The polynomial $\mathcal{M}_{P}(\mathbf{X}, \mathbf{0})$ is identically the zero polynomial. In other words, $\mathcal{M}_{P} \in \mathbb{I}(\mathbf{Y})$.
			\item The following identity holds:
			      \begin{align*}
				      P(\mathbf{X}) \; = \; \mathcal{M}_{P}\paren{\mathbf{X}, Z_{H}^{(1)}(\mathbf{X}), \ldots, Z_{H}^{(m)}(\mathbf{X})}.
			      \end{align*}
		\end{enumerate}
	\end{corollary}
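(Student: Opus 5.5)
The plan is to derive both directions directly from \Cref{lemma:comb-null}, taking the vanishing certificate polynomial to be the ``formal'' version of the Nullstellensatz decomposition.

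\emph{Forward direction.} Suppose $P|_{H^{m}} \equiv 0$. By \Cref{lemma:comb-null} there are polynomials $Q_1,\ldots,Q_m$ with $\deg(Q_i) \le \deg(P)-|H|$ and $P = \sum_i Q_i \cdot Z^{(i)}_{H}$. Define
\begin{align*}
	\mathcal{M}_{P}(\mathbf{X},Y_1,\ldots,Y_m) \; := \; \sum_{i=1}^{m} Q_i(\mathbf{X})\cdot Y_i .
\end{align*}
The three properties then follow in turn. Setting $\mathbf{Y}=\mathbf{0}$ kills every term, so $\mathcal{M}_P \in \mathbb{I}(\mathbf{Y})$. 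Substituting $Y_i = Z^{(i)}_H(\mathbf{X})$ recovers the Nullstellensatz identity. For the degree, each term has degree $\deg(Q_i)+1 \le \deg(P)-|H|+1 \le \deg(P)$ whenever $|H|\ge 1$. If $H$ is empty, the vanishing condition is vacuous and $Z^{(i)}_H = 1$. In that case we can simply take $\mathcal{M}_P := P(\mathbf{X})\cdot Y_1$, which has degree $\deg(P)+1$. So this degenerate case should be excluded by assuming $H\neq\emptyset$, as is implicit throughout the paper.

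\emph{Reverse direction.} Suppose such an $\mathcal{M}_P$ exists. Evaluate the identity in property 3 at any $\mathbf{a}\in H^m$. Every $Z^{(i)}_H(\mathbf{a})=0$, since $a_i\in H$ is a root of $\prod_{\lambda\in H}(X_i-\lambda)$. Hence
\begin{align*}
	P(\mathbf{a}) \; = \; \mathcal{M}_P(\mathbf{a},\mathbf{0}) \; = \; 0
\end{align*}
by property 2. This direction uses neither the degree bound nor the lemma, only the membership $\mathcal{M}_P\in\mathbb{I}(\mathbf{Y})$.

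The argument is essentially a repackaging of \Cref{lemma:comb-null}, and I expect no real obstacle. The only point needing care is the degree bookkeeping in property 1. The degree bound of \Cref{lemma:comb-null} on the $Q_i$, namely $\deg(P)-|H|$, is exactly what ensures that multiplying by a degree-one variable $Y_i$, rather than the degree-$|H|$ polynomial $Z^{(i)}_H$, keeps the total degree at most $\deg(P)$.
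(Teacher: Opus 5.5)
Your proof is correct and matches the paper's approach. The paper builds $\mathcal{M}_P$ by replacing each $Z_H^{(i)}(\mathbf{X})$ in the Combinatorial Nullstellensatz decomposition with a fresh variable $Y_i$, and its converse (used in the zero-on-grid analysis) rests on the same fact you use: $Z_H(\mathbf{a}) = \mathbf{0}$ on $H^m$ together with $\mathcal{M}_P \in \mathbb{I}(\mathbf{Y})$. Your remark that the degree bound needs $|H| \ge 1$ is right but harmless, since $|H| = h \ge 1$ throughout the paper.
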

\end{lemmabox}

\paragraph*{3-colorability.} We state the $3$-colorability language below, which is a $\mathsf{NP}$-complete problem. Note that the choice of $3$-coloring as an $\mathsf{NP}$-complete problem instead of one of many others is a choice for simplicity.\\

\begin{definition}\label{defn:3-color}
	The decision problem $3$-$\mathsf{COLOR}$ is the following problem:\newline
	Given a graph $G = (V,E)$ with $n$ vertices, decide whether there exists a proper coloring of $G$ using $3$ colors, i.e. for every edge $(u,v) \in E$, the vertices $u$ and $v$ are assigned different colors.
\end{definition}

\subsection{Low-Degree Test and Local Correction over Large Alphabet}
In this subsection, we discuss the standard point-vs-line test for low-degree testing from \cite{ALMSS}. The analysis of the test is a \emph{robust} version of \Cref{fact:characterize-low-deg}, i.e., if a function is low-degree on most of the lines in $\F_{q}^{m}$, then it is close to a global low-degree polynomial.

We start by recalling the test and state its properties in \Cref{thm:low-degree-testing}. Recall that for a function with output in $\F_{q}^{d+1}$, we interpret the output as a degree-$d$ univariate polynomial.

\begin{algobox}
	\begin{algorithm}[H]
		\caption{Low-Degree Test (with lines table): $\mathsf{TabLDT}$}
		\label{algo:low-degree-test}

		\DontPrintSemicolon
		\KwIn{Degree parameter $d$ and oracle access to $(f,f')$ where $f: \F_{q}^{m} \to \F_{q}$ and $f': \F_{q}^{2m} \to \F_{q}^{d+1}$ }

		\vspace{1mm}

		Sample $\mathbf{a}, \mathbf{b} \sim \F_{q}^{m}$, $\lambda \sim \F_{q}^{\times}$ \;

		\vspace{1mm}

		\lIf{$f'[({\bf a},{\bf b})](\lambda) \neq f[{{\bf a}+ \lambda{\bf b}}]$ }{
		\Return{\reject}
		}
		\Return{\accept}

	\end{algorithm}
\end{algobox}

\noindent
\begin{theorem}[Low-Degree Test via Lines Table,~{\cite[Theorem 45]{ALMSS}}]\label{thm:low-degree-testing}
	There exist absolute positive constants $\delta_{0}, C$
	such that for every	\(\delta < \delta_{0}\),
	for every $d,q \in \mathbb{N}$ with $q > C d^{3}$, the following holds over $\F_{q}$.
	\begin{enumerate}
		\item If $f \in \mathcal{P}_{d}(m, \F_{q})$, then
		      $\tabldt^{f,f_{\mathrm{lines}}}_{d}(; \mathbf{a}, \mathbf{b}, \lambda)$
		      returns \accept{} with probability $1$, over the randomness of $(\mathbf{a}, \mathbf{b}, \lambda)$.
		\item For every $f: \F_{q}^{m} \to \F_{q}$ and for every $f' : \F_{q}^{2m} \to \F_{q}^{d+1}$, we have
		      \begin{align*}
			      \Pr_{\mathbf{a},\mathbf{b},\lambda}[\tabldt^{f,f'}_{d}(; \mathbf{a}, \mathbf{b}, \lambda) \; \text{returns \reject}] \; \leq \; \delta \;
			      \implies \; \delta(f, \, \mathcal{P}_{d}(m,\F_{q})) \; \leq \; 4 \delta.
		      \end{align*}
	\end{enumerate}
	Furthermore, $\tabldt^{f,f'}_{d}(; \mathbf{a}, \mathbf{b}, \lambda)$ makes $2$ oracle queries, uses
	$\bigO(m \log q)$ bits of randomness, and runs in time $\poly(m,d,\log q)$.\\
\end{theorem}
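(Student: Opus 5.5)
The plan is to prove completeness directly and soundness via the classical two-stage argument (local plurality decoding, then a global low-degree consistency argument), following \cite{RubSud,ALMSS}.

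\textbf{Completeness.} If $f \in \mathcal{P}_{d}(m,\F_q)$ and $f' = f_{\mathrm{lines}}$, then by \Cref{defn:lines-table}, $f'[(\mathbf{a},\mathbf{b})]$ is the univariate polynomial $f(\ell_{\mathbf{a},\mathbf{b}}(X))$. Evaluating it at $\lambda$ gives exactly $f(\mathbf{a}+\lambda\mathbf{b})$, so the test always accepts. The two queries, the randomness $2m\log q + \log q$, and the running time (one evaluation of a degree-$d$ univariate polynomial) follow immediately.

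\textbf{Soundness.} Assume the rejection probability is at most $\delta$.

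First, define the plurality decoding
\[
g(\mathbf{x}) := \text{plurality}_{\mathbf{b}}\, f'[(\mathbf{x}-\lambda\mathbf{b},\mathbf{b})](\lambda),
\]
that is, the most common value the line polynomials through $\mathbf{x}$ predict for $\mathbf{x}$. Equivalently, reparametrize each line through $\mathbf{x}$ so that $\mathbf{x}$ is a point on it.

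Second, a Markov/averaging argument shows $\delta(f,g) \le 2\delta$. At any point where $f$ disagrees with the plurality, at least half of the lines through it reject. Since the pair $(\mathbf{a}+\lambda\mathbf{b}, \text{line})$ is distributed as a uniform point together with a uniform line through it, this yields the bound.

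Third, show that $g$ is self-consistent: for every $\mathbf{x}$, the line polynomials through $\mathbf{x}$ agree with $g(\mathbf{x})$ with probability $1-O(\delta)-O(d/q)$. The tool is the usual plane argument. For two random lines through $\mathbf{x}$, consider the plane they span. Most lines in that plane are consistent with $f$ at most points, so by the bivariate "rectangle lemma" (polynomials of degree $d$ on rows and columns that agree on a large grid), both line polynomials agree with a single bivariate polynomial on the plane. Hence they agree at $\mathbf{x}$. This uses $q > Cd^3$ to absorb the error terms.

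Fourth, show $g$ is exactly a degree-$d$ polynomial. For this, verify the characterization of \Cref{fact:characterize-low-deg} (with $c=d$) at every pair $\mathbf{u},\mathbf{v}$ by a union bound. Evaluate each $g(\mathbf{u}+\zeta_i\mathbf{v})$ through random lines that together lie in a common random degree-$d$ curve or plane containing $\mathbf{u}+\zeta_i\mathbf{v}$; with positive probability they all agree with one line polynomial, which satisfies the identity. Then $\delta(f,\mathcal{P}_d)\le 2\delta\le4\delta$.

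\textbf{Main obstacle.} The hardest step is the third (and consequently the fourth): the bivariate rectangle lemma, which is the robust version of the characterization. I would prove it by the standard route. Find a nonzero low-degree error-locator $E(x,y)$ with $E\cdot(\text{row poly}) = N$ on a subgrid. Show that $N$ is divisible by $E$ using the degree bounds. This is where the condition $q > Cd^3$ is consumed.

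Since the statement is cited as \cite[Theorem 45]{ALMSS}, in the paper I would ultimately just invoke that analysis and check that the parameters match.
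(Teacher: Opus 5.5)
The paper gives no proof of this theorem; it simply quotes it from \cite{ALMSS} (Theorem 45). Your completeness and complexity remarks are fine, and your closing fallback (invoke \cite{ALMSS} and check the parameters) is exactly what the paper does. The self-contained soundness sketch, however, does not work as written. It fails at two concrete places.

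First, steps three and four make pointwise claims about the given table $f'$ at a fixed point $\bx$. The test cannot control these entries: the entries whose tested points include a fixed $\bx$ carry total weight less than $q^{1-m}$ in the test's distribution. Here is a concrete failure. Take $m\ge 2$, $f=P\in\mathcal{P}_d(m,\F_q)$, a point $\bx_0$ and a value $w\neq P(\bx_0)$. Let $f'[(\ba,\bb)](X)=P(\ba+X\bb)$, except that you add the constant $w-P(\bx_0)$ on the fewer than $q^{m+1}$ entries with $\ba+\lambda_0\bb=\bx_0$ for some $\lambda_0\in\F_q^{\times}$.
\begin{itemize}
\item The test rejects with probability below $q^{1-m}$, and the theorem's conclusion holds trivially since $f=P$.
\item Your plurality $g$ equals $w$ at $\bx_0$ and $P$ everywhere else. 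By \Cref{thm:odlsz}, $g$ is then not a degree-$d$ polynomial, which refutes step four.
\item Now split those modified entries into two halves predicting two different values at $\bx_0$. The agreement with $g$ at $\bx_0$ drops to about $1/2$, which refutes the ``for every $\bx$'' claim of step three.
\end{itemize}
The error in step three's justification is that the two lines through $\bx$ are not among the lines the rectangle lemma controls on the plane they span.

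A natural repair is to first replace $f'$ by the canonical table. For each $(\ba,\bb)$, take the degree-$d$ polynomial agreeing with $\lambda\mapsto f(\ba+\lambda\bb)$ on the most $\lambda\in\F_q^{\times}$. This can only lower the rejection probability, and the conclusion concerns only $f$, so it is without loss of generality. After this, entries through $\bx$ depend only on $f$ at essentially uniform points, and your plane argument for step three can be made to work.

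Second, even with the canonical table, step four union-bounds over the $d+2$ points $\bu+\zeta_i\bv$. Each is bad with probability $O(\delta)+O(d/q)$, so the union bound yields a contradiction only when $(d+2)\cdot O(\delta)<1$. That proves soundness only for $\delta=O(1/d)$, a degree-dependent threshold of the same kind as in \Cref{lemma:syntactic-test-degree-reduction}. It does not give an absolute constant $\delta_0$.

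The paper genuinely needs an absolute $\delta_0$. It applies this theorem with the growing degree $d'=4cd$ and rejection probability $\mathcal{O}_c(\delta)$. Reaching a constant $\delta_0$ requires an argument that does not pay a union bound over the $d+2$ points of a line. That is where the constant-error analysis in \cite{ALMSS} does real work, and your sketch does not supply it.
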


\begin{remark}
	For low-degree testing, there has been a long line of work on achieving better parameters in terms of field size and soundness guarantee. We refer the interested reader to \cite[Section 1]{HKSS-LDT} for a detailed overview of the results of low-degree testing and also for the state-of-the-art parameters (see \cite[Theorem 1.2]{HKSS-LDT}). We use the low-degree testing from \cite{ALMSS} because the algorithm and analysis are stated in the language of the lines table.
\end{remark}

\paragraph*{}Now we discuss the local correction algorithm for degree $d$ polynomials over $\F_{q}^{m}$ from \cite{ALMSS}. We first describe the local corrector and then analyze it in \Cref{thm:local-correction}. Recall that for a function with output in $\F_{q}^{d+1}$, we interpret the output as a degree-$d$ univariate polynomial.

\begin{algobox}
	\begin{algorithm}[H]
		\caption{Local Corrector (with lines table): $\tabcorr$}
		\label{algo:local-correction}

		\DontPrintSemicolon
		\KwIn{Degree parameter $d$, evaluation point $\mathbf{a} \in \F_{q}^{m}$, and oracle access to $(f,f')$  where $f: \F_{q}^{m} \to \F_{q}$ and $f': \F_{q}^{2m} \to \F_{q}^{d+1}$ }

		\vspace{1mm}
		Sample $\mathbf{b} \sim \F_{q}^{m}$, $\lambda \sim \F_{q}^{\times}$ \;

		\vspace{1mm}

		\lIf{$f'[({\mathbf{a},\mathbf{b}})](\lambda) \neq f[\mathbf{a} + \lambda \mathbf{b}]$ }{
		\Return{\reject}
		}
		\Return{$f'[(\mathbf{a},\mathbf{b})](0)$}
	\end{algorithm}

\end{algobox}

\noindent
\begin{theorem}[Local Correction via Lines Table]\label{thm:local-correction}
	(see \cite[Proposition 7.2.2.1]{ALMSS})
	There exist absolute positive constants $\delta_{0}, C > 0$  such that for every $\delta < \delta_{0}$, for every $d,q \in \mathbb{N}$ satisfying $q > Cd$, the following holds.\newline

	\begin{enumerate}
		\item If $f$ is a degree-$d$ polynomial, then for every $\mathbf{a} \in \F_{q}^{m}$, $\tabcorr^{f,f_{\mathrm{lines}}}_{d}(\mathbf{a} ; \mathbf{b}, \lambda)$ returns $f(\mathbf{a})$ with probability $1$, over the randomness $(\mathbf{b},\lambda)$.

		\item Let $f: \F_{q}^{m} \to \F_{q}$ be any function with the condition that there exists a degree-$d$ polynomial $P$ such that $\delta(f,P) \leq \delta$. Then for every $f' : \F_{q}^{2m} \to \F_{q}^{d+1}$, and every $\mathbf{a} \in \F_{q}^{m}$, we have:\\
		      Either $\tabcorr$ returns $\reject$
		      or computes $P(\mathbf{a})$ exactly with high probability, i.e.
		      \begin{align*}
			      \Pr_{\mathbf{b}, \lambda}\brac{ \tabcorr^{f,f'}_{d}(\mathbf{a} ; \mathbf{b}, \lambda)  = P(\mathbf{a}) \quad \text{OR} \quad \tabcorr^{f,f'}_{d}(\mathbf{a} ; \mathbf{b}, \lambda) \; \text{ returns } \reject }
			      \; \geq \;
			      1 - 2\sqrt{\delta} - \dfrac{d}{q-1}.
		      \end{align*}
	\end{enumerate}
	Furthermore, $\tabcorr^{f,f'}_{d}$ makes $2$ oracle queries, uses $\bigO(m \log q)$ bits of randomness, and runs in time $\poly(m,d,\log q)$.
\end{theorem}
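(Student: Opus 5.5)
We prove the two parts separately. Completeness is immediate from the definition of the lines table (\Cref{defn:lines-table}). If $f$ has degree $d$ and $f' = f_{\mathrm{lines}}$, then $f'[(\mathbf{a},\mathbf{b})]$ is exactly the univariate polynomial $f(\mathbf{a}+X\mathbf{b})$. Hence the check $f'[(\mathbf{a},\mathbf{b})](\lambda) = f(\mathbf{a}+\lambda\mathbf{b})$ always passes, and the output $f'[(\mathbf{a},\mathbf{b})](0)=f(\mathbf{a})$ is correct. The query count, randomness ($m\log q$ bits for $\mathbf{b}$ plus $\log q$ bits for $\lambda$) and running time (evaluating a degree-$d$ polynomial over $\F_q$) are read off directly from \Cref{algo:local-correction}.

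For soundness, fix $\mathbf{a}$, $f$, $f'$ and a degree-$d$ polynomial $P$ with $\delta(f,P)\le\delta$. The bad event is that the check passes but $f'[(\mathbf{a},\mathbf{b})](0)\neq P(\mathbf{a})$. In that case the univariate polynomials $g_{\mathbf{b}} := f'[(\mathbf{a},\mathbf{b})]$ and $p_{\mathbf{b}} := P(\mathbf{a}+X\mathbf{b})$ are distinct, since they differ at $0$. Both have degree at most $d$, so they agree on at most $d$ points of $\F_q^\times$. We split the bad event according to whether $f(\mathbf{a}+\lambda\mathbf{b}) = P(\mathbf{a}+\lambda\mathbf{b})$.

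\textbf{Case 1: $f$ and $P$ agree at the queried point.} The check passing then means $g_{\mathbf{b}}(\lambda)=p_{\mathbf{b}}(\lambda)$ with $g_{\mathbf{b}}\neq p_{\mathbf{b}}$. The line $(\mathbf{a},\mathbf{b})$, and hence $g_{\mathbf{b}}$, is fixed before $\lambda$ is chosen. So for each $\mathbf{b}$, this happens with probability at most $d/(q-1)$ over $\lambda\sim\F_q^\times$.

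\textbf{Case 2: $f(\mathbf{a}+\lambda\mathbf{b})\neq P(\mathbf{a}+\lambda\mathbf{b})$.} For $\mathbf{b}\sim\F_q^m$ and $\lambda\sim\F_q^\times$, the point $\mathbf{a}+\lambda\mathbf{b}$ is uniform on $\F_q^m$. Therefore this case has probability at most $\delta\le 2\sqrt{\delta}$ (for $\delta<1$).

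A union bound over the two cases gives failure probability at most $\delta + d/(q-1)$, which is at most the claimed $2\sqrt{\delta}+d/(q-1)$. The weaker $\sqrt{\delta}$ form in the statement is presumably an artifact of ALMSS's Markov-style averaging over lines, which is not needed here. The constants $\delta_0, C$ only serve to ensure $q>Cd$ makes the bound nontrivial. The one point that needs care is that $g_{\mathbf{b}}$ is determined by $\mathbf{b}$ alone and is independent of $\lambda$; this is what lets the adversarial table be handled by the Schwartz--Zippel count of \Cref{thm:odlsz} in the univariate case.
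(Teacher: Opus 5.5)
Your proof is correct. The paper gives no argument of its own for this statement and simply defers to ALMSS, so your proposal supplies a self-contained proof where the paper has only a citation.

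Your decomposition of the bad event (the check passes but $f'[(\mathbf{a},\mathbf{b})](0)\neq P(\mathbf{a})$) into two cases is sound:
\begin{enumerate}
\item The queried point $\mathbf{a}+\lambda\mathbf{b}$ is one where $f\neq P$. This has probability exactly $\delta(f,P)\le\delta$, because $\mathbf{a}+\lambda\mathbf{b}$ is uniform for each fixed $\lambda\neq 0$.
\item $g_{\mathbf{b}}\neq p_{\mathbf{b}}$, yet the two agree at the random $\lambda\in\F_q^\times$. This has probability at most $d/(q-1)$. Here $g_{\mathbf{b}}$ depends only on $\mathbf{b}$, and the alphabet $\F_q^{d+1}$ forces $\deg g_{\mathbf{b}}\le d$ syntactically.
\end{enumerate}
The argument in fact gives the stronger bound $1-\delta-d/(q-1)$, with no use of $\delta<\delta_0$ or $q>Cd$. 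Since $\delta\le 2\sqrt{\delta}$ on $[0,1]$, this implies the stated inequality.

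The $\sqrt{\delta}$ in the cited form presumably comes from a different formulation, for instance one without the reject option, where one must average over lines through $\mathbf{a}$. For the ``reject or correct'' formulation stated here, your union bound is all that is needed.

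This is also the reasoning the paper effectively relies on. When it invokes this theorem in the proof of \Cref{thm:local-corr-F2}, it applies it with $f=P$ (so $\delta=0$) and uses exactly your second-case estimate. That is why the footnote there remarks that \Cref{thm:odlsz} could be applied directly instead.
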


\section{Two Encodings}\label{sec:encodings}
In this section, we will explain two different encodings that we will use to encode our PCP proofs. The first encoding will map a degree-$d$ univariate polynomial (for us $d$ will be $\mathrm{poly}(\log n) = \mathrm{poly}(q)$) to a multivariate polynomial of $\bigO(1)$-degree in, say $d^{\varepsilon}$ variables, for some small constant $\varepsilon$. As an example, say we want to encode a degree-$d$ univariate as a degree-$2$ polynomial in $2\sqrt{d}$ variables. The idea is to express every degree from $1$ to $d$ in its base-$(\sqrt{d} + 1)$ representation (the number of digits will be $2$). Thus, we use variables $X_{i,j}$ where $i \in \set{0,1}$ denotes the digit and $j \in \{0, \ldots, \sqrt{d}\}$ denotes the value at that digit. We describe this formally in \Cref{subsec:first-encoding}. In \Cref{subsec:tests-set-ml}, we discuss its syntactic test and self-correction algorithm.\\

\noindent
The second encoding will map an element in $\F_{q}$ to a $\F_{2}$-vector of length $2^{\mathrm{poly}(\log q)}$. This encoding will be done via constant-degree Hadamard encoding (higher-degree generalization of the well-known quadratic Hadamard encoding from \cite{ALMSS}). This then naturally extends to an encoding of a $\F_{q}$-valued function. We describe this formally in \Cref{subsec:second-encoding}. In \Cref{subsec:tests-hadamard}, we discuss its syntactic test and self-correction algorithm.

\subsection{Encoding Univariate Polynomials using Set-Multilinear Encoding}\label{subsec:first-encoding}
In this subsection, we discuss our first encoding that encodes a ``high''-degree univariate polynomial to a low-degree multivariate polynomial, which we refer to as \emph{set-multilinear} encoding. As discussed in the introduction, this transformation allows us to perform a low-degree test with constant queries over an alphabet of size $\mathrm{poly}(\log n)$.\\

\begin{definition}[Set-Multilinear Encoding]\label{defn:degree-redn}
	Fix any field $\F$. Let $d,m_{1},c \in \mathbb{N}$ with $d < m_{1}^{c}$.
	Consider the following $\F$-linear map $\setml_{d,c,m_{1}}$:
	\begin{gather*}
		\setml_{d,c,m_{1}}: \F[Y]^{\leq d} \to \F[X_{0,0},\ldots,X_{0,m_{1}-1}, \ldots, X_{c-1,0},\ldots,X_{c-1,m_{1}-1}] \\
		\setml_{d,c,m_{1}}(Y^{k}) \; = \; X_{0,k_{0}} \cdots X_{c-1,k_{c-1}},
		\quad
		\text{ where every $0 \leq k_{j} < m_{1}$ such that }
		\; k \, = \, \sum_{j=0}^{c-1} \; k_{j} \cdot m_{1}^{j}.
	\end{gather*}
	The map $\setml_{d,c,m_{1}}$ is defined on all of $\F[Y]^{\leq d}$ by linearity. In simple words, $\setml_{d,c,m_{1}}$ sends $Y^{k}$ to the monomial encoding the $m_{1}$-ary representation of the integer $k$ (which uses $c$ digits because $d < m_{1}^{c}$).\newline
	Note that every monomial in $\setml_{d,c,m_{1}}(Y^{k})$ is set-multilinear with respect to the partition $\set{X_{0,\cdot}} \sqcup \ldots \sqcup \set{X_{c-1, \cdot}}$, i.e., it is multilinear and contains exactly one variable from each $\{X_{j,\cdot}\}$.\\

	We also extend this definition to a family of univariate polynomials of degree $d$, i.e. if $F: \F^{k} \to \F[Y]^{\leq d}$, then the $\F$-linear map $\setml^{\ast}_{d,c,m_{1}}$ takes a function $F$ from $\F^{k}$ to $\F[Y]^{\leq d}$, and outputs a function from $\F^{k}$ to $\F[X_{0,0},\ldots,X_{c-1,m_{1}-1}]$ as follows:
	\begin{align*}
		\setml_{d,c,m_{1}}^{\ast}(F)(\mathbf{u}) = \setml_{d,c,m_{1}}(F(\mathbf{u})).
	\end{align*}
\end{definition}

\noindent
Next, we define the following map that will let us evaluate the univariate polynomial via evaluation of the multivariate polynomial:
\begin{align*}
	\Phi_{c,m_{1}} : \F \to (\F^{m_{1}})^{c}
\end{align*}
\begin{equation}\label{eqn:reverse-degree-redn}
	\lambda \; \mapsto \; \paren{(1, \lambda, \lambda^2, \ldots, \lambda^{m_{1}-1}), \ldots, \paren{1,\lambda^{m_{1}^{c-1}},\lambda^{2m_{1}^{c-1}}, \ldots,\lambda^{(m_{1}-1)m_{1}^{c-1}}}}.
\end{equation}

From the definition of $\setml_{d,c,m_{1}}$ and $\Phi_{c,m_{1}}$, the following claim is immediate (proved in~\Cref{app:encodings-app}).\\

\begin{claim}\label{claim:eval-set-mult-degree-redn}
	Let $d,c,m_{1} \in \mathbb{N}$ with $d < m_{1}^{c}$.
	Let $\setml_{d,c,m_{1}}$ and $\Phi_{c,m_{1}}$ be as defined in \Cref{defn:degree-redn} and \Cref{eqn:reverse-degree-redn}.
	Then for every degree-$d$ univariate polynomial $P$ and for every $\lambda \in \F$, we have,
	\begin{align*}
		P(\lambda) \; = \; (\setml_{d,c,m_{1}}(P))(\Phi_{c,m_{1}}(\lambda)).
	\end{align*}
	In simple words, evaluation of the $cm_{1}$-variate degree-$c$ polynomial $\setml_{d,c,m_{1}}(P)$ on the point $\Phi_{c,m_{1}}(\lambda)$ equals $P(\lambda)$.
\end{claim}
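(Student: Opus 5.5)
Both sides of the identity are $\F$-linear in $P$: the map $\setml_{d,c,m_{1}}$ is linear by definition, and for a fixed point $\Phi_{c,m_{1}}(\lambda)$, evaluating a polynomial there is also linear. So the plan is to reduce to the monomials $P = Y^{k}$ with $0 \le k \le d$, which span $\F[Y]^{\leq d}$, and check the identity on each of them.

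Fix $k \le d$. Since $d < m_{1}^{c}$, the integer $k$ has a unique base-$m_{1}$ representation $k = \sum_{j=0}^{c-1} k_{j} m_{1}^{j}$ with digits $0 \le k_{j} < m_{1}$. By definition, $\setml_{d,c,m_{1}}(Y^{k}) = X_{0,k_{0}} \cdots X_{c-1,k_{c-1}}$. Reading off \Cref{eqn:reverse-degree-redn}, the $j$-th block of $\Phi_{c,m_{1}}(\lambda)$ has entry $\lambda^{i m_{1}^{j}}$ at position $i$. So substituting $\Phi_{c,m_{1}}(\lambda)$ sets $X_{j,i} = \lambda^{i\cdot m_{1}^{j}}$ for every $j \in \{0,\ldots,c-1\}$ and $i \in \{0,\ldots,m_{1}-1\}$. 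The monomial therefore evaluates to
$$\prod_{j=0}^{c-1} \lambda^{k_{j} m_{1}^{j}} = \lambda^{\sum_j k_j m_1^j} = \lambda^{k},$$
which equals $Y^{k}$ evaluated at $\lambda$. This uses the convention $\lambda^{0} = 1$, including when $\lambda = 0$, which matches the first coordinate $1$ in each block.

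Finally, write $P = \sum_{k} p_{k} Y^{k}$. By linearity,
$$(\setml(P))(\Phi(\lambda)) = \sum_k p_k\, \lambda^k = P(\lambda).$$

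There is no real obstacle here; the argument is a direct computation. The only point needing care is the indexing: the $j$-th block of $\Phi$ must use exponent multipliers $m_{1}^{j}$, consistent with the $0$-indexed digits in \Cref{defn:degree-redn}.
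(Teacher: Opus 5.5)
Your proof is correct and follows essentially the same route as the paper's: expand $P=\sum_k \alpha_k Y^k$, write each exponent in base $m_1$, and check that substituting $\Phi_{c,m_1}(\lambda)$ turns the monomial $X_{0,k_0}\cdots X_{c-1,k_{c-1}}$ into $\lambda^{k_0}\lambda^{k_1 m_1}\cdots\lambda^{k_{c-1}m_1^{c-1}}=\lambda^k$. The paper does this directly on the sum rather than phrasing it as a reduction to monomials by linearity; the difference is only presentational.
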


\noindent
Hence $\setml_{d,c,m_{1}}$ sends degree-$d$ univariate polynomials to $cm_{1}$-variate polynomials of degree-$c$ (in fact, these are set-multilinear polynomials in $c$ parts). For our final PCP construction, we will have $d = \widetilde{\bigO}(\log^{2} n)$ and we will choose $c$ to be an absolute constant.

\subsection{Syntactic Test and Self-Correction of Set-Multilinear Encoding}\label{subsec:tests-set-ml}
In this subsection, we will discuss the algorithms to do low-degree test and self-correction for degree-$c$ multivariate polynomials, \emph{over the alphabet $\F_{q}$}. Note that every $\Psi_{d,c,m_{1}}$ encoding is a degree-$c$ multivariate polynomial. We have already seen algorithms for a very similar task in \Cref{algo:low-degree-test} and \Cref{algo:local-correction}. However, their alphabet was ``large'', i.e. $\F_{q}^{d+1}$. A large alphabet allowed us to get a constant query complexity even for $d = \mathrm{poly}(\log n)$. In the degree-$c$ setting, we think of $c$ as a constant, and in that case we do not have to use the lines table. We still get constant many queries (depending on $c$) and over the alphabet $\F_{q}$.
Before we describe these algorithms, it will be useful to recall \Cref{fact:characterize-low-deg}. We start with a syntactic test.

\noindent
\begin{lemma}[Standard Low-Degree Test,~{\cite[Theorem 4.1]{RubSud}}]\label{lemma:syntactic-test-degree-reduction}
	There exists an absolute positive constant $C$ such that for every $\delta < 1/2(c+2)^{2}$, every $c, q \in \mathbb{N}$ where
	$q$ is a power of $2$ larger than $Cc^{3}$,
	and \(\zeta \in \F_{q}^{\times}\) an element of order \(c+1\)
	(assuming that $c+1$ divides $q-1$)
	then the following holds:\\
	Given a function $f: \F_{q}^{m'} \to \F_{q}$ a degree parameter $c$, we will use $\stdldt^{f}_{c}(; \mathbf{u}, \mathbf{v})$ to denote the expression
	\begin{align*}
		\stdldt^{f}_{c}(; \mathbf{u}, \mathbf{v})
		\; := \;
		f(\bu) + \sum_{i=1}^{c+1} f(\mathbf{u} + \zeta^{i} \cdot \mathbf{v}).
	\end{align*}
	\begin{enumerate}
		\item If $f \in \mathcal{P}_{c}(m', \F_{q})$ is a polynomial of degree at most $c$, then
		      \begin{align*}
			      \Pr_{\mathbf{u},\mathbf{v}}\brac{ \stdldt^{f}_{c}(; \mathbf{u}, \mathbf{v}) \, = \, 0 } \; = \; 1.
		      \end{align*}

		\item For every function $f : \F_{q}^{m'} \to \F_{q}$,
		      \begin{align*}
			      \Pr_{\mathbf{u},\mathbf{v}}\brac{ \stdldt^{f}_{c}(; \mathbf{u}, \mathbf{v}) \, \neq \, 0 } \; \leq \; \delta \; \implies \; \delta(f, \, \mathcal{P}_{c}(m', \F_{q}))  \; \leq \; 2 \delta.
		      \end{align*}
	\end{enumerate}

\end{lemma}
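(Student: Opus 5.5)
Completeness is immediate from the `furthermore' part of \Cref{fact:characterize-low-deg}. With $\zeta_0=0$ and $\zeta_i=\zeta^i$ all $\eta_i=1$, so every degree-$c$ polynomial satisfies $f(\bu)+\sum_{i=1}^{c+1} f(\bu+\zeta^i\bv)=0$ for all $\bu,\bv$. (This is exactly \Cref{fact:local-correction-via-lines}; note $-1=1$ in characteristic $2$.)

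For soundness I will follow the Rubinfeld--Sudan self-correction argument. Suppose the test fails with probability at most $\delta$. Define the corrected function
\[
g(\bu) := \text{plurality}_{\bv}\Big[\sum_{i=1}^{c+1} f(\bu+\zeta^i\bv)\Big].
\]
The argument has three steps.

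\textbf{Step 1: $g$ is close to $f$.} If $f(\bu)\neq g(\bu)$, then at least half of the $\bv$ make the test at $(\bu,\bv)$ fail. Markov's inequality then gives $\delta(f,g)\le 2\delta$.

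\textbf{Step 2: $g$ is well defined with overwhelming majority.} I will show that for every fixed $\bu$,
\[
\Pr_{\bv,\bv'}\Big[\textstyle\sum_i f(\bu+\zeta^i\bv)=\sum_j f(\bu+\zeta^j\bv')\Big]\ge 1-2(c+1)\delta .
\]
The route is the standard two-dimensional grid trick. For each pair $(i,j)$, the point $\bu+\zeta^i\bv+\zeta^j\bv'$ is queried in two ways:
\begin{itemize}
\item testing at base $\bu+\zeta^i\bv$ with direction $\bv'$;
\item testing at base $\bu+\zeta^j\bv'$ with direction $\bv$.
\end{itemize}
For $i\ge1$ the base $\bu+\zeta^i\bv$ is uniformly distributed (since $\zeta^i\neq0$), and the direction $\bv'$ is independent of it, so each such test fails with probability at most $\delta$. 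A union bound over the $2(c+1)$ relevant tests, followed by summing the resulting identities over the grid $i,j\in[c+1]$, shows that both one-sided sums equal the double sum $\sum_{i,j} f(\bu+\zeta^i\bv+\zeta^j\bv')$. Consequently the plurality value is attained with probability at least $1-2(c+1)\delta>1/2$, given the hypothesis on $\delta$.

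\textbf{Step 3: $g$ is exactly a degree-$c$ polynomial.} I will prove $g(\bu)+\sum_i g(\bu+\zeta^i\bv)=0$ for \emph{every} $\bu,\bv$, and then invoke \Cref{fact:characterize-low-deg}. Fix $\bu,\bv$ and pick random $\bv_1,\bv_2$. Write each $g(\bu+\zeta^i\bv)$, with $i=0,\ldots,c+1$, through its Step 2 majority representation, using the direction $\bv_1+\zeta^i\bv_2$. This makes each sampled point $\bu+\zeta^i\bv+\zeta^j(\bv_1+\zeta^i\bv_2)$ pairwise-uniform in an appropriate sense. Each such point lies on a ``line'' in the other direction family, namely base $\bu+\zeta^j\bv_1$ with direction $\zeta^i(\bv+\zeta^j\bv_2)$. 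Along those lines the test succeeds with probability at least $1-\delta$ each.

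Summing over the $(c+2)\times(c+1)$ grid, the expression $g(\bu)+\sum_i g(\bu+\zeta^i\bv)$ equals zero unless one of roughly $2(c+2)^2$ bad events occurs. Each bad event has probability at most $2(c+1)\delta$ (from Step 2) or at most $\delta$ (a test failure). The condition $\delta<1/(2(c+2)^2)$ makes the total failure probability less than $1$. Hence some choice of $\bv_1,\bv_2$ certifies the identity, and since the identity involves no randomness, it holds outright.

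\textbf{Expected obstacle.} The delicate part is Step 3. One must choose the reparametrization so that the line through $\bu+\zeta^i\bv$ in direction $\bv_1+\zeta^i\bv_2$, and the transposed lines through $\bu+\zeta^j\bv_1$, have uniformly random base and independent direction. This requires handling the index $i=0$ (base $\bu$ fixed) separately, using Step 2 rather than the test distribution. Once that is done, the count of bad events matches the stated bound $\delta<1/2(c+2)^2$. The condition $q>Cc^3$ is not essential for this argument beyond guaranteeing that $\zeta$ exists and that the $\zeta^i$ are distinct.
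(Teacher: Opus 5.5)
Your Steps 1--3 faithfully reproduce the Rubinfeld--Sudan self-correction argument, which is exactly what the paper invokes: it cites Rubinfeld--Sudan with $i$ replaced by $\zeta^i$. Those steps are sound, and they use nothing about the configuration beyond $\zeta_0=0$ and $\zeta_1,\dots,\zeta_{c+1}$ being distinct and nonzero. (The count in Step 3 should read: $c+2$ events of probability at most $2(c+1)\delta$, plus $c+1$ test failures of probability at most $\delta$, for a total of $(c+1)(2c+5)\delta<2(c+2)^2\delta<1$.)

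The step that fails is the last one. There you pass from the everywhere-identity $\sum_{i=0}^{c+1}g(\bu+\zeta_i\bv)=0$ to $g\in\mathcal{P}_c(m',\F_q)$ by citing the `furthermore' part of \Cref{fact:characterize-low-deg}. With all weights equal to $1$ in characteristic $2$, that implication is false, and so is the lemma itself. Because $c+2$ is even and $\sum_{i=1}^{c+1}\zeta^i=0$, every $\F_2$-linear map $g$ satisfies
\[
\sum_{i=0}^{c+1} g(\bu+\zeta_i\bv) \;=\; g\Big((c+2)\bu+\Big({\textstyle\sum_{i=1}^{c+1}\zeta^i}\Big)\bv\Big) \;=\; g(\mathbf{0}) \;=\; 0
\]
for all $\bu,\bv$. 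Now take $f(\bx)=x_1^{2^s}$ with $c<2^s\le 2c$ and any admissible $q>2c$; for example $x_1^4$ when $c=2$, or $x_1^8$ for the paper's choice $c=4$. The test never rejects, yet $f$ is a reduced polynomial of degree $2^s>c$. By \Cref{thm:odlsz}, $f$ agrees with each member of $\mathcal{P}_c$ on at most a $2c/q$ fraction of points, so $\delta(f,\mathcal{P}_c(m',\F_q))\ge 1-2c/q$.

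No hypothesis on $q$ can rescue this step. Your remark that $q>Cc^3$ is inessential is a symptom: nothing in your argument uses the field at the one point where it would have to. The cited argument does not transfer because Rubinfeld and Sudan work over a prime field $\mathbb{Z}_p$ with $p>d+1$. There, the coefficient of $a^{k-d-1}b^{d+1}$ in $\sum_i\eta_i(a+ib)^k$ is $\binom{k}{d+1}\sum_i\eta_i i^{d+1}$, and both factors are nonzero for every $d<k<p$. That is what makes their $(d+2)$-point identity characterize degree $d$. Over $\F_{2^t}$ with $k=2^s$, every $\binom{2^s}{l}$ with $0<l<2^s$ is even, so only $l=0$ and $l=2^s$ survive. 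Their coefficients are $c+2=0$ and $\sum_i\zeta_i^{2^s}=(\sum_i\zeta_i)^{2^s}=0$.

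Since the self-correction part is configuration-agnostic, all the content sits in the characterization. A repair must change the test so that its identity genuinely cuts out $\mathcal{P}_c$ over $\F_{2^t}$. In particular, the weights cannot all lie in $\F_2$: otherwise the solution space is closed under $g\mapsto g^2$ and contains every $x_1^{2^s}$. That characterization must be proved directly, after which your Steps 1--3 complete the argument. The paper's downstream uses of this lemma, such as \Cref{claim:ldt-dist-corr-g}, inherit the same gap.
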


The proof of~\Cref{lemma:syntactic-test-degree-reduction} is essentially the same as in \cite[Theorem 4.1]{RubSud}. In \cite{RubSud}, they consider the points $(\mathbf{u} + i \mathbf{v})$, however, the proof continues to hold for $(\mathbf{u} + \zeta^{i} \mathbf{v})$.\\

\noindent
\begin{lemma}[Standard Local Correction]\label{lemma:self-correction-degree-reduction}
	There exists an absolute positive constant $C$ such that for every $\delta < 1/2(c+2)^{2}$, every $c, q \in \mathbb{N}$ where
	$q$ is a power of $2$ larger than $Cc^{3}$,
	and \(\zeta \in \F_{q}^{\times}\) an element of order \(c+1\)
	(assuming that $c+1$ divides $q-1$)
	then the following holds:\\
	For a function $f: \F_{q}^{m'} \to \F_{q}$ with oracle access and a degree parameter $c$, we will use $\stdcorr^{f}_{c}(\mathbf{u}; \mathbf{v})$ to denote
	\begin{align*}
		\stdcorr^{f}_{c}(\mathbf{u}; \mathbf{v})
		\; := \;
		\sum_{i=1}^{c+1}  f(\mathbf{u} + \zeta^{i} \cdot \mathbf{v}).
	\end{align*}
	\begin{enumerate}
		\item If $f \in \mathcal{P}_{c}(m', \F_{q})$ is a polynomial of degree $c$, then for every point $\mathbf{u} \in \F_{q}^{m'}$,
		      \begin{align*}
			      \Pr_{\mathbf{v}}\brac{ \stdcorr^{f}_{c}(\mathbf{u}; \mathbf{v}) \, = \, f(\mathbf{u}) } \; = \; 1.
		      \end{align*}

		\item Let $f : \F_{q}^{m'} \to \F_{q}$ be any function with the condition that there exists a degree $c$ polynomial $P$ such that $\delta(f,P) \leq \delta$. Then for every point $\mathbf{u} \in \F_{q}^{m'}$
		      \begin{align*}
			      \Pr_{\mathbf{v}}\brac{ \stdcorr^{f}_{c}(\mathbf{u}; \mathbf{v}) \, = \, P(\mathbf{u}) } \; \geq \; 1 - (c+1) \delta.
		      \end{align*}
	\end{enumerate}
\end{lemma}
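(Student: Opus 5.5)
Part 1 (completeness) follows directly from \Cref{fact:local-correction-via-lines}. If $f$ is a degree-$c$ polynomial, then for every $\mathbf{u},\mathbf{v}$ the fact gives $f(\mathbf{u}) = \sum_{i=1}^{c+1} f(\mathbf{u}+\zeta^i\mathbf{v})$, which is exactly $\stdcorr^{f}_{c}(\mathbf{u};\mathbf{v})$. The hypotheses match those of the fact: characteristic $2$, and $\zeta$ of order $c+1$. So the corrector outputs $f(\mathbf{u})$ with probability $1$.

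For part 2, I will compare $\stdcorr^{f}_{c}(\mathbf{u};\mathbf{v})$ with $\stdcorr^{P}_{c}(\mathbf{u};\mathbf{v})$ and use a union bound.
\begin{enumerate}
\item Apply part 1 to the polynomial $P$: for every $\mathbf{v}$ we have $\stdcorr^{P}_{c}(\mathbf{u};\mathbf{v}) = P(\mathbf{u})$.
\item Consequently, the output $\stdcorr^{f}_{c}(\mathbf{u};\mathbf{v})$ can differ from $P(\mathbf{u})$ only if $f(\mathbf{u}+\zeta^i\mathbf{v}) \neq P(\mathbf{u}+\zeta^i\mathbf{v})$ for some $i\in[c+1]$.
\item Fix $i \in [c+1]$. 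Since $\zeta^i \neq 0$ and $\mathbf{v}$ is uniform on $\F_q^{m'}$, the point $\mathbf{u}+\zeta^i\mathbf{v}$ is uniform on $\F_q^{m'}$, because $\mathbf{v}\mapsto \mathbf{u}+\zeta^i\mathbf{v}$ is a bijection. Hence $\Pr_{\mathbf{v}}[f(\mathbf{u}+\zeta^i\mathbf{v}) \neq P(\mathbf{u}+\zeta^i\mathbf{v})] = \delta(f,P) \le \delta$.
\item A union bound over the $c+1$ queried points shows that all queries agree with $P$ with probability at least $1-(c+1)\delta$. On that event the output equals $P(\mathbf{u})$.
\end{enumerate}

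This argument needs no genuine obstacle. Its one point of care is that every query point is individually uniformly distributed, which uses that $\zeta^i \neq 0$; pairwise independence is not needed. Note also that the conditions on $\delta$ and on $q > Cc^3$ are not used in this argument. They are needed only so that the statement pairs with \Cref{lemma:syntactic-test-degree-reduction}, where closeness to a degree-$c$ polynomial is guaranteed. If uniqueness of $P$ were required, it would follow from \Cref{thm:odlsz}, since $2\delta < c/q$ fails to allow two distinct close polynomials. The lemma, however, is stated for any fixed $P$ at distance at most $\delta$, so uniqueness is not needed here.
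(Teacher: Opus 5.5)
Your proof is correct and is the paper's argument: completeness follows from \Cref{fact:local-correction-via-lines}, and soundness follows because each query point $\mathbf{u}+\zeta^{i}\mathbf{v}$ is uniform (since $\zeta^{i}\neq 0$), so a union bound over the $c+1$ queries gives the $1-(c+1)\delta$ bound. One small slip in your optional aside, which the proof does not use: uniqueness of $P$ requires agreement $1-2\delta > c/q$, i.e.\ $2\delta < 1-c/q$, not $2\delta < c/q$.
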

\begin{proof}[Proof of \Cref{lemma:self-correction-degree-reduction}]
	This is simply due to \Cref{fact:local-correction-via-lines} and a union bound.
\end{proof}

\subsection{Encoding Field Elements using Low-Degree Hadamard Codes}\label{subsec:second-encoding}
In this subsection, we will discuss an encoding of $\F_{q}$ elements to $2^{\poly\log q}$ long vectors over $\F_{2}$.
This encoding also naturally extends to $\F_{q}$-valued functions (for our PCP construction,
these $\F_{q}$-valued functions will be low-degree polynomials over $\F_{q}$).
Let $q = 2^{t}$
(recall that throughout this article, we will always have \(q\) a power of
\(2\)).
In our PCP construction, we will have $t = \bigO(\log \log n)$. We move to the definition of the encoding now.\\

\noindent
Firstly,
the inclusion \(\F_{2} \subseteq \F_{q}\)
gives \(\F_{q}\) the structure of a \(\F_{2}\)-vector space,
and throughout the paper, let $\rho:\F_{q}\to \F_2^{t}$ be an arbitrary but fixed $\F_2$-linear bijection that identifies $\F_{q}$ as the vector space $\F_{2}^{t}$.
We will be dealing with the polynomial ring $\F_{2}[Z_{1},\ldots,Z_{t}]$,
which we will denote by $\F_{2}[\mathbf{Z}]$. We define the \emph{degree-$r$ Hadamard encoding} below. This is also known as degree-$r$ long code in the literature.\\

\begin{definition}[Degree-$r$ Hadamard encoding of an $\F_{q}$ element]
	Let $q = 2^{t}$ and fix a degree parameter $r \in \mathbb{N}$. For an element $u \in \F_{q}$,
	its \emph{degree-$r$ Hadamard encoding} is a function $\had{r}{u}: \F_{2}[\mathbf{Z}]^{\leq r} \to \F_{2}$,
	defined as follows:
	\begin{align*}
		\had{r}{u}(R) \; := \; R(\rho(u)), \quad \text{ for every } R \in \F_{2}[\mathbf{Z}]^{\leq r}.
	\end{align*}
	The encoding $\had{r}{u}$ is then the evaluation table of the above function.
	The length of $\had{r}{u}$ is equal to the number of degree-$r$ polynomials in $t$ variables over $\F_{2}$,
	which is equal to $2^{\binom{t}{\leq r}}$.
\end{definition}

\noindent
For $r = 1$ and $r = 2$, we get the usual Hadamard code and the quadratic Hadamard code, respectively. We now extend the previous definition to degree-$r$ Hadamard encoding of $\F_{q}$-valued functions.\\

\begin{definition}[Degree-$r$ Hadamard encoding of a $\F_{q}$-function]\label{defn:deg-r-Hadamard}
	Let $q = 2^{t}$, fix a degree parameter $r \in \mathbb{N}$, and fix a variable parameter $k \in \mathbb{N}$. For a function $f: \F_{q}^{k} \to \F_{q}$, the degree-$r$ Hadamard encoding of $f$ is a combined degree-$r$ Hadamard encoding of each of $f$'s evaluations. More formally, it's given by a function as follows:
	\begin{gather*}
		\had{r}{f}: \F_{q}^{k} \times \F_{2}[\mathbf{Z}]^{\leq r} \; \to \F_{2} \\
		\had{r}{f}(\mathbf{a}, R) \; := \; \had{r}{f(\mathbf{a})}(R) \; = \; R(\rho(f(\mathbf{a}))).
	\end{gather*}
	We also extend this definition to a family of functions as follows. For a family $\mathcal{F}$ of $\F_{q}$-functions, its degree-$r$ Hadamard encoding is a set of degree-$r$ Hadamard encodings, defined as:
	\begin{align*}
		\hadfamily{r}{\mathcal{F}} \; := \; \setcond{\had{r}{f}}{f \in \mathcal{F}}.
	\end{align*}
\end{definition}

\noindent
In other words, degree-$r$ Hadamard encoding of a $f: \F_{q}^{m} \to \F_{q}$ is obtained by applying degree-$r$ long code on every evaluation of $f$. For every $1 \leq j \leq r$, we will denote the restriction of a function $H: \F_{2}[\mathbf{Z}]^{\leq r} \to \F_{2}$ to degree-\(j\) polynomials by $H|_{\leq j}: \F_{2}[\mathbf{Z}]^{\leq j} \to \F_{2}$.

\paragraph*{Why degree-$r$?}\label{par:degr}
From the definition, we see that even degree-$1$ Hadamard encodings are enough to go from alphabet $\F_{q}$ to $\F_{2}$. So why do we need higher degree Hadamard encodings? The reason is that we are constructing PCPs for $\mathsf{NP}$ and degree-$1$ relations do not seem to be strong enough to capture $\mathsf{NP}$. Let us understand this. While constructing a PCP, we will have oracle access to some functions, and we have to check whether these functions satisfy a degree $> 1$ relation. For example, given a function $f$ as an oracle, we might be interested in knowing whether $(f^{3}-1)$ satisfies some algebraic property. Here we are checking whether a specific degree $3$ relation (i.e., $Y^{3}-1$) of $f$ has some algebraic property. Such relations are typical when expressing an $\mathsf{NP}$-complete problem using polynomials. A classic example of an $\mathsf{NP}$-complete problem that looks like this is $\mathsf{QUAD}$-$\mathsf{EQ}$, which asks whether a given set of quadratic equations has a common Boolean solution. Note that degree-$1$ relations are probably not strong enough to capture $\mathsf{NP}$-completeness, because a system of linear equations can be solved in polynomial time.

\paragraph*{}Let us imagine an abstract step that occurs in our PCP construction. Suppose $f: \F_{q}^{m} \to \F_{q}$ is a function, $P(Y)$ is a univariate polynomial of degree-$r$ (for example $Y^{3}-1$ for $r=3$), and we want to test whether $P(f)$ (which is a polynomial in $f$) satisfies some algebraic property. To test whether $P(f)$ satisfies the algebraic property or not, we might want to query the evaluation of $P(f(\mathbf{a}))$. However, we will only have access to the degree-$r$ Hadamard encoding of $f$. Instead of $P(f(\mathbf{a}))$, we can consider the $\had{1}{P \circ f}(\mathbf{a}, L)$ for some $L \in \F_{2}[\mathbf{Z}]^{\leq 1}$. Therefore, we need to be able to access $\had{1}{P \circ f}(\mathbf{a}, L)$ using constant many oracle queries to $\had{r}{f}$ (we want constant many oracle queries so the final query complexity is constant). \Cref{claim:checking-relation-via-Hadamard} tells us exactly how to do this. Recall that $q=2^t$ and $\rho:\F_q\to \F_2^t$ is an $\F_2$-linear bijection. The proof is via a straightforward algebraic argument, so we skip the proof here and provide it in \Cref{app:encodings-app}.\\

\begin{lemmabox}
	\begin{claim}\label{claim:checking-relation-via-Hadamard}
		Let $r \in \mathbb{N}$ be a  degree parameter,
		and fix an arbitrary degree-$r$ univariate polynomial $P \in \F_{q}[Y]$.
		Then for every degree-$1$ polynomial $L \in \F_{2}[Z_{1},\ldots,Z_{t}]^{\leq 1}$,
		there exists a degree-$r$ polynomial $\Lambda_{P,L} \in \F_{2}[Z_{1}, \ldots, Z_{t}]^{\le r}$ such that the following holds:
		For every \(\lambda \in \F_{q}\), we have
		\begin{align*}
			\had{1}{P(\lambda)}(L) \; = \;
			\had{r}{\lambda}(\Lambda_{P,L}).
		\end{align*}
		Consequently, for every function
		$f: \F_{q}^{m} \to \F_{q}$
		and every $\mathbf{a} \in \F_{q}^{m}$,
		we have
		\begin{align*}
			\had{1}{P\circ f}(\mathbf{a}, L) \; = \;
			\had{r}{f}(\mathbf{a}, \Lambda_{P,L}).
		\end{align*}
	\end{claim}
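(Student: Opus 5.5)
The plan is to write everything in coordinates over $\F_2$. Since $\rho$ is an $\F_2$-linear bijection, fix the basis $e_1,\dots,e_t$ of $\F_q$ over $\F_2$ with $\rho(e_j)$ equal to the $j$-th standard basis vector. Every $\lambda\in\F_q$ then satisfies $\lambda=\sum_{j=1}^t z_j e_j$, where $(z_1,\dots,z_t)=\rho(\lambda)$.

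The map $\lambda\mapsto P(\lambda)$ will be expressed as a function $\F_2^t\to\F_2^t$ whose coordinates are polynomials of degree at most $r$. Consider the formal expression
\[
\widetilde{P}(\mathbf{Z}) := P\Big(\sum_{j=1}^t Z_j e_j\Big) \in \F_q[\mathbf{Z}].
\]
It has total degree at most $r$, since substituting a linear form into a degree-$r$ univariate polynomial never raises the degree. Expanding in the basis, write $\widetilde{P}(\mathbf{Z})=\sum_{k=1}^t e_k\, G_k(\mathbf{Z})$ with each $G_k\in\F_2[\mathbf{Z}]^{\le r}$. To see this is possible, write each coefficient $\alpha\in\F_q$ of each monomial of $\widetilde P$ as $\sum_k \alpha_k e_k$ with $\alpha_k\in\F_2$, and collect terms. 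For every $\mathbf{z}\in\F_2^t$ this gives $P(\rho^{-1}(\mathbf{z}))=\sum_k e_k G_k(\mathbf{z})$. Here the field operations in $\F_q$ agree with evaluating $\widetilde P$ at $\mathbf z\in\F_2^t\subseteq\F_q^t$, and since $G_k(\mathbf z)\in\F_2$, this is exactly the coordinate decomposition. Hence $\rho(P(\lambda))=(G_1(\rho(\lambda)),\dots,G_t(\rho(\lambda)))$.

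Now let $L=\ell_0+\sum_k \ell_k Z_k$ with $\ell_k\in\F_2$. Define
\[
\Lambda_{P,L}:=\ell_0+\sum_{k=1}^t \ell_k G_k \in \F_2[\mathbf{Z}]^{\le r}.
\]
Then
\[
\had{1}{P(\lambda)}(L)=L(\rho(P(\lambda)))=\ell_0+\sum_k \ell_k G_k(\rho(\lambda))=\Lambda_{P,L}(\rho(\lambda))=\had{r}{\lambda}(\Lambda_{P,L}).
\]
The functional version follows by setting $\lambda=f(\mathbf a)$ and unfolding \Cref{defn:deg-r-Hadamard}. Note that $\Lambda_{P,L}$ depends only on $P$ and $L$, not on $\lambda$ or $f$, which is what the claim requires.

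The only point needing care is the degree bound in the second step. It holds because the expansion is carried out formally over $\F_2$ without multilinearizing. One could optionally reduce modulo $Z_j^2-Z_j$, but that only decreases the degree, so the bound $\deg G_k\le r$ survives either way.
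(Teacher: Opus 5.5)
Your proof is correct and is essentially the paper's argument: both expand $P$ at $\lambda=\sum_j z_j e_j$ in $\F_2$-coordinates and use $\F_2$-linearity of $\rho$ (and of the homogeneous part of $L$) to see that $z \mapsto L(\rho(P(\rho^{-1}(z))))$ is an $\F_2$-polynomial of degree at most $r$. The only difference is organizational. The paper pushes $L\circ\rho$ directly through the expansion for homogeneous $L$ and then adds the constant term separately. You first extract the coordinate polynomials $G_k$ and then take the $L$-combination $\ell_0+\sum_k \ell_k G_k$, which gives the same $\Lambda_{P,L}$.
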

\end{lemmabox}

We further note that given $P$ and $L$, the above polynomial $\Lambda_{P,L}$ can be computed (in its coefficient representation) in time $\poly(\log^r q)$ by interpolation.

\paragraph*{}We will make an observation that will be quite useful later for analyzing different subroutines of our PCP. The following observation says that the distance between two $\F_{q}$-functions and the distance between their respective low-degree Hadamard encodings are the same up to a constant factor.\\

\begin{lemmabox}
	\begin{observation}\label{obs:delta-fn-had-approx}
		Let $f,g : \F_{q}^{k} \to \F_{q}$ be two arbitrary functions and fix any degree parameter $r \in \mathbb{N}$.
		The following two equalities will come in handy in our calculations later. The first one is:
		\begin{gather*}
			\delta \paren{ \had{r}{f}, \, \had{r}{g} }
			\; = \;
			\frac{1}{2} \cdot \delta(f,g).
		\end{gather*}
		This follows because if $f(\mathbf{a}) = g(\mathbf{a})$ for some $\mathbf{a} \in \F_{q}^{k}$, then $\had{r}{f(\mathbf{a})} \equiv \had{r}{g(\mathbf{a})}$,
		while if \(f(\ba) \neq g(\ba)\),
		then $\rho(f(\mathbf{a})) \in \F_{2}^{t}$ and $\rho(g(\mathbf{a}))$ disagree on at least one coordinate
		\(Z_{i}\). So for every polynomial \(Q \in \F_{2}[\bZ]^{\leq r}\), we have
		\begin{align*}
			(Q+Z_{i})(\rho(f(\ba))) \, - \, (Q +Z_{i})(\rho(g(\ba)))
			\; \neq \;
			Q(\rho(f(\ba))) \, - \, Q(\rho(g(\ba))),
		\end{align*}
		which implies that exactly half of the polynomials in $\F_{2}[\mathbf{Z}]^{\leq r}$
		disagree on \(f(\ba), g(\ba)\).
		For the same reason, we get:
		\begin{align*}
			\Pr_{\mathbf{u} \sim \F_{q}^{k}, L \sim \hpol}\brac{ \had{r}{f}(\mathbf{u}, L) \neq \had{r}{g}(\mathbf{u}, L) } \; = \; \dfrac{1}{2} \cdot \delta(f,g)
		\end{align*}
		where \(L\) is a random linear homogeneous polynomial.
	\end{observation}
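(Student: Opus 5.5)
The plan is to reduce both equalities to a pointwise statement about a single pair of field elements and then average. Fix $\mathbf{a} \in \F_{q}^{k}$ and write $x = \rho(f(\mathbf{a}))$, $y = \rho(g(\mathbf{a}))$ in $\F_{2}^{t}$. The relative distance $\delta(\had{r}{f}, \had{r}{g})$ is the average over $\mathbf{a} \sim \F_{q}^{k}$ of the fraction of $R \in \F_{2}[\mathbf{Z}]^{\leq r}$ with $R(x) \neq R(y)$. This holds because the domain $\F_{q}^{k} \times \F_{2}[\mathbf{Z}]^{\leq r}$ is a product and each fiber $\{\mathbf{a}\}\times \F_{2}[\mathbf{Z}]^{\leq r}$ has the same size. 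So it suffices to show that this inner fraction is $0$ when $x = y$ and exactly $1/2$ when $x \neq y$. Averaging then gives $\frac{1}{2}\Pr_{\mathbf{a}}[f(\mathbf{a}) \neq g(\mathbf{a})] = \frac12 \delta(f,g)$.

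The case $x = y$ is trivial, since every $R$ agrees on equal inputs. For $x \neq y$, I would use that $\rho$ is a bijection, so $x$ and $y$ differ in some coordinate $i$. I then pair up polynomials by the involution $R \mapsto R + Z_{i}$ on $\F_{2}[\mathbf{Z}]^{\leq r}$. This map is well defined because $Z_i$ has degree $1 \le r$, and it is a fixed-point-free involution because we are in characteristic $2$. Over $\F_2$ we have $(R+Z_i)(x) - (R+Z_i)(y) = R(x) - R(y) + 1$, so exactly one member of each pair satisfies $R(x) \neq R(y)$. Hence exactly half of all $R$ distinguish $x$ from $y$.

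For the second equality, with $L \sim \hpol$, the same pairing works with $L \mapsto L + Z_{i}$. This is a fixed-point-free involution on the homogeneous linear polynomials, which form a vector space containing $Z_i$. So for $x\neq y$ we get $\Pr_{L}[L(x)\neq L(y)] = 1/2$, and the probability is $0$ when $x = y$. Averaging over $\mathbf{u} \sim \F_{q}^{k}$, which is independent of $L$, gives the claim.

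There is no real obstacle. The one point needing care is confirming that the translation $R \mapsto R+Z_i$ stays inside the polynomial family used in each case: degree $\le r$ with $r \ge 1$ in the first, homogeneous degree $1$ in the second. Each family is an $\F_2$-subspace containing $Z_i$, which justifies the pairing.
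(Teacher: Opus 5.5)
Your proposal is correct and is essentially the paper's argument. You reduce to a single point $\mathbf{a}$; when $f(\mathbf{a}) \neq g(\mathbf{a})$, you pair each $Q$ with $Q+Z_i$, where $i$ is a coordinate on which $\rho(f(\mathbf{a}))$ and $\rho(g(\mathbf{a}))$ differ, so exactly half of the polynomials disagree; you then average over $\mathbf{a}$ and reuse the same pairing inside $\hpol$. Your explicit check that $Z_i$ lies in each family ($r \ge 1$, and $\hpol$ being a subspace that includes $0$) is a point the paper leaves implicit.
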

\end{lemmabox}

\Cref{obs:delta-fn-had-approx} is quite useful for us because it allows us to go back and forth between checking whether $f \eqq g$ and checking whether $\had{r}{f} \eqq \had{r}{g}$. The reader can interpret \Cref{obs:delta-fn-had-approx} as follows:
\begin{itemize}
	\item If a test over alphabet $\F_{q}$ checks if $f \eqq g$, then its analogue test over alphabet $\F_{2}$ checks if $\had{r}{f} \eqq \had{r}{g}$.
	\item If $\had{r}{f}$ and $\had{r}{g}$ agree for a random pair in $\F_{q}^{m} \times \F_{2}[\mathbf{Z}]^{\leq 1}$, then $f$ and $g$ also agree on a random pair.
\end{itemize}
We will use \Cref{obs:delta-fn-had-approx} repeatedly in our analysis.

\subsection{Syntactic Test and Self-Correction of Low-Degree Hadamard Code}\label{subsec:tests-hadamard}
In this subsection, we will discuss two algorithms for degree-$r$ Hadamard encoding of a function: {\em Syntactic Test} and {\em Self-Correction}. In the syntactic test, given oracle access to a function, we want to test whether it is close to degree-$r$ Hadamard encoding of an element in $\F_{q}$. In self-correction, we are promised that the given oracle is close to degree-$r$ Hadamard encoding, and we want to compute it correctly for a specific degree-$r$ polynomial. We start by discussing the syntactic test.

\paragraph*{Syntactic Test} We are interested in the following task:
\begin{center}
	\textit{\textcolor{emphcolor}{Given oracle access to a function $H: \F_{2}[\mathbf{Z}]^{\leq r} \to \F_{2}$, decide whether $H$ is close to degree-$r$ Hadamard encoding of some element in $\F_{q}$.}}
\end{center}
For $r = 1$, \cite{BLR} gave a syntactic test for the Hadamard code (see \Cref{thm:BLR}). For $r = 2$, \cite[Proposition 34]{ALMSS} gave a syntactic test for the quadratic Hadamard code. We extend the test of \cite{ALMSS} to higher degrees.\newline
Note that a function $H: \F_{2}[\mathbf{Z}]^{\leq r} \to \F_{2}$ is a degree-$r$ Hadamard encoding of an element in $\F_{q}$ if and only if it is a restriction of a ring homomorphism
$\F_{2}[\mathbf{Z}] \to \F_{2}$. We state it formally in the following claim.\\

\begin{claim}[Local characterization of degree-$r$ Hadamard encoding]\label{claim:local-characterize-had}
	Let $H: \F_{2}[\mathbf{Z}]^{\leq r} \to \F_{2}$. There exists a $u \in \F_{q}$ such that $H = \had{r}{u}$ if and only if $H$ satisfies the following three properties:
	\begin{enumerate}
		\item The map $H$ preserves addition, i.e.,
		      for any two polynomials $P_{1}, P_{2} \in \F_{2}[\mathbf{Z}]^{\leq r}$, we have,
		      \begin{align*}
			      H(P_{1} + P_{2}) \; = \; H(P_{1}) \, + \, H(P_{2}).
		      \end{align*}
		\item The map $H$ preserves the multiplicative unit, i.e.,
		      \begin{align*}
			      H(1) \; = \; 1.
		      \end{align*}
		\item The map $H$ preserves multiplication, i.e.,
		      for any two polynomials $P, R \in \F_{2}[\mathbf{Z}]^{\leq r}$ satisfying $\deg(P) + \deg(R) \leq r$, we have,
		      \begin{align*}
			      H(P \cdot R) \; = \; H(P) \, \cdot \, H(R).
		      \end{align*}
		      Note that this condition only matters for $r \geq 2$.
	\end{enumerate}
\end{claim}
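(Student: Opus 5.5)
The forward direction is a direct check. If $H = \had{r}{u}$, then $H(R) = R(\rho(u))$ for every $R \in \F_{2}[\mathbf{Z}]^{\le r}$. Evaluation at the fixed point $\rho(u) \in \F_2^t$ is a ring homomorphism $\F_2[\mathbf{Z}] \to \F_2$. It is additive, sends the constant $1$ to $1$, and satisfies $(PR)(\rho(u)) = P(\rho(u))R(\rho(u))$ whenever $\deg P + \deg R \le r$, which is exactly the range on which $H(P\cdot R)$ is defined. So all three properties hold.

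For the converse, we reconstruct $u$ from the values of $H$ on the variables. Set $z_i := H(Z_i) \in \F_2$ for $i \in [t]$, and let $u := \rho^{-1}(z_1,\ldots,z_t)$, which is well defined because $\rho$ is a bijection. It then suffices to show that $H(M) = M(z)$ for every monomial $M$ of degree at most $r$. Once that holds, additivity (property 1) extends the equality to all of $\F_{2}[\mathbf{Z}]^{\le r}$, since every such polynomial is an $\F_2$-sum of monomials of degree $\le r$, and additivity also gives $H(0)=0$, which covers the empty sum.

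The monomial case goes by induction on the degree. For the base case, the constant monomial gives $H(1) = 1 = 1(z)$ by property 2, and a degree-one monomial gives $H(Z_i) = z_i$ by definition. For the inductive step, a monomial $M$ with $2 \le \deg M \le r$ can be written as $M = Z_i \cdot M'$ with $\deg M' = \deg M - 1$. Since $\deg Z_i + \deg M' \le r$, property 3 applies and gives $H(M) = H(Z_i)H(M') = z_i M'(z) = M(z)$.

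One technicality concerns repeated variables. Over $\F_2$, a monomial such as $Z_i^2$ evaluates on $\F_2^t$ to the same function as $Z_i$. If $\F_{2}[\mathbf{Z}]^{\le r}$ is read as polynomials rather than as multilinear functions, the induction above still works verbatim, because it never uses multilinearity. Property 3 also forces $H(Z_i)^2 = H(Z_i)$, which every element of $\F_2$ satisfies, so no inconsistency arises. None of the steps presents a real obstacle; the only care needed is that property 3 is always invoked with a factorization satisfying $\deg P + \deg R \le r$, and splitting off a single variable guarantees this.
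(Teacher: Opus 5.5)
Your proof is correct and follows the paper's argument: the paper also defines $u := \rho^{-1}(H(Z_1),\ldots,H(Z_t))$ and appeals to the fact that $1, Z_1, \ldots, Z_t$ generate $\F_{2}[\mathbf{Z}]$, leaving the remaining check as ``easy to verify''. Your induction on monomial degree (splitting off a single variable so that property~3 applies with $\deg P + \deg R \le r$) and the extension to all of $\F_{2}[\mathbf{Z}]^{\le r}$ by additivity are exactly the details the paper omits.
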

\begin{proof}[Proof of \Cref{claim:local-characterize-had}]
	For any $u \in \F_{q}$, it is easy to verify that $\had{r}{u}$ satisfies these three properties. Now suppose a function $H$ satisfies these three properties. This implies $H$ is fully determined by its values on $1,z_{1},\ldots,z_{t}$ (since they are the generators for $\F_{2}[\mathbf{Z}]$). Define $u \in \F_{q}$ as $\rho^{-1}(H(z_{1}),\ldots,H(z_{t}))$. Then it is easy to verify that $H = \had{r}{u}$. This finishes the proof of \Cref{claim:local-characterize-had}.
\end{proof}

\noindent
The above three properties, especially the first and the third properties, give us a local characterization of degree-$r$ Hadamard encodings. We design a syntactic test in \Cref{algo:syntactic-test-gen-Hadamard}, which is a \emph{robust} version of the above local characterizations. We also need an additional property from our syntactic test: if a function $H$ passes the syntactic test with high probability, then not only is it close to a degree-$r$ Hadamard encoding $\had{r}{u}$, but $H$'s restriction to degree $\leq 1$ polynomials is also close to a $\had{1}{u}$. This will be useful in designing our low-degree test and local correction algorithm over $\F_{2}$. We now describe the syntactic test.\\

\begin{algobox}

	\begin{algorithm}[H]
		\caption{Syntactic Test for Low-Degree Hadamard Codes: $\hadtest$ }
		\label{algo:syntactic-test-gen-Hadamard}

		\DontPrintSemicolon
		\KwIn{Oracle access to $H : \F_{2}[Z_{1},\ldots,Z_{t}]^{\leq r} \to \F_{2}$ and degree parameter $r$}

		\vspace{2mm}
		Sample $R, R' \sim \F_{2}[\mathbf{Z}]^{\leq r}$ and $L \sim \F_{2}[\mathbf{Z}]^{\leq 1}$ \;

		\vspace{2mm}

		\lIf{$H[R + R'] \neq H[R] + H[R']$ }{
			\Return{\reject}
		}

		\vspace{2mm}

		\lIf{$H[1 + R] - H[R] \neq 1$ }{
			\Return{\reject}
		}

		\vspace{2mm}
		\lIf{$H[L + R] - H[R] \neq H[L]$ }{
			\Return{\reject}
		}

		\vspace{2mm}
		\For{$2 \leq j \leq r$}{
			Sample $P_{j} \sim \F_{2}[\mathbf{Z}]^{\leq (j-1)}$ \;

			\vspace{2mm}

			\lIf{$H[L \cdot P_{j} + R] - H[R] \; \neq \;  (H[L + R] - H[R]) \cdot (H[P_j + R]-H[R])$ }{
				\Return{\reject}
			}}
		\Return{\accept}

	\end{algorithm}
\end{algobox}

\noindent
\begin{lemma}[Syntactic Test for Generalized Hadamard Code]\label{lemma:syntactic-test-gen-Hadamard}
	Fix a degree parameter $r \in \mathbb{N}$.
	There exists a \(\delta_{0}\),
	such that for every $\delta < \delta_{0}$ and for every $H: \F_{2}[\mathbf{Z}]^{\leq r} \to \F_{2}$, the following holds.
	\begin{enumerate}
		\item \textbf{Completeness:} If there exists a $u \in \F_{q}$ such that $H$ is $\had{r}{u}$, then $\hadtest^{H}_{r}$ returns $\accept$ with probability $1$ over the internal randomness of $\hadtest$.

		\item \textbf{Soundness:} Suppose that
		      \begin{align*}
			      \Pr[\hadtest^{H}_{r} \, = \, \reject] \; \leq \; \delta.
		      \end{align*}
		      Then there exists an element \(u \in \F_{q}\), such that:
		      \begin{itemize}
			      \item $\delta(H, \had{r}{u}) \leq \delta$
			      \item $\delta(H|_{\leq 1}, \had{1}{u}) \leq 3\delta$, where $H|_{\leq 1}$ is the restriction of $H$ to $\F_{2}[\mathbf{Z}]^{\leq 1}$.
		      \end{itemize}
	\end{enumerate}
	Furthermore, $\hadtest^{H}_{r}$ makes $2r + 4$
	oracle queries, uses $\bigO \paren{\binom{t}{\leq r}}$ bits of randomness, and runs in time $t^{\mathcal{O}(r)}$.
\end{lemma}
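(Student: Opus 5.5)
Completeness is a direct check against \Cref{claim:local-characterize-had}. If $H=\had{r}{u}$ then $H$ is $\F_2$-linear, $H[1]=1$, and $H[L\cdot P_j]=H[L]\cdot H[P_j]$ because $\deg L+\deg P_j\le j\le r$. Each test line, after subtracting $H[R]$ and using linearity, reduces to one of these identities.

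For soundness I would argue in three stages.

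\textbf{Stage 1 (linearity).} The first check is exactly the BLR test on the $\F_2$-vector space $\F_2[\bZ]^{\le r}$. Its rejection probability is at most $\delta$, so by \Cref{thm:BLR} there is a linear map $\widetilde H:\F_2[\bZ]^{\le r}\to\F_2$ with $\delta(H,\widetilde H)\le\delta$.

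We then self-correct. Since $R$ is uniform and independent of the other queried polynomial $S$, the quantity $H[S+R]-H[R]$ equals $\widetilde H(S)$ except with probability at most $2\delta$, by a union bound over the two queries, each individually uniform. So each later check is, up to an additive $O(\delta)$ error, a check on $\widetilde H$.
\begin{itemize}
\item The second check gives $\Pr[\widetilde H(1)\ne1]\le\delta+2\delta$. Since $\widetilde H(1)$ is a fixed bit, for $\delta$ small this forces $\widetilde H(1)=1$.
\item The third check compares $\widetilde H(L)$ with $H[L]$ for a uniform $L\in\F_2[\bZ]^{\le1}$. This gives $\delta(H|_{\le1},\widetilde H|_{\le1})\le 3\delta$, which is the source of the second bullet once we show $\widetilde H=\had{r}{u}$.
\item The $j$-th multiplicativity check yields
\begin{align*}
\Pr_{L,P_j}\big[\widetilde H(LP_j)\ne\widetilde H(L)\widetilde H(P_j)\big]\le 4\delta+\delta=O(\delta),
\end{align*}
with $L$ uniform of degree $\le1$ and $P_j$ uniform of degree $\le j-1$.
\end{itemize}

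\textbf{Stage 2 (upgrading multiplicativity to exact identities).} Fix $j$ and consider the bilinear form $B(L,P)=\widetilde H(LP)-\widetilde H(L)\widetilde H(P)$. Since $\widetilde H$ is linear, the map $(L,P)\mapsto \widetilde H(LP)$ is bilinear. The product $\widetilde H(L)\widetilde H(P)$ is also bilinear. Hence $B$ is a bilinear form over $\F_2$.

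A nonzero bilinear form is nonzero on at least a $1/4$ fraction of pairs. Here $B(L,\cdot)$ is a linear functional that is nonzero for at least half of the $L$, and for each such $L$ it is nonzero on half of the $P$. So $O(\delta)<1/4$ forces $B\equiv0$.

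Inducting on $j=2,\dots,r$ then gives $\widetilde H(LP)=\widetilde H(L)\widetilde H(P)$ exactly for all $\deg L\le1$ and $\deg P\le j-1$.

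\textbf{Stage 3 (identifying $u$).} Set $u=\rho^{-1}(\widetilde H(Z_1),\dots,\widetilde H(Z_t))$. Using $\widetilde H(1)=1$, linearity, and repeated multiplication by degree-$1$ factors, every monomial of degree $\le r$ satisfies $\widetilde H(Z_{i_1}\cdots Z_{i_k})=\prod_s \widetilde H(Z_{i_s})$. By \Cref{claim:local-characterize-had} (or directly by linearity), $\widetilde H=\had{r}{u}$. This gives $\delta(H,\had{r}{u})\le\delta$ and $\delta(H|_{\le1},\had{1}{u})\le3\delta$.

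\textbf{Main obstacle.} The delicate step is the bookkeeping of constants. The target bounds are exactly $\delta$ and $3\delta$. Stage 1 yields the first bound directly from BLR, and the $3\delta$ comes from the third check plus the two-query correction error. I must ensure the self-correction error terms do not accumulate beyond these bounds. If they do, I would instead compare $H[L]$ with $H[L+R]-H[R]$ directly, using the fact that the third check rejects with probability at most $\delta$ and $H[L+R]-H[R]=\widetilde H(L)$ except with probability $2\delta$. The bilinear-form argument in Stage 2 is the conceptual core, but it only requires $\delta_0$ small enough relative to $1/4$.

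The query count is $2r+4$: three queries for the first check, two new ones for the second and third checks, one for $H[L]$, and two per iteration of the loop.
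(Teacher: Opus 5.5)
Your proof is correct and follows the paper's overall skeleton. BLR yields a linear $\widetilde H$ with $\delta(H,\widetilde H)\le\delta$. The unit check forces $\widetilde H(1)=1$. The degree-$1$ check together with two-query self-correction gives the $3\delta$ bound. Finally, the fact that a nonzero bilinear form over $\F_2$ is nonzero on at least a $1/4$ fraction of inputs kills the multiplicativity defect.

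Where you differ is in how that last step is organized. The paper inducts on $k$: assuming $\mathcal{L}|_{\le k-1}=\had{k-1}{u}$, it compares $\mathcal{L}(LP)$ with $\had{r}{u}(LP)$. It then needs the inductive hypothesis to rewrite $\had{r}{u}(LP)$ as $\mathcal{L}(L)\mathcal{L}(P)$ before invoking the $j=k$ check. Your form $B(L,P)=\widetilde H(LP)-\widetilde H(L)\widetilde H(P)$ mentions only $\widetilde H$. So no induction is needed (the ``inducting on $j$'' in your write-up is superfluous), and $u$ is identified only afterwards by peeling degree-$1$ factors off monomials.

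In fact Stage 3 uses $B\equiv 0$ only for $j=r$, since $P$ ranging over $\F_2[\bZ]^{\le r-1}$ already covers all lower degrees. Your argument therefore shows the iterations $j<r$ of the loop are not needed for soundness. You also get $5\delta$ rather than the paper's $7\delta$ by sharing the query $H[R]$, hence $\delta_0=1/20$ instead of $1/28$. The paper's version instead builds $\mathcal{L}|_{\le k}=\had{k}{u}$ degree by degree, closer to the ALMSS quadratic analysis it generalizes.

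The concern in your ``main obstacle'' paragraph is unfounded: the estimates you already wrote give exactly $\delta$ and $3\delta$.
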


\noindent
The proof of \Cref{lemma:syntactic-test-gen-Hadamard} is analogous to the analysis of the syntactic test for the quadratic Hadamard code in \cite[Proposition 34]{ALMSS}. We present a proof for the sake of completeness in \Cref{app:syntactic-test-gen-Hadamard}.

\paragraph*{Self-Correction.} We are interested in the following task.
\begin{center}
	\textit{\textcolor{emphcolor}{Given oracle access to a function $H: \F_{2}[\mathbf{Z}]^{\leq r} \to \F_{2}$ that is close to degree-$r$ Hadamard encoding of $u \in \F_{q}$, compute $\had{r}{u}(P)$ for a particular $P$.}}
\end{center}
Firstly, for the task to make sense, $\delta$ should be small enough to guarantee the existence of a unique $u$.
Using \Cref{obs:delta-fn-had-approx}, we get that for any two distinct $u, u' \in \F_{q}$,
their respective degree-$r$ Hadamard encodings differ on at least $1/2$-fraction,
i.e., $\delta(\had{r}{u}, \had{r}{u'}) \geq 1/2$.
Thus if $\delta < 1/4$,
there can be at most a single $u$ whose degree-$r$ Hadamard encoding is $\delta$-close to $H$.\newline
Suppose $H$ is $\delta$-close to $\had{r}{u}$,
for some $u \in \F_{q}$ ($u$ is unknown) and we are interested in computing $\had{r}{u}(P)$ using queries to $H$.
For this, we have the following simple algorithm:

\begin{center}
	\texttt{\textcolor{emphcolor}{Sample a uniformly random polynomial
			$R \sim \F_{2}[\mathbf{Z}]^{\leq r}$
			and compute
			$H[P + R] - H[R]$.}}
\end{center}
Since $R$ is uniformly random, $P+R$ is also uniformly random in $\F_{2}[\mathbf{Z}]^{\leq r}$.
Thus we have,
\begin{align*}
	\Pr_{R \sim \F_{2}[\mathbf{Z}]^{\leq r}}[H[R] \neq \had{r}{u}(R)] \, \leq \, \delta
	\quad \text{ and } \quad
	\Pr_{R \sim \F_{2}[\mathbf{Z}]^{\leq r}}[H[P+R] \neq \had{r}{u}(P+R)] \, \leq \, \delta.
\end{align*}
By union bound, $H$ agrees with $\had{r}{u}$ on both $R$ and $P+R$ with probability at least $1-2\delta$.
Thus $H[P+R] - H[R]$ equals $\had{r}{u}(P)$ with probability at least $1-2\delta$.

\section{Low-Degree Test over Binary Alphabet}\label{sec:ldt}
In this section, we will discuss our new low-degree test that works over the alphabet $\F_{2}$ (described later in \Cref{algo:new-low-deg-test-binary}). In the following subsection, \Cref{subsec:warmup-ldt}, we develop some intuition behind our final low-degree test. At the beginning of \Cref{subsec:ldt-F2}, we give more intuition behind our test, before giving the formal algorithm in \Cref{algo:new-low-deg-test-binary}. We state its guarantee in \Cref{thm:new-low-degree-test} and then prove it.\newline
As mentioned in the introduction,
we note that the low-degree test is the bottleneck in getting a PCP without proof compositions.
We overcome this by using the two encodings from the previous section.
Before stating the algorithm, which might seem cryptic at first glance, we explain the idea behind it.
As a warmup, we give a low-degree test over \emph{alphabet $\F_{q}$} (instead of directly going to alphabet $\F_{2}$).
Throughout this section, degree parameter $d$ should be thought of as $\mathrm{poly}(\log n)$, i.e., it's a growing parameter.

\subsection{Warmup: Low-Degree Test over alphabet $\F_{q}$}\label{subsec:warmup-ldt}
The goal of this subsection is to
\textit{design a low-degree test with constant query complexity over alphabet $\F_{q}$}.
Recall that $\tabldt$ (see \Cref{algo:low-degree-test})
is a low-degree test with constant query complexity but over the alphabet $\F_{q}^{d+1}$.
The reason the alphabet is $\F_{q}^{d+1}$ is that every entry of the lines table is a degree-$d$ univariate polynomial, written as a single element in $\F_{q}^{d+1}$ (and that's why the query complexity is constant).
Observe that in $\tabldt$, after a line is queried, we only need evaluation of that degree-$d$ univariate polynomial at a single point.
\begin{center}
	\textit{\textcolor{emphcolor}{The idea is to encode degree-$d$ univariate polynomials such that any evaluation of the polynomial can be inferred by reading constant-many locations of the encoding (even in presence of few errors).}}
\end{center}
We want that any evaluation can be inferred from constant-many locations so that the query complexity remains constant.
We would also need that the encoding is locally testable with constant queries.
To do this, we use the set-multilinear encoding $\Psi_{d,c,m_{1}}$ from \Cref{defn:degree-redn}.
Here, think of $c$ as a constant and $m_{1}$ as $\Theta(d^{1/c})$.
The encoding maps a degree-$d$ univariate polynomial $L(Y)$ to a constant-degree multivariate polynomial $\Psi_{d,c,m_{1}}(L)$.
Why can any evaluation of $L$ be inferred from constant-many evaluations of $\Psi_{d,c,m_{1}}(L)$?
This is because $L(\lambda)$ is equal to $\Psi_{d,c,m_{1}}(L)$ evaluated at the point $\Phi_{c,m_{1}}(\lambda)$ (see \Cref{claim:eval-set-mult-degree-redn}) and the fact that any constant-degree polynomial's evaluation at a point can be recovered from its evaluations on constant-many other points, even in the presence of error (see \Cref{lemma:self-correction-degree-reduction}). Since $d,c,m_{1}$ are clear in our context, we will simply denote these maps by $\setml$, $\Phi$, and $\Psi^{\ast}$ (recall $\setml^{\ast}$ from \Cref{defn:degree-redn}). Next, we describe the expected oracles and the queries.

\paragraph*{Expected Oracles and Queries}Suppose $P \in \mathcal{P}_{d}(m, \F_{q})$ is a degree-$d$ polynomial and let $P_{\mathrm{lines}}$ from $\F_{q}^{2m}$ to $\F_{q}^{d+1}$ be its lines table. Recall that $\setml^{\ast}(P_{\mathrm{lines}})$ is the following function:
\begin{gather*}
	\setml^{\ast}(P_{\mathrm{lines}}): \F_{q}^{2m} \times \F_{q}^{cm_{1}} \to \F_{q} \\
	((\mathbf{a}, \mathbf{b}), \mathbf{u}) \; \mapsto \; \Psi(P_{\mathrm{lines}}(\mathbf{a}, \mathbf{b}))(\mathbf{u}).
\end{gather*}
By definition of $\setml$, for every pair $(\mathbf{a}, \mathbf{b})$, $\setml(P_{\mathrm{lines}}(\mathbf{a}, \mathbf{b}))(\mathbf{X})$ is a degree-$c$ polynomial in $\mathbf{X}$-variables.
Suppose the verifier has oracle access to
$f: \F_{q}^{m} \to \F_{q}$ and to
$g: \F_{q}^{2m} \times \F_{q}^{cm_{1}} \to \F_{q}$
and wants to test whether $f$ is a degree-$d$ polynomial or far from every degree-$d$ polynomial.
The idea is to simulate $\tabldt$. Following $\tabldt$ (see \Cref{algo:low-degree-test}) and \Cref{claim:eval-set-mult-degree-redn}, the verifier samples a random pair $(\mathbf{a}, \mathbf{b}) \sim \F_{q}^{2m}$, samples a random $\lambda \sim \F_{q}^{\times}$ and would like to check whether
\begin{equation}\label{eqn:warmup-low-degree-condition}
	f[\mathbf{a} + \lambda \cdot \mathbf{b}]
	\; \text{ is equal to } \;
	g[(\mathbf{a},\mathbf{b}), \, \Phi(\lambda)].
\end{equation}
To make the above check work, the verifier needs to make a couple of modifications.
\begin{itemize}
	\item \emph{Syntactic Test on $g$:}
	      The above check is trying to simulate the $\tabldt$ test.
	      One of the reasons why $\tabldt$ works is that the lines table consists of degree-$d$ univariate polynomials, and not arbitrary functions.
	      So here also, we need to first check that
	      $g$ is obtained from applying $\Psi$ to a lines table.
	      In other words, we have to test for every pair $(\mathbf{a}, \mathbf{b})$,
	      the function
	      $g[(\mathbf{a},\mathbf{b}), \, \cdot]: \F_{q}^{cm_{1}} \to \F_{q}$ is a degree-$c$ polynomial or not.
	      Equivalently, the verifier would like to do a low-degree test on
	      $g[(\mathbf{a},\mathbf{b}), \, \cdot]$ with degree parameter $c$,
	      for uniformly random $(\mathbf{a}, \mathbf{b})$.
	      Since $c$ is a constant, we simply use $\stdldt$ for this
	      (see \Cref{lemma:syntactic-test-degree-reduction})
	      and perform this test in constant queries.
	      Note that here we are using the local testability of our encoding $\Psi$,
	      which is essentially the local testability of Reed-Muller codes.

	\item \emph{Query point is not uniformly random:}
	      If $g[(\mathbf{a},\mathbf{b}), \cdot]$ passes the above syntactic test,
	      then we are only guaranteed that
	      $g[(\mathbf{a},\mathbf{b}), \cdot]$ is close to a degree-$c$ polynomial
	      $Q_{\mathbf{a}, \mathbf{b}}: \F_{q}^{cm_{1}} \to \F_{q}$.
	      To do the check in \Cref{eqn:warmup-low-degree-condition},
	      the verifier wants $Q_{\mathbf{a}, \mathbf{b}}(\Phi(\lambda))$,
	      but has oracle access to only $g[(\mathbf{a}, \mathbf{b}), \cdot]$.
	      Since the point $\Phi(\lambda)$ is not a uniformly random point in $\F_{q}^{cm_{1}}$,
	      querying $g[(\mathbf{a}, \mathbf{b}), \cdot]$
	      might not be equal to $Q_{\mathbf{a}, \mathbf{b}}(\Phi(\lambda))$
	      with high probability.
	      Thus the verifier self-corrects $g[(\mathbf{a},\mathbf{b}), \cdot]$
	      at the point $\Phi(\lambda)$.
	      Since we are self-correcting a function close to a constant-degree polynomial,
	      we can use $\stdcorr$ for this (see \Cref{lemma:self-correction-degree-reduction})
	      and perform this self-correction with constant queries.
\end{itemize}

\noindent
We are now ready to state our low-degree test over the alphabet $\F_{q}$.

\begin{algorithm}[H]
	\caption{Warmup: Low-Degree Test over $\F_{q}$ alphabet}
	\label{algo:warmup-low-deg-test}

	\DontPrintSemicolon

	\KwIn{Degree parameter $d$, Oracles $f: \F_{q}^{m} \to \F_{q}$, $g : \F_{q}^{2m} \times \F_{q}^{cm_{1}} \to \F_{q}$}

	\vspace{1mm}

	Sample $(\mathbf{a}, \mathbf{b}) \sim \F_{q}^{2m}$, sample $\mathbf{u}, \mathbf{v} \sim \F_{q}^{cm_{1}}$, sample $\lambda \sim \F_{q}^{\times}$ \;

	\vspace{1mm}

	\lIf{$\stdldt^{g[(\mathbf{a}, \mathbf{b}), \cdot]}_{c}(; \mathbf{u}, \mathbf{v}) \neq 0$}{\Return{\reject}}

	\lIf{ $\stdcorr^{g[(\mathbf{a}, \mathbf{b}), \cdot]}_{c}(\Phi(\lambda); \mathbf{u}) \neq f[\mathbf{a} + \lambda \cdot \mathbf{b}]$ }{\Return{\reject}	}

	\Return{\accept}

\end{algorithm}

In the next subsection, we describe the low-degree test that emulates \Cref{algo:low-degree-test} over the binary alphabet. For that, we will extend the above warm-up test by replacing the checks with their low-degree Hadamard encodings.

\subsection{Low-Degree Test over $\F_{2}$}\label{subsec:ldt-F2}
We are now ready to describe our \textit{ low-degree test with constant query complexity over the alphabet $\F_{2}$}.
The warmup-test (\Cref{algo:warmup-low-deg-test}) in the previous subsection is over the alphabet $\F_{q}$,
because the oracles are $\F_{q}$-functions. Note that the checks in the warmup-test are checking whether two elements in $\F_{q}$ agree or not.
\begin{center}
	\textit{\textcolor{emphcolor}{The idea is to encode elements of $\F_{q}$ over alphabet $\F_{2}$ with good distance.}}
\end{center}
By good distance, we mean a constant relative distance.
If the encoding has constant distance, then we can check whether two $\F_{q}$ elements are equal or not by reading constant-many bits of their respective encodings.
To do this, we use degree-$r$ Hadamard
encoding\footnote{
	The reason we are using degree-$r$ Hadamard encoding
	instead of Hadamard encoding is that in a PCP for a $\mathsf{NP}$-complete problem, we would need to check degree-$r$ relations on the queried bits,
	see \Cref{par:degr}.
}
$\had{r}{\cdot}$ (see \Cref{defn:deg-r-Hadamard}). For us, $r$ would always be an absolute constant\footnote{In fact, for our construction of PCP, $r=3$ suffices.}.
As before, we would also need that this encoding is locally testable,
and we know degree-$r$ Hadamard encoding is locally testable (see \Cref{algo:syntactic-test-gen-Hadamard}).
The expected oracles will be degree-$r$ Hadamard encoding applied to each of the oracles used in the warmup-test
(i.e., Hadamard encoding applied to each evaluation of $\F_{q}$-functions)
and the queries will be random entries of degree-$r$ Hadamard encoding of appropriate $\F_{q}$ elements.

\paragraph*{Expected Oracles and Queries}
Suppose $P \in \mathcal{P}_{d}(m, \F_{q})$ is a degree-$d$ polynomial and let $\setml^{\ast}(P_{\mathrm{lines}})$ be function defined in the previous subsection. The expected oracles are:
\begin{gather*}
	\had{r}{P} : \F_{q}^{m} \times \F_{2}[\mathbf{Z}]^{\leq r}  \to \F_{2} \\
	\had{r}{\setml^{\ast}(P_{\mathrm{lines}})} : \F_{q}^{2m} \times \F_{q}^{cm_{1}} \times \F_{2}[\mathbf{Z}]^{\leq r} \to \F_{2}.
\end{gather*}
Suppose the verifier has access to oracles $f: \F_{q}^{m} \times \F_{2}[\mathbf{Z}]^{\leq r} \to \F_{2}$ and $f' : \F_{q}^{2m} \times \F_{q}^{cm_{1}} \times \F_{2}[\mathbf{Z}]^{\leq r} \to \F_{2}$. As mentioned above, to check whether two $\F_{q}$ elements are equal or not, due to good distance, it is sufficient to check if their respective degree-$r$ Hadamard encodings agree on a random coordinate or not. With this, the idea is to do as in the warmup-test, but at a random coordinate in the Hadamard encoding. In other words, if the verifier wants to check whether $g_{1}(\mathbf{u}) = g_{2}(\mathbf{u})$ for two $\F_{q}$-functions $g_{1}, g_{2}$, then it suffices to check whether $\had{r}{g_{1}(\mathbf{u})}(L) = \had{r}{g_{2}(\mathbf{u})}(L)$ for a random $L \sim \F_{2}[\mathbf{Z}]^{\leq r}$.\\

\noindent
To make the above check work, the verifier needs to make a few modifications:
\begin{itemize}
	\item \emph{Hadamard Syntactic Test:} The above idea only works if the oracles are actual degree-$r$ Hadamard encodings. So the verifier would like to do a degree-$r$ Hadamard test on $f[\mathbf{a}, \cdot]$ and $f'[(\mathbf{a}, \mathbf{b}), \mathbf{u}, \mathbf{Z}]$ for random $\mathbf{a}, \mathbf{b}, \mathbf{u}$. For this, we use $\hadtest$ (see \Cref{algo:syntactic-test-gen-Hadamard}), and since $r$ is a constant, this can be performed with a constant number of queries.

	\item \emph{Linearity holds for restricted queries:}
	      To do an analogue of $\stdldt$ and $\stdcorr$, a natural step would be to choose a random $Q \sim \F_{2}[\mathbf{Z}]^{\leq r}$ and query with respect to $Q$'s coordinate in the Hadamard encodings.
	      That is, say if $\stdldt$ queries $h(\mathbf{u})$ for some function $h$, then we query $\had{r}{h}(\mathbf{u}, Q)$.
	      However, this does not work directly because in $\stdldt$ and $\stdcorr$, we have to take a linear combination of the queries, but for arbitrary
	      $\alpha, \beta \in \F_{q}$,
	      $\had{r}{\alpha}(Q) + \had{r}{\beta}(Q)$
	      is not necessarily equal to
	      \(\had{r}{\alpha + \beta}(Q)\).
	      This however holds when \(Q\) is a homogeneous linear polynomial.
	      Thus the queries have to be with respect to a random homogeneous linear polynomial.
\end{itemize}

\noindent
We now give the low-degree test over $\F_{2}$.\\

\begin{algobox}

	\begin{algorithm}[H]
		\caption{Low-Degree Test over the alphabet $\F_{2}$: $\vldt$}
		\label{algo:new-low-deg-test-binary}

		\DontPrintSemicolon

		\KwIn{Parameters $d,r$, Oracles $f: \F_{q}^{m} \times \F_{2}[\mathbf{Z}]^{\leq r} \to \F_{2}$, $f' : \F_{q}^{2m} \times \F_{q}^{cm_{1}} \times \F_{2}[\mathbf{Z}]^{\leq r} \to \F_{2}$}

		\vspace{2mm}

		Sample
		$(\mathbf{a}, \mathbf{b}) \sim \F_{q}^{2m}$,
		$\mathbf{u}, \mathbf{v} \sim \F_{q}^{cm_{1}}$,
		$\lambda \sim \F_{q}^{\times}$, and
		$L \sim \hpol$
		\vspace{1mm}

		Run $\hadtest^{f[\mathbf{a}, \cdot]}_{r}$ and $\hadtest^{f'[(\mathbf{a}, \mathbf{b}), \mathbf{u}, \cdot]}_{r}$ \;

		\vspace{2mm}

		Check if $f'[(\mathbf{a}, \mathbf{b}), \mathbf{u}, L]+\sum_{i=1}^{c+1} f'[(\mathbf{a}, \mathbf{b}), \mathbf{u} + \zeta^{i} \mathbf{v}, L] \; \eqq \; 0$ \;

		\vspace{2mm}

		Check if
		$\sum_{i=1}^{c+1}  f'[(\mathbf{a},\mathbf{b}), \Phi(\lambda) + \zeta^{i} \mathbf{v}, L] \; \eqq \; f[\mathbf{a} + \lambda \mathbf{b}, L]$, where $\Phi$ is as in~\Cref{eqn:reverse-degree-redn}.

		\vspace{2mm}

		\lIf{any of the above tests returns \reject}{
			\Return{\reject}
		}

		\Return{\accept}

	\end{algorithm}
\end{algobox}

\noindent
\begin{theorem}[Low-Degree Test over $\F_{2}$]\label{thm:new-low-degree-test}
	There exists absolute constants $C,\delta_{0} > 0$
	such that for every choice of parameters
	$m, m_{1},c,d, d',q,r \in \mathbb{N}$
	satisfying
	\(d' = 4cd\),
	$m_{1} \geq c \geq 2$, $(m_{1}-1)^{c} \leq d < m_{1}^{c}$, \(q\) is a power of $2$ such that there exists an element $\zeta\in \F_q^\times$ of order $c+1$ and
	\(q> Cd'^{3}\)
	and \(\delta < \delta_{0} \),
	the following holds:
	\begin{enumerate}
		\item \textbf{Completeness:}
		      If $f = \had{r}{P}$ for a degree-$d$ polynomial
		      $P \in \mathcal{P}_{d}(m,\F_{q})$ and $f' = \had{r}{P'}$ where $P'$ is
		      $\setml^*_{d, c,m_{1}}\paren{P_{\mathrm{lines}}}$
		      (recall \(\setml^*_{d, c,m_{1}}\) from \Cref{defn:degree-redn}),
		      then $\vldt^{f,f'}_{d,r}$ returns $\accept$ with probability $1$.
		\item \textbf{Soundness:}
		      For every $f,f'$ the following holds. If $\vldt^{f,f'}_{d,r}$ returns $\reject$ with probability at most $\delta$,
		      then there exists a degree-$d'$ polynomial $P \in \mathcal{P}_{d'}(m,\F_{q})$ such that
		      \begin{align*}
			      \delta\paren{ f, \; \had{r}{P} }, \; \delta\paren{ f|_{\leq 1}, \; \had{1}{P} } \; \leq \;
			      \mathcal{O}_{c}(\delta).
		      \end{align*}
	\end{enumerate}
	Furthermore, $\vldt^{f,f'}_{d,r}$ makes $\bigO_r(c)$ queries to the oracles $(f,f')$, uses $\bigO_r(m \log q + cm_{1} \log q+(\log q)^r)$ bits of randomness, and runs in time $\mathrm{poly}(m_1^c, \log q)$.
\end{theorem}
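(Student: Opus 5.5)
Completeness is a direct check. For honest oracles, every $f[\mathbf{a},\cdot]$ and $f'[(\mathbf{a},\mathbf{b}),\mathbf{u},\cdot]$ is a genuine degree-$r$ Hadamard codeword, so $\hadtest$ accepts by \Cref{lemma:syntactic-test-gen-Hadamard}. For homogeneous linear $L$, $\had{r}{\cdot}(L)$ is $\F_2$-linear in the field element. Hence the two sums equal $\had{1}{\cdot}(L)$ applied to the corresponding $\F_q$-linear combinations. Those combinations vanish, or equal $P(\mathbf{a}+\lambda\mathbf{b})$, by \Cref{fact:local-correction-via-lines}, by $\deg \setml(P_{\mathrm{lines}}(\mathbf{a},\mathbf{b}))\le c$, and by \Cref{claim:eval-set-mult-degree-redn}.

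For soundness I would decode back to $\F_q$ and reduce to \Cref{thm:low-degree-testing}.

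\textbf{Step 1 (decoding).} For each $\mathbf{a}$ let $F(\mathbf{a})\in\F_q$ be the element whose Hadamard encoding is nearest to $f[\mathbf{a},\cdot]$. Define $G((\mathbf{a},\mathbf{b}),\mathbf{u})$ in the same way from $f'$. Let $\varepsilon_{\mathbf{a}}$ be the rejection probability of $\hadtest$ on $f[\mathbf{a},\cdot]$. By Markov and \Cref{lemma:syntactic-test-gen-Hadamard}, averaging gives
\[
\E_{\mathbf{a}}\big[\delta(f[\mathbf{a},\cdot]|_{\le1},\had{1}{F(\mathbf{a})})\big] = \bigO(\delta).
\]
When $\varepsilon_{\mathbf{a}}\ge\delta_0$ the distance is trivially at most $1\le\varepsilon_{\mathbf{a}}/\delta_0$. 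The same bounds hold for $f'$, both for the full degree-$r$ restriction and for the $\le 1$ restriction.

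\textbf{Step 2 (replacing queries).} Each of the $2c+3$ queries made in the two checks is marginally a uniform point paired with a uniform $L\sim\hpol$. The point $\mathbf{u}+\zeta^i\mathbf{v}$ is uniform, and so is $\Phi(\lambda)+\zeta^i\mathbf{v}$, because $\mathbf{v}$ is uniform and $\zeta^i\ne0$. Each query therefore disagrees with $\had{1}{F}$ or $\had{1}{G}$ with probability $\bigO(\delta)$. A union bound over the queries shows that the checks run with the decoded values are violated with probability $\bigO_c(\delta)$. Using linearity in $L$ and \Cref{obs:delta-fn-had-approx}, an $\F_q$-identity failure is caught by a random $L$ with probability $1/2$. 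So the warmup test (\Cref{algo:warmup-low-deg-test}) on $(F,G)$ rejects with probability $\bigO_c(\delta)$.

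\textbf{Step 3 (warmup analysis).} This is the main step. Let $\delta_{\mathbf{a},\mathbf{b}}$ be the rejection probability of the $\stdldt$ check on $G[(\mathbf{a},\mathbf{b}),\cdot]$. When it is below the threshold of \Cref{lemma:syntactic-test-degree-reduction}, there is a degree-$c$ polynomial $Q_{\mathbf{a},\mathbf{b}}$ within distance $2\delta_{\mathbf{a},\mathbf{b}}$. \Cref{lemma:self-correction-degree-reduction} then shows that the corrector returns $Q_{\mathbf{a},\mathbf{b}}(\Phi(\lambda))$ except with probability $(c+1)2\delta_{\mathbf{a},\mathbf{b}}$.

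Define the univariate $T_{\mathbf{a},\mathbf{b}}(\lambda):=Q_{\mathbf{a},\mathbf{b}}(\Phi(\lambda))$. A monomial of total degree $\le c$ in the variables $X_{j,k}$ maps to $\lambda^e$ with $e\le c(m_1-1)m_1^{c-1}$. Since $(m_1-1)^c\le d$ and $m_1\ge c$, we have $m_1^c\le(1+1/(m_1-1))^c d\le e\,d\cdot O(1)$, so $e\le 4cd=d'$; this step I would verify carefully.

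Set $f'':=T$ as a lines table with entries in $\F_q^{d'+1}$, defined arbitrarily on the bad pairs $(\mathbf{a},\mathbf{b})$. Markov over $(\mathbf{a},\mathbf{b})$ bounds the bad pairs by $\bigO(\delta)$. Then $\tabldt^{F,f''}_{d'}$ rejects with probability $\bigO_c(\delta)$. Since $q>Cd'^3$, \Cref{thm:low-degree-testing} gives a degree-$d'$ polynomial $P$ with $\delta(F,P)=\bigO_c(\delta)$.

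\textbf{Step 4 (conclusion).} Combining Step 3 with Step 1 and \Cref{obs:delta-fn-had-approx} bounds both $\delta(f,\had{r}{P})$ and $\delta(f|_{\le1},\had{1}{P})$ by $\bigO_c(\delta)$, provided $\delta_0$ is small enough that all the threshold conditions apply.

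The resource bounds follow by counting: $\hadtest$ makes $2(2r+4)$ queries, and the two checks make $2c+3$ more.
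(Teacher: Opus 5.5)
Your proposal is correct and is essentially the paper's proof. You decode $f$ pointwise to $\mathrm{Corr}_f$, extract per-line degree-$c$ polynomials $Q_{\mathbf{a},\mathbf{b}}$ with $\deg(Q_{\mathbf{a},\mathbf{b}}\circ\Phi)\le c\,m_1^c\le 4cd$ (your bound checks out, since $(m_1/(m_1-1))^c\le (c/(c-1))^c\le 4$), show $\tabldt^{\mathrm{Corr}_f,\,Q\circ\Phi}_{d'}$ rejects with probability $\bigO_c(\delta)$, and finish with \Cref{thm:low-degree-testing} and the triangle inequality. The only difference is organizational: you decode $f'$ globally to an $\F_q$-valued $G$, run the warmup analysis over $\F_q$, and discard bad lines by Markov, whereas the paper keeps the correction step in the Hadamard domain through the per-line \Cref{lemma:std-ldt-hadamard-encoding} (which it reuses for $\corr$); also, the two checks make $2c+4$ queries, not $2c+3$.
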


\noindent
\begin{remark}
	In the soundness condition of \Cref{thm:new-low-degree-test}, the guarantee is only that $f$ is close to a degree-$d'$ polynomial, where \emph{$d'$ is more than $d$}. Note that in \Cref{thm:low-degree-testing}, the soundness guarantee was that the oracle is close to a degree-$d$ polynomial, which is stronger than the above guarantee. The reason is roughly as follows. In \Cref{thm:low-degree-testing}, every entry of the lines table has a degree-$d$ univariate polynomial. In $\vldt$, Line $3$ checks that every entry of the supposed lines table is a degree-$c$ polynomial in $cm_{1}$ variables, and then after composing it with $\Phi$, we get degree-$d'$ univariate polynomial. Thus while reducing \Cref{thm:new-low-degree-test} to \Cref{thm:low-degree-testing}, we are only able to guarantee that every entry of the lines table has a degree-$d'$ univariate polynomial. We can get degree-$d$ by having additional tests that also tests that the degree-$c$ polynomial is set-multilinear (i.e. the monomials have same degree pattern as expected). We skip these tests as it is not crucial for proving the PCP Theorem.
\end{remark}

\paragraph*{Proof Overview}As mentioned earlier, $\vldt$ is a simulation of $\tabldt$ via encodings $\Psi$ and $\had{r}{\cdot}$. Completeness follows directly from the definitions. The harder task is to show the soundness of $\vldt$, and for that we will fall back on the soundness of $\tabldt$. In simple words, we will define two functions $\mathrm{Corr}_{f}: \F_{q}^{m} \to \F_{q}$ (read as ``corrected $f$'') and $F : \F_{q}^{2m} \to \F_{q}^{d'+1}$ using the oracles $f$ and $f'$. Here $\mathrm{Corr}_{f}, F$ are proxies of $f,f'$ on which we will apply $\tabldt$. We will argue that if $\vldt$ returns $\reject$ with small probability, then $\tabldt^{\mathrm{Corr}_{f},F}$ also returns $\reject$ with small probability. Then \Cref{thm:low-degree-testing} guarantees the existence of a low-degree polynomial $P$ that is close to $\mathrm{Corr}_{f}$, and then we also get $\had{r}{P}$ is close to the input oracle $f$. We now discuss the proof in a bit more detail. Assume $\vldt$ returns $\reject$ with probability $\leq \delta$.
\begin{itemize}
	\item We define $\mathrm{Corr}_{f}$ pointwise, i.e., $\mathrm{Corr}_{f}(\mathbf{a})$ is chosen such that its Hadamard encoding minimizes the distance to $f[\mathbf{a}, \cdot]$. To define $F$, for every pair $(\mathbf{a}, \mathbf{b})$, we first choose a degree-$c$ polynomial $Q_{\mathbf{a}, \mathbf{b}}$ in $cm_{1}$ variables whose Hadamard encoding minimizes the distance to $f'[(\mathbf{a}, \mathbf{b}, \cdot, \cdot)]$, and then apply the $\Phi$ map on $Q_{\mathbf{a}, \mathbf{b}}$.
	\item $\hadtest^{f[\mathbf{a}, \cdot]}_{r}$ returns $\reject$ with small probability implies $f$ is close to $\had{r}{\mathrm{Corr}_{f}}$.
	      We prove this in \Cref{claim:ldt-f-close-had}.
	\item $\hadtest^{f'[(\mathbf{a}, \mathbf{b}), \cdot, \cdot]}_{r}$
	      and Line $3$ together return $\reject$ with small probability implies that for most of the pairs $(\mathbf{a}, \mathbf{b})$, Hadamard encoding of $Q_{\mathbf{a}, \mathbf{b}}$ is close to
	      $f'[(\mathbf{a}, \mathbf{b}), \cdot, \cdot]$.
	      We prove this in \Cref{lemma:std-ldt-hadamard-encoding}.\newline
	      Since $Q_{\mathbf{a}, \mathbf{b}}$ is of degree-$c$, $Q_{\mathbf{a}, \mathbf{b}} \circ \Phi$ is a degree-$d'$ univariate polynomial. So we get for every $(\mathbf{a}, \mathbf{b})$, $F(\mathbf{a}, \mathbf{b})$ is a degree-$d$' univariate polynomial.
	\item From the above two points and \Cref{obs:delta-fn-had-approx}, Line $4$ is roughly the same as checking if $\mathrm{Corr}_{f}(\mathbf{a} + \lambda \mathbf{b}) \eqq F(\mathbf{a}, \mathbf{b})(\lambda)$. \Cref{thm:low-degree-testing} and the second item finishes the soundness analysis.
\end{itemize}

Before we delve into the proof of \Cref{thm:new-low-degree-test},
it will be convenient for us to prove the following lemma,
which will be used as an intermediate step in this and later proofs.\newline
\Cref{lemma:std-ldt-hadamard-encoding}
says that if a function
$g: \F_{q}^{k} \times \F_{2}[\mathbf{Z}]^{\leq r} \to \F_{2}$
(i.e., the same domain and codomain as a degree-$r$ Hadamard encoding of a function on $\F_{q}^{k}$)
passes $\hadtest^{g}_{r}$ and an analogue of $\stdldt$ with high probability,
then the local correction of \(g\) at any point \(\bu \in \F_{q}^{k}\)
will, with high probability,
return the value of the nearest polynomial.\\

\begin{lemma}\label{lemma:std-ldt-hadamard-encoding}
	Let $g: \F_{q}^{k} \times \F_{2}[\mathbf{Z}]^{\leq r} \to \F_{2}$ be a function,
	and let \(P: \F_{q}^{k} \to \F_{q}\) be the degree-\(c\) polynomial,
	whose degree-\(1\) Hadamard encoding is closest to \(g\). Let $\zeta_0:=0$ and $\zeta_i:=\zeta^i$ for $i>0$.
	Suppose $g$ satisfies the following two conditions:
	\begin{align*}
		\Pr_{\mathbf{u}}[\hadtest^{g[\mathbf{u}, \cdot]}_{r} \, \text{ returns } \, \reject] \; \leq \; \delta_{1} \; \text{ and } \;
		\Pr_{\mathbf{u}, \mathbf{v}, L \sim \hpol}
		\brac{\sum_{i=0}^{c+1} g[\mathbf{u} + \zeta_{i} \mathbf{v}, L ] \, \neq \, 0}
		\; \leq \; \delta_{1},
	\end{align*}
	where the first probability is also over the internal randomness of $\hadtest$.
	Then for every $\mathbf{u} \in \F_{q}^{k}$, we have,
	\begin{align*}
		\Pr_{\mathbf{v}, L \sim \hpol}
		\brac{ \sum_{i=1}^{c+1} g[\mathbf{u} + \zeta_{i} \mathbf{v}, L] \; \neq \; \had{1}{P}(\mathbf{u}, L) }
		\; \leq \;
		\mathcal{O}_{c}( \delta_{1}).
	\end{align*}
\end{lemma}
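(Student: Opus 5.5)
Our plan is to reduce the statement to the $\F_q$-alphabet facts, \Cref{lemma:syntactic-test-degree-reduction} and \Cref{lemma:self-correction-degree-reduction}, by decoding $g$ pointwise.

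\textbf{Step 1: the decoded function.} For each $\mathbf{u}\in\F_q^k$, let $h(\mathbf{u})\in\F_q$ be the element $w$ minimizing $\delta(g[\mathbf{u},\cdot],\had{r}{w})$. Call $\mathbf{u}$ \emph{good} if $\Pr[\hadtest^{g[\mathbf{u},\cdot]}_r=\reject]\le\delta_0$, where $\delta_0$ is the threshold of \Cref{lemma:syntactic-test-degree-reduction}. By Markov's inequality, at most a $\delta_1/\delta_0=\bigO(\delta_1)$ fraction of $\mathbf{u}$ are bad. For a good $\mathbf{u}$ with rejection probability $\epsilon_{\mathbf{u}}$, the soundness part of \Cref{lemma:syntactic-test-gen-Hadamard} gives
\[
\delta\bigl(g[\mathbf{u},\cdot]|_{\le1},\had{1}{h(\mathbf{u})}\bigr)\le3\epsilon_{\mathbf{u}}.
\]
The random homogeneous linear $L$ is a uniform element of a subset of density $1/2$ in $\F_2[\mathbf{Z}]^{\le1}$. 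Hence
\[
\Pr_L\bigl[g[\mathbf{u},L]\neq\had{1}{h(\mathbf{u})}(L)\bigr]\le6\epsilon_{\mathbf{u}}.
\]
Averaging over $\mathbf{u}$, the error $e(\mathbf{u}):=\Pr_L[g[\mathbf{u},L]\ne \had{1}{h(\mathbf{u})}(L)]$ satisfies $\E_{\mathbf{u}}[e(\mathbf{u})]=\bigO(\delta_1)$. This step is the main technical point: on bad points we just bound $e\le1$.

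\textbf{Step 2: $h$ nearly passes $\stdldt$.} In the second hypothesis, each query point $\mathbf{u}+\zeta_i\mathbf{v}$ with $\mathbf{v}$ uniform is marginally uniform when $i\ge1$, and so is $\mathbf{u}$. By a union bound over the $c+2$ points, with probability at least $1-\bigO_c(\delta_1)$ all queried values equal $\had{1}{h(\cdot)}(L)$. The map $w\mapsto\had{1}{w}(L)=L(\rho(w))$ is $\F_2$-linear for homogeneous $L$. So on that event, the check reads
\[
\had{1}{\textstyle\sum_i h(\mathbf{u}+\zeta_i\mathbf{v})}(L)=0.
\]
By \Cref{obs:delta-fn-had-approx}, a nonzero field element is mapped to $1$ by exactly half of the homogeneous linear $L$. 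Therefore
\[
\Pr_{\mathbf{u},\mathbf{v}}\bigl[\stdldt^h_c(;\mathbf{u},\mathbf{v})\ne0\bigr]\le2\bigl(\delta_1+\bigO_c(\delta_1)\bigr)=\bigO_c(\delta_1).
\]
\Cref{lemma:syntactic-test-degree-reduction} then gives a degree-$c$ polynomial $P^\ast$ with $\delta(h,P^\ast)=\bigO_c(\delta_1)$, provided $\delta_1$ is below an appropriate constant. Otherwise the claimed bound is trivial, after adjusting the constant in $\bigO_c$.

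\textbf{Step 3: identify $P^\ast$ with $P$.} Combining Steps 1 and 2 with \Cref{obs:delta-fn-had-approx}, the distance from $\had{1}{P^\ast}$ to $g$, restricted to degree-$\le1$ inputs, is at most $\tfrac12+\bigO_c(\delta_1)$ averaged over the two halves. Any other degree-$c$ polynomial differs from $P^\ast$ on a $1-c/q$ fraction of points, which makes its degree-$1$ encoding much farther from $g$. So the closest polynomial $P$ in the lemma equals $P^\ast$.

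\textbf{Step 4: the corrector at a fixed $\mathbf{u}$.} Fix any $\mathbf{u}$. For $i\ge1$ the point $\mathbf{u}+\zeta_i\mathbf{v}$ is uniform over $\mathbf{v}$. By Step 1, each summand $g[\mathbf{u}+\zeta_i\mathbf{v},L]$ equals $\had{1}{h(\mathbf{u}+\zeta_i\mathbf{v})}(L)$ except with probability $\bigO(\delta_1)$. By Step 2, $h(\mathbf{u}+\zeta_i\mathbf{v})=P(\mathbf{u}+\zeta_i\mathbf{v})$ except with probability $\bigO_c(\delta_1)$. A union bound over the $c+1$ summands, linearity of $\had{1}{\cdot}(L)$, and \Cref{fact:local-correction-via-lines} give
\[
\sum_{i=1}^{c+1}g[\mathbf{u}+\zeta_i\mathbf{v},L]=\had{1}{P}(\mathbf{u},L)
\]
except with probability $\bigO_c(\delta_1)$, which is the claim.
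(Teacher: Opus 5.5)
Your route is essentially the paper's. Your $h$ is the paper's $\mathrm{Corr}_g$; Step 1 is \Cref{claim:ldt-f-close-had}; Step 2 is \Cref{claim:ldt-dist-corr-g}; Step 4 is the closing union bound. Two small points on Step 1: the threshold $\delta_0$ you want is the one from \Cref{lemma:syntactic-test-gen-Hadamard}, not \Cref{lemma:syntactic-test-degree-reduction}; and $h(\mathbf u)$ is the element that lemma produces, because encodings of distinct elements are $\tfrac12$ apart.

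The gap is in Step 3. You bound $\delta(g|_{\le1},\had{1}{P^\ast})\le\tfrac12+\bigO_c(\delta_1)$, apparently by controlling only the homogeneous half, as your $e(\mathbf u)$ does. That bound cannot identify the closest polynomial. A competitor $P'$ is at distance about $\tfrac12$ from $\had{1}{P^\ast}$, so the triangle inequality gives it no useful lower bound. In fact, with control only on homogeneous $L$, $g$ could follow $\had{1}{P'}$ on the pairs $(\mathbf u,L+1)$ and also agree with it on a few homogeneous pairs. Then $P'$ is strictly closer than $P^\ast$, and Step 4, which goes through $h\approx P^\ast$, proves the bound for the wrong polynomial.

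The fix uses something you already wrote down in Step 1. Average $\delta(g[\mathbf u,\cdot]|_{\le1},\had{1}{h(\mathbf u)})\le3\epsilon_{\mathbf u}$ over good $\mathbf u$, and bound bad $\mathbf u$ by $1$. This gives $\delta(g|_{\le1},\had{1}{h})=\bigO(\delta_1)$ over all of $\F_2[\mathbf Z]^{\le1}$, not just the homogeneous half. By \Cref{obs:delta-fn-had-approx},
$\delta(g|_{\le1},\had{1}{P^\ast})\le\bigO(\delta_1)+\tfrac12\delta(h,P^\ast)=\bigO_c(\delta_1)$.
Since $P$ is the closest, $\delta(g|_{\le1},\had{1}{P})=\bigO_c(\delta_1)$ as well. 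From here you have two options.

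First option: conclude $P=P^\ast$. The triangle inequality gives $\tfrac12\delta(P,P^\ast)=\bigO_c(\delta_1)$, which is below $\tfrac12(1-c/q)$ once $\delta_1$ is below a constant; otherwise the claim is trivial.

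Second option, which is what the paper does: skip the identification. Fix $\mathbf u$. Each of the $c+1$ uniformly distributed points $\mathbf u+\zeta_i\mathbf v$ has $g\ne\had{1}{P}$ with probability at most $2\delta(g|_{\le1},\had{1}{P})$ over $\mathbf v$ and homogeneous $L$. Then \Cref{fact:local-correction-via-lines}, together with linearity of $w\mapsto L(\rho(w))$, finishes the proof.
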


We defer the proof of the above lemma to
\Cref{subsec:proofs-intermediate-lemmas}
and proceed with the proof of
\Cref{thm:new-low-degree-test}.

\begin{proof}[Proof of \Cref{thm:new-low-degree-test}]
	We start by discussing the completeness of \Cref{algo:new-low-deg-test-binary}.

	\paragraph*{Completeness}
	Firstly,
	since both \(f, f'\) are the Hadamard encodings of functions,
	both the tests in Line $2$ return $\accept$.
	Now fix an arbitrary $(\ba, \bb) \in \F_{q}^{2m}$.
	Let $g(Y) = P^{(d)}_{\mathrm{lines}}(\ba, \bb)(Y)$.
	Since $g(Y)$ is a degree-$d$ univariate polynomial,
	$h(\mathbf{X}) := \Psi(g(Y))$ is a degree-$c$ polynomial.
	Fix arbitrary $\bu, \bv \in \F_{q}^{cm_{1}}$.
	Defining $\zeta_0:=0$ and $\zeta_i:=\zeta^i$ for $i>0$, \Cref{fact:characterize-low-deg} then implies
	\begin{align*}
		\sum_{i=0}^{c+1} h(\mathbf{u} + \zeta_{i} \mathbf{v}) \; = \; 0.
	\end{align*}
	Then for every \textit{homogeneous} linear polynomial
	$L \in \hpol$, we get,
	\begin{align*}
		\sum_{i = 0}^{c+1} L(\rho(h(\mathbf{u} + \zeta_{i} \mathbf{v})))
		\; = \;
		L \paren{ \sum_{i = 0}^{c+1}  \rho(h(\mathbf{u} + \zeta_{i} \mathbf{v})) }
		\; = \;
		L\paren{ \rho \paren{ \sum_{i=0}^{c+1} h(\mathbf{u} + \zeta_{i} \mathbf{v}) } } \; = \; 0.
	\end{align*}
	This shows that the test in Line $3$ always returns $\accept$. Now it remains to show that Line $4$ also always returns $\accept$. By an analogous argument, Line $4$ corresponds to
	\begin{align*}
		L\paren{ \rho\paren{h(\Phi(\lambda))} } \, \eqq \, L\paren{ \rho(P(\mathbf{a} + \lambda \mathbf{b})) } \, = \, L\paren{ \rho\paren{g(\lambda)} } \,.
	\end{align*}
	Hence, by using the fact that $h(\Phi(\lambda)) = g(\lambda)$ and the completeness part of $\tabldt$ (\Cref{thm:low-degree-testing}), Line $4$ also always returns $\accept$. This finishes the discussion for completeness of $\vldt$.

	\paragraph*{Soundness}We have the following events:
	\begin{itemize}
		\item Let $\mathcal{E}_{1}$ denote the event $\hadtest^{f[\mathbf{a}, \cdot]}_{r}$ returns $\reject$.
		\item Let $\mathcal{E}_{2}$ denote the event $\hadtest^{f'[(\mathbf{a}, \mathbf{b}), \mathbf{u}, \cdot]}_{r}$ returns $\reject$.
		\item Let $\mathcal{E}_{3}$ denote the event $ \sum_{i=0}^{c+1} f'[(\mathbf{a}, \mathbf{b}), \mathbf{u} + \zeta_{i} \mathbf{v}, L]$ is not equal to $0$.
		\item Let $\mathcal{E}_{4}$ denote the event $ \sum_{i=1}^{c+1}  f'[(\mathbf{a},\mathbf{b}), \, \Phi(\lambda) + \zeta_{i} \mathbf{v}, \, L]$ is not equal to $f[\mathbf{a} + \lambda \mathbf{b}, \, L]$.
	\end{itemize}
	We want to show that if all the events happen with probability at most $\delta$, then the soundness guarantee of \Cref{thm:new-low-degree-test} holds.

	\paragraph*{Defining the proxies}
	We start by defining the proxy for $f$,
	and for \(f'\),
	which will represent the underlying function and its line table
	respectively.
	\begin{definition}[Correction for \(f\)]\label{dfn:corr}
		For every $\mathbf{a} \in \F_{q}^{m}$, let $\mathbf{v}_{\mathbf{a}} \in \F_{q}$ denote the $\F_{q}$-element whose degree-$r$ Hadamard encoding is closest to $f[\mathbf{a}, \cdot]$, i.e.,
		\begin{align*}
			\mathbf{v}_{\mathbf{a}} \; := \; \arg \min_{\mathbf{v} \in \F_{q}} \; \delta\paren{ f[\mathbf{a}, \cdot], \; \had{r}{\mathbf{v}} }.
		\end{align*}
		We then define $\mathrm{Corr}_{f}: \F_{q}^{m} \to \F_{q}$ as $\mathrm{Corr}_{f}(\mathbf{a}) = \mathbf{v}_{\mathbf{a}}$.\\
	\end{definition}
	\begin{definition}[Correction for \(f'\)]\label{dfn:corr2}
		For every pair $(\mathbf{a}, \mathbf{b}) \in \F_{q}^{2m}$,
		let $Q_{\mathbf{a},\mathbf{b}} \in \mathcal{P}_{c}(cm_{1}, \F_{q})$
		be the degree-$c$ polynomial whose degree-$1$ Hadamard encoding is closest to $f'[(\mathbf{a}, \mathbf{b}), \cdot, \cdot]$, i.e.,
		\begin{equation}\label{eqn:defn-proxy-lines-table}
			Q_{\mathbf{a}, \mathbf{b}}
			\; := \;
			\arg \; \min_{Q \in \mathcal{P}_{c}} \;
			\delta \paren{ f'[(\mathbf{a}, \mathbf{b}), \cdot, \cdot]_{ \leq 1}, \, \had{1}{Q} }.
		\end{equation}
		Recall the map $\Phi_{c,m_{1}}$ from \Cref{eqn:reverse-degree-redn}.
		As the parameters $c,m_{1}$ are fixed, we will simply denote $\Phi_{c,m_{1}}$ by $\Phi$.
		We then define the function $F$ on $\F_{q}^{2m}$ as
		$F(\mathbf{a}, \mathbf{b}) = Q_{\mathbf{a},\mathbf{b}} \circ \Phi$, i.e., $F(\mathbf{a},\mathbf{b})(Y) = Q_{\mathbf{a}, \mathbf{b}}(\Phi(Y))$.
	\end{definition}
	By definition of $\Phi$,
	we know that $F(\mathbf{a}, \mathbf{b})(Y)$
	is a univariate polynomial of degree at most
	\begin{align*}
		c (m_{1}-1) m_{1}^{c-1}
		\leq
		cm_{1}^{c}
		\leq
		\dfrac{cdm_{1}^{c}}{(m_{1}-1)^{c}}
		\leq
		4cd,
	\end{align*}
	using that \((m_{1}-1)^{c} \leq d\) and \(c \leq m_{1}\).
	In simple words, $F$ is a collection of univariate polynomials of degree at most
	$d' =4c d$.

	Let $\mathcal{P}_{d'}$ denote the class $\mathcal{P}_{d'}(m, \F_{q})$. Via the triangle inequality, we have,
	\begin{equation}\label{eqn:ldt-soundness-triangle}
		\delta\paren{f, \had{r}{\mathcal{P}_{d'} }} \; \leq \; \delta\paren{f, \, \had{r}{\mathrm{Corr}_{f}}} \, + \, \delta\paren{\had{r}{\mathrm{Corr}_{f}}, \, \had{r}{\mathcal{P}_{d'} }}
	\end{equation}
	A similar triangle inequality holds for $r = 1$ too.
	In the above expression,
	we will bound the first summand in \Cref{claim:ldt-f-close-had}
	and will bound the second summand in \Cref{lemma:ldt-apply-og-ldt},
	which is the key technical lemma in the soundness analysis.
	We first prove \Cref{claim:ldt-f-close-had},
	which can be seen as a generalization of \Cref{lemma:syntactic-test-gen-Hadamard}
	from points to functions.
	\noindent
	\begin{claim}[Proxy for $f$]\label{claim:ldt-f-close-had}
		Assuming
		$\Pr[\mathcal{E}_{1}] \leq \delta$,
		we have
		\begin{align*}
			\delta \paren{ f, \; \had{r}{\mathrm{Corr}_{f}} },
			\delta \paren{ f|_{\leq 1}, \; \had{1}{\mathrm{Corr}_{f}} }
			\; \leq \;
			\mathcal{O}(\delta).
		\end{align*}
	\end{claim}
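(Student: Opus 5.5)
The plan is to apply \Cref{lemma:syntactic-test-gen-Hadamard} pointwise and then average over $\mathbf{a}$. For each $\mathbf{a} \in \F_{q}^{m}$, let $\delta_{\mathbf{a}} := \Pr[\hadtest^{f[\mathbf{a},\cdot]}_{r} \text{ returns } \reject]$, with probability taken over the internal randomness of $\hadtest$. Since $\mathbf{a}$ is uniform in $\vldt$, we have $\E_{\mathbf{a}}[\delta_{\mathbf{a}}] = \Pr[\mathcal{E}_{1}] \leq \delta$. Fix the threshold $\delta_{0}$ from \Cref{lemma:syntactic-test-gen-Hadamard}; we may assume $\delta_{0} < 1/8$ by shrinking it. Call $\mathbf{a}$ \emph{good} if $\delta_{\mathbf{a}} < \delta_{0}$ and \emph{bad} otherwise. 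By Markov's inequality, the fraction of bad $\mathbf{a}$ is at most $\delta/\delta_{0} = \mathcal{O}(\delta)$.

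For a good $\mathbf{a}$, the soundness part of \Cref{lemma:syntactic-test-gen-Hadamard} gives some $u \in \F_{q}$ with $\delta(f[\mathbf{a},\cdot], \had{r}{u}) \leq \delta_{\mathbf{a}}$ and $\delta(f[\mathbf{a},\cdot]|_{\leq 1}, \had{1}{u}) \leq 3\delta_{\mathbf{a}}$. The one point needing care is that $\mathrm{Corr}_{f}(\mathbf{a}) = \mathbf{v}_{\mathbf{a}}$ is defined as the \emph{closest} element, so we must check that it coincides with this $u$, or at least satisfies the same bounds. For the degree-$r$ bound this is immediate from minimality: $\delta(f[\mathbf{a},\cdot], \had{r}{\mathbf{v}_{\mathbf{a}}}) \leq \delta_{\mathbf{a}}$. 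For the degree-$1$ bound, recall that distinct elements have degree-$r$ Hadamard encodings at relative distance $1/2$ (\Cref{obs:delta-fn-had-approx}). Since $\delta_{\mathbf{a}} < 1/4$, the closest element is unique, so $\mathbf{v}_{\mathbf{a}} = u$ and the $3\delta_{\mathbf{a}}$ bound transfers. Ties in the $\arg\min$ can only occur at bad points, so they do not matter.

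For a bad $\mathbf{a}$, we simply bound both distances by $1$. Averaging over $\mathbf{a}$, and using that the distance between functions on $\F_{q}^{m} \times \F_{2}[\mathbf{Z}]^{\leq r}$ is the average over $\mathbf{a}$ of the pointwise distances, we get
\begin{align*}
\delta\paren{f, \had{r}{\mathrm{Corr}_{f}}} \;\leq\; \E_{\mathbf{a}}[\delta_{\mathbf{a}}] + \Pr_{\mathbf{a}}[\mathbf{a} \text{ bad}] \;\leq\; \delta + \delta/\delta_{0},
\end{align*}
and similarly $\delta(f|_{\leq 1}, \had{1}{\mathrm{Corr}_{f}}) \leq 3\delta + \delta/\delta_{0}$. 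Both are $\mathcal{O}(\delta)$. The only mild obstacle is the identification $\mathbf{v}_{\mathbf{a}} = u$, which is handled by the unique-decoding radius.
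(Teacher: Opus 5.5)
Your proof is correct and follows the paper's argument. Both use Markov's inequality on the per-point rejection probabilities, apply the soundness of \Cref{lemma:syntactic-test-gen-Hadamard} at good points, and use the trivial bound at bad points, arriving at the same bounds $\delta + \delta/\delta_0$ and $3\delta + \delta/\delta_0$. Your unique-decoding step identifying $\mathrm{Corr}_f(\mathbf{a})$ with $u$ is exactly what the paper's ``following a similar argument'' for the degree-$1$ bound silently relies on, so you have filled in the one detail the paper skips.
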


	\begin{lemma}\label{lemma:ldt-apply-og-ldt}
		Let $\mathrm{Corr}_{f}$ and $F$ be as defined above.
		Assuming that \(\vldt\)
		rejects with probability at most \(\delta\),
		then we have,
		\begin{align*}
			\Pr\brac{ \tabldt^{\mathrm{Corr}_{f}, F}_{d'} \; \text{ returns } \; \reject }
			\; \leq \;
			\mathcal{O}_{c}( \delta).
		\end{align*}
	\end{lemma}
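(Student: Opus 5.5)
The plan is to bound the rejection probability of $\tabldt^{\mathrm{Corr}_{f},F}_{d'}$, which is $\Pr_{\ba,\bb,\lambda}[F(\ba,\bb)(\lambda)\neq \mathrm{Corr}_f(\ba+\lambda\bb)]$, by splitting it into three bad events. Each will be controlled by one of $\mathcal{E}_1,\dots,\mathcal{E}_4$ via \Cref{obs:delta-fn-had-approx}. Since $\vldt$ rejects with probability at most $\delta$, each $\Pr[\mathcal{E}_j]\le\delta$.

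\textbf{Step 1 (the $f$ side).} By \Cref{claim:ldt-f-close-had}, $\delta(f|_{\le 1},\had{1}{\mathrm{Corr}_f})=\mathcal{O}(\delta)$. Restrict further to homogeneous $L$; these form a constant fraction of $\F_2[\bZ]^{\le 1}$. Because $\ba+\lambda\bb$ is uniform in $\F_q^m$ for $\bb\neq\mathbf{0}$, we get
\[
\Pr_{\ba,\bb,\lambda,L}\brac{f[\ba+\lambda\bb,L]\neq \had{1}{\mathrm{Corr}_f}(\ba+\lambda\bb,L)}=\mathcal{O}(\delta)+q^{-m}.
\]

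\textbf{Step 2 (the $f'$ side).} For each fixed $(\ba,\bb)$, write $g_{\ba,\bb}:=f'[(\ba,\bb),\cdot,\cdot]$ and let $\delta_{\ba,\bb}$ be the larger of the rejection probabilities of $\hadtest$ and of the Line-3 check on $g_{\ba,\bb}$. Then $\E_{\ba,\bb}[\delta_{\ba,\bb}]\le\Pr[\mathcal{E}_2]+\Pr[\mathcal{E}_3]\le 2\delta$. Apply \Cref{lemma:std-ldt-hadamard-encoding} to $g_{\ba,\bb}$ with $\delta_1=\delta_{\ba,\bb}$ and the point $\Phi(\lambda)$; the nearest polynomial there is exactly $Q_{\ba,\bb}$ by \Cref{dfn:corr2}. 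This gives
\[
\Pr_{\bv,L}\brac{\textstyle\sum_{i=1}^{c+1} g_{\ba,\bb}[\Phi(\lambda)+\zeta^i\bv,L]\neq \had{1}{Q_{\ba,\bb}}(\Phi(\lambda),L)}=\mathcal{O}_c(\delta_{\ba,\bb}).
\]
The lemma's conclusion holds for every point, and also in the regime where $\delta_{\ba,\bb}$ is not small, where the bound is trivial. So averaging over $(\ba,\bb,\lambda)$ yields $\mathcal{O}_c(\delta)$. Note that $\had{1}{Q_{\ba,\bb}}(\Phi(\lambda),L)=L(\rho(F(\ba,\bb)(\lambda)))$.

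\textbf{Step 3 (combine).} On the complement of these two bad events and of $\mathcal{E}_4$, Line 4 gives
\[
L(\rho(F(\ba,\bb)(\lambda)))=L(\rho(\mathrm{Corr}_f(\ba+\lambda\bb))).
\]
By a union bound, this fails with probability $\mathcal{O}_c(\delta)$ over $(\ba,\bb,\lambda,L)$. Now consider those $(\ba,\bb,\lambda)$ with $F(\ba,\bb)(\lambda)\neq\mathrm{Corr}_f(\ba+\lambda\bb)$. For them, a uniformly random homogeneous linear $L$ distinguishes the two values with probability exactly $1/2$, as in \Cref{obs:delta-fn-had-approx}. Hence the $\tabldt$ rejection probability is at most twice the above failure probability, i.e.\ $\mathcal{O}_c(\delta)$.

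\textbf{Main obstacle.} The delicate point is Step 2. There one must check that \Cref{lemma:std-ldt-hadamard-encoding} applies pointwise in $(\ba,\bb)$ with a per-pair parameter $\delta_{\ba,\bb}$ whose dependence is linear, so that averaging is legitimate. One must also check that the lemma's "nearest polynomial" matches the choice of $Q_{\ba,\bb}$ made in \Cref{dfn:corr2}, namely the nearest in degree-$1$ Hadamard encoding. If the lemma needs $\delta_1$ below a threshold, the plan is to handle large $\delta_{\ba,\bb}$ by Markov's inequality. Those pairs have total mass $\mathcal{O}(\delta)$ and are simply charged to the bad event.
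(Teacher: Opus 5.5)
Your proposal is correct and follows the paper's proof essentially step for step. It uses the factor-$2$ reduction to homogeneous $L$ via \Cref{obs:delta-fn-had-approx}, then the same three-term union bound, controlled by \Cref{claim:ldt-f-close-had}, by $\Pr[\mathcal{E}_4]\le\delta$, and by \Cref{lemma:std-ldt-hadamard-encoding} applied per fixed $(\ba,\bb)$ (whose nearest polynomial is exactly $Q_{\ba,\bb}$) and then averaged.

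The only blemish is the additive $q^{-m}$ in Step 1, which is unnecessary and which you silently drop later. Since $\ba$ is uniform and independent of $(\bb,\lambda)$, the point $\ba+\lambda\bb$ is uniform for every $\bb$, including $\bb=\mathbf{0}$.
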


	\paragraph*{Why are the above claim and lemma sufficient?}
	Before discussing the proofs of \Cref{claim:ldt-f-close-had} and \Cref{lemma:ldt-apply-og-ldt},
	let us see why they are sufficient to show the soundness guarantee.
	Using \Cref{thm:low-degree-testing}, we know
	\begin{align*}
		\delta(\mathrm{Corr}_{f}, \mathcal{P}_{d'})
		\; \leq \;
		4 \cdot \Pr[\tabldt^{\mathrm{Corr}_{f}, F}_{d'} \; \text{ returns } \; \reject]
	\end{align*}
	From \Cref{obs:delta-fn-had-approx}, we have,
	\begin{align*}
		\delta\paren{\had{r}{\mathrm{Corr}_{f}}, \, \hadfamily{r}{\mathcal{P}_{d'} }}
		\; = \;
		\frac{1}{2}\delta(\mathrm{Corr}_{f}, \mathcal{P}_{d'})
		\; \leq \;
		\mathcal{O}_{c}(\delta).
	\end{align*}
	Substituting this and \Cref{claim:ldt-f-close-had} in \Cref{eqn:ldt-soundness-triangle}, we get
	\begin{align*}
		\delta\paren{f, \, \hadfamily{r}{\mathcal{P}_{d'}}}
		\; \leq \;
		\mathcal{O}(\delta) + \mathcal{O}_{c}( \delta) = \mathcal{O}_{c}(\delta).
	\end{align*}
	Going through the above calculation also gives us the same upper bound on the relative distance between $f|_{\leq 1}$ and $\had{1}{\mathcal{P}_{d'}}$.
	Therefore, \Cref{claim:ldt-f-close-had} and \Cref{lemma:ldt-apply-og-ldt} together are sufficient to get the desired soundness guarantee.
	For the rest of the proof, we discuss their proofs, starting with the proof of \Cref{claim:ldt-f-close-had}.

	\begin{proof}[Proof of \Cref{claim:ldt-f-close-had}]
		Let
		\begin{align*}
			\rej_{\ba} :=
			\Pr[\hadtest^{f[\ba, \cdot]}_{r} \, = \, \reject]
		\end{align*}
		Then we have
		\begin{align*}
			\mathbb{E}_{\ba} \left[\rej_{\ba}  \right]
			=
			\Pr[\mathcal{E}_{1}] < \delta_{1},
		\end{align*}
		so by Markov's inequality, we have
		\begin{align*}
			\Pr_{\ba} \left[\rej_{\ba} > \delta_{0}  \right]
			\leq \frac{\delta_{1}}{\delta_{0}}.
		\end{align*}
		For fixed \(\ba \in \F_{q}^{m}\)
		with \(\rej_{\ba} \leq \delta_{0}\),
		we have from the soundness of \Cref{lemma:syntactic-test-gen-Hadamard},
		that there exists a $u \in \F_{q}$ such that
		\begin{align*}
			\delta(f[\ba, \cdot],\phantom{|_{\leq 1}} \had{r}{u})
			 & \leq
			\phantom{3}\rej_{\ba}                               \\
			\delta(f[\ba, \cdot]|_{\leq 1}, \had{1}{u})
			 & \leq
			3 \rej_{\ba}.
			\intertext{Since \(\mathrm{Corr}_{f}(\ba)\)
				is defined to be the element of $\F_{q}$
				with closest degree-$r$ Hadamard encoding to \(f[\ba,\cdot]\),
				we have}
			\delta(f[\ba, \cdot],\had{r}{\mathrm{Corr}_{f}}(\ba))
			 & \leq
			\phantom{3}\rej_{\ba}
			\intertext{Averaging over every $\mathbf{a} \in \F_{q}^{m}$, we then
				get}
			\delta(f,\had{r}{\mathrm{Corr}_{f}})
			 & \leq \delta_{1} + \frac{\delta_{1}}{\delta_{0}}.
		\end{align*}
		giving us the desired bound.
		Following a similar argument, we get
		\begin{align*}
			\delta(f|_{\leq 1}, \had{1}{\mathrm{Corr}_{f}}) < 3\delta_{1} +
			\frac{\delta_{1}}{\delta_{0}}.
		\end{align*}
		This finishes the proof of \Cref{claim:ldt-f-close-had}.
	\end{proof}

	\begin{proof}[Proof of \Cref{lemma:ldt-apply-og-ldt}]
		$\tabldt^{\mathrm{Corr}_{f}, F}_{d'}$ returns $\reject$ if
		$\mathrm{Corr}_{f}(\mathbf{a} + \lambda \mathbf{b})
			\neq
			F(\mathbf{a}, \mathbf{b})(\lambda)$,
		for the choice of $(\mathbf{a}, \mathbf{b}, \lambda)$ as random strings.
		So our goal will be to upper-bound the following quantity:
		\begin{align*}
			\Pr_{\mathbf{a}, \mathbf{b}, \lambda}\left[
				\mathrm{Corr}_{f}(\mathbf{a} + \lambda \mathbf{b})
				\neq
				F(\mathbf{a}, \mathbf{b})(\lambda)
				\right]
		\end{align*}
		Using \Cref{obs:delta-fn-had-approx},
		we get,
		\begin{equation}\label{eqn:ldt-5}
			\Pr\left[
				\mathrm{Corr}_{f}(\mathbf{a} + \lambda \mathbf{b})
				\neq
				F(\mathbf{a}, \mathbf{b})(\lambda)
				\right]
			\; = \;
			2 \cdot \Pr\left[
			\had{r}{\mathrm{Corr}_{f}}(\mathbf{a} + \lambda \mathbf{b}, L)
			\neq
			\had{r}{F}(((\mathbf{a}, \mathbf{b}),\lambda), L)\right],
		\end{equation}
		where the probabilities are over
		$\mathbf{a}, \mathbf{b} \sim \F_{q}^{m}$,
		$\lambda \sim \F_{q}^{\times}$, and
		$L \sim \hpol$.\newline
		Thus, it suffices to upper-bound the term on the right in the above inequality. For ease in writing, we will introduce a shorthand notation.\newline
		\textbf{Notation:}
		For every pair $(\mathbf{a}, \mathbf{b}) \in \F_{q}^{2m}$,
		for every $\mathbf{u} \in \F_{q}^{cm_{1}}$,
		for every $L \in \hpol$,
		we define:
		\begin{align*}
			\stdcorr^{f'[(\mathbf{a}, \mathbf{b}), \cdot, L]}_{c}(\mathbf{u}; \mathbf{v})
			:=
			\sum_{i = 1}^{c+1} f'[(\mathbf{a}, \mathbf{b}), \mathbf{u} + \zeta_{i} \mathbf{v}, L].
		\end{align*}
		To upper bound the term on the right,
		we will decompose it into three terms via a union bound:
		\begin{align*}
			     & \Pr\brac{ \had{r}{\mathrm{Corr}_{f}}(\mathbf{a} + \lambda \mathbf{b}, L) \neq \had{r}{F}((\mathbf{a}, \mathbf{b}), \lambda, L) }                   \\
			\leq & \Pr\brac{\had{r}{\mathrm{Corr}_{f}}(\mathbf{a} + \lambda \mathbf{b}, \, L) \neq f[\mathbf{a} + \lambda \mathbf{b}, L]}                             \\
			+    & \Pr\brac{ f[\mathbf{a} + \lambda \mathbf{b}, L] \neq \stdcorr^{f'[(\mathbf{a}, \mathbf{b}), \mathbf{X},L ]}_{c}(\Phi(\lambda); \mathbf{v}) }       \\
			+    & \Pr\brac{\stdcorr^{f'[(\mathbf{a}, \mathbf{b}), \cdot, L]}_{c}(\Phi(\lambda); \mathbf{v}) \neq \had{r}{F}((\mathbf{a}, \mathbf{b})(\lambda), L) }.
		\end{align*}

		\paragraph*{First and second summands}
		For the first summand,
		we first observe that for every pair $(\mathbf{a}, \mathbf{b})$, we have
		\begin{align*}
			\Pr_{L}                                      \brac{\had{r}{\mathrm{Corr}_{f}}(\mathbf{a} + \lambda \mathbf{b}, L) \neq f[\mathbf{a} + \lambda \mathbf{b}, L]} \\
			\leq \;
			2  \Pr_{L' \sim \F_{2}[\mathbf{Z}]^{\leq 1}}  \brac{
				\had{r}{\mathrm{Corr}_{f}}(\mathbf{a} + \lambda \mathbf{b},  L') \neq
			f[\mathbf{a} + \lambda \mathbf{b}, L'] }                                                                                                                      \\
			\leq 2 \cdot \delta\paren{f|_{\leq 1}, \had{1}{\mathrm{Corr}_{f}} }
			\; \leq \; \mathcal{O}(\delta).
		\end{align*}
		The first inequality is because the density of
		homogeneous $\F_{2}$-linear polynomials in degree-$1$ polynomials is $1/2$,
		and the final inequality is due to \Cref{claim:ldt-f-close-had}.

		The second summand is exactly $\Pr[\mathcal{E}_{4}]$, which is at most $\delta$, as per our assumption made at the beginning of the proof.

		\paragraph*{Third summand}
		By the definition of $F$, the third summand is equivalent to
		\begin{equation}\label{eqn:third-summand}
			\Pr\brac{
			\stdcorr^{f'[(\mathbf{a}, \mathbf{b}), \cdot, L]}_{c}(\Phi(\lambda); \mathbf{v})
			\, \neq \,
			\had{r}{Q_{\mathbf{a}, \mathbf{b}}}(\Phi(\lambda), \, L) }.
		\end{equation}
		For a fixed pair $(\mathbf{a}, \mathbf{b}) \in \F_{q}^{2m}$,
		define \(\rej_{\ba, \bb}\) as the probability that
		\(\vldt\) rejects when \((\ba, \bb)\) is sampled.
		For fixed \((\ba, \bb)\), we then have by
		\Cref{lemma:std-ldt-hadamard-encoding}
		\begin{align*}
			\Pr\brac{
			\had{r}{Q_{\mathbf{a}, \mathbf{b}}}(\Phi(\lambda), L)
			\neq
			\stdcorr^{f'[(\mathbf{a}, \mathbf{b}), \cdot, L]}_{c}(\Phi(\lambda); \mathbf{v})}
			\leq
			\mathcal{O}_{c}(\rej_{\ba, \bb}).
		\end{align*}
		From our assumption that \(\vldt\) rejects with low probability we have
		\begin{align*}
			\mathbb{E}_{\mathbf{a}, \mathbf{b}}[\mathrm{Rej}_{\mathbf{a}, \mathbf{b}}]
			\leq \delta,
		\end{align*}
		so averaging  over \((\ba, \bb)\) we get
		\begin{align*}
			\Pr\brac{\had{r}{Q_{\mathbf{a}, \mathbf{b}}}(\Phi(\lambda), L) \neq \stdcorr^{f'[(\mathbf{a}, \mathbf{b}), \cdot, L]}_{c}(\Phi(\lambda); \mathbf{v})}                                                                           \leq
			\mathcal{O}_{c}(\delta).
		\end{align*}
	
		Hence, this gives us an upper bound on the third summand.
		Putting it together with \Cref{eqn:ldt-5}, we get
		\begin{align*}
			\Pr_{\mathbf{a}, \mathbf{b}, \lambda}[\mathrm{Corr}_{f}(\mathbf{a} + \lambda \mathbf{b}) \neq F(\mathbf{a}, \mathbf{b})(\lambda)] \\
			= 2
			\Pr\brac{ \had{r}{\mathrm{Corr}_{f}}(\mathbf{a} + \lambda \mathbf{b}, L) \neq \had{r}{F}((\mathbf{a}, \mathbf{b}), \lambda, L) }
			\leq
			\mathcal{O}_{c}(\delta).
		\end{align*}

		This finishes the proof of \Cref{lemma:ldt-apply-og-ldt}.
	\end{proof}
	This finishes the proof of \Cref{thm:new-low-degree-test}, modulo the proof of \Cref{lemma:std-ldt-hadamard-encoding}, which we prove in the next subsection.
\end{proof}

\subsection{Proof of \Cref{lemma:std-ldt-hadamard-encoding}}\label{subsec:proofs-intermediate-lemmas}
\begin{proof}[Proof of \Cref{lemma:std-ldt-hadamard-encoding}]
	Let \(\mathcal{P}_{c}\) denote the family of degree-\(c\) polynomials
	and \(\mathrm{Corr}_{g}: \F_{q}^{k} \to \F_{q}\)
	the correction function defined in \Cref{dfn:corr}.
	Using the triangle inequality we get
	\begin{align*}
		\delta \paren{ g|_{\leq 1}, \, \hadfamily{1}{\mathcal{P}_{c}} }
		\; \leq \;
		\delta \paren{ g|_{\leq 1}, \, \had{1}{\mathrm{Corr}_{g}} }
		\; + \;
		\delta \paren{ \had{1}{\mathrm{Corr}_{g}}, \,
			\hadfamily{1}{\mathcal{P}_{c}}} .
	\end{align*}
	From the first assumption of \Cref{lemma:std-ldt-hadamard-encoding},
	along with \Cref{claim:ldt-f-close-had}
	we have
	\begin{equation}\label{eqn:ldt-lines-close-deg-d-3}
		\delta\paren{g|_{\leq 1}, \, \had{1}{\mathrm{Corr}_{g}}}
		\; \leq \; \mathcal{O}(\delta_{1}).
	\end{equation}
	\noindent
	We now focus on proving an upper bound on the distance between
	$\had{1}{\mathrm{Corr}_{g}}$ and $\hadfamily{1}{\mathcal{P}_{c}}$.
	From \Cref{obs:delta-fn-had-approx}, we have,
	\begin{align*}
		\delta\paren{\had{1}{\mathrm{Corr}_{g}}, \, \hadfamily{1}{\mathcal{P}_{c}}}
		\; = \;
		\frac{1}{2}\delta(\mathrm{Corr}_{g}, \, \mathcal{P}_{c}).
	\end{align*}
	We will show the following claim.

	\begin{claim}\label{claim:ldt-dist-corr-g}
		Let $\mathrm{Corr}_{g}$ be as defined above. Then,
		\begin{align*}
			\delta(\mathrm{Corr}_{g}, \mathcal{P}_{c})
			\; \leq \;
			\mathcal{O}_{c}(\delta_{1}).
		\end{align*}
	\end{claim}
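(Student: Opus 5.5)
We reduce \Cref{claim:ldt-dist-corr-g} to the standard low-degree test \Cref{lemma:syntactic-test-degree-reduction} applied to the $\F_q$-valued function $\mathrm{Corr}_{g}:\F_q^{k}\to\F_q$. It therefore suffices to show
\begin{align*}
\Pr_{\mathbf{u},\mathbf{v}}\brac{\sum_{i=0}^{c+1}\mathrm{Corr}_{g}(\mathbf{u}+\zeta_i\mathbf{v})\neq 0}\;\leq\;\mathcal{O}_c(\delta_1),
\end{align*}
since then (for $\delta_1$ below the threshold $1/2(c+2)^2$, after adjusting constants) \Cref{lemma:syntactic-test-degree-reduction} gives $\delta(\mathrm{Corr}_g,\mathcal{P}_c)\le \mathcal{O}_c(\delta_1)$. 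When $\delta_1$ exceeds a constant depending on $c$, the claim is trivial, since distances are at most $1$.

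To bound this probability, we pass to Hadamard encodings with a random homogeneous linear $L$. If $s:=\sum_i \mathrm{Corr}_g(\mathbf{u}+\zeta_i\mathbf{v})\neq 0$, then $L(\rho(s))\neq 0$ with probability exactly $1/2$ over $L\sim\hpol$, because $L\circ\rho$ ranges over all $\F_2$-linear functionals on $\F_q$. By linearity of $\rho$ and of homogeneous $L$, we have $L(\rho(s))=\sum_i \had{1}{\mathrm{Corr}_g}(\mathbf{u}+\zeta_i\mathbf{v},L)$. Hence the target probability is at most
\begin{align*}
2\Pr_{\mathbf{u},\mathbf{v},L}\brac{\sum_{i=0}^{c+1}\had{1}{\mathrm{Corr}_g}(\mathbf{u}+\zeta_i\mathbf{v},L)\neq 0}.
\end{align*}
We then compare each term with $g[\mathbf{u}+\zeta_i\mathbf{v},L]$. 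For each fixed $i$, the point $\mathbf{u}+\zeta_i\mathbf{v}$ is uniform in $\F_q^{k}$ (for $i\ge1$ because $\mathbf{u}$ is uniform, and for $i=0$ trivially). So by a union bound over the $c+2$ terms, together with the second hypothesis of \Cref{lemma:std-ldt-hadamard-encoding}, the probability is at most
\begin{align*}
2\Big((c+2)\Pr_{\mathbf{w},L\sim\hpol}\brac{\had{1}{\mathrm{Corr}_g}(\mathbf{w},L)\neq g[\mathbf{w},L]}+\delta_1\Big).
\end{align*}

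The remaining step is to bound the disagreement of $g$ with $\had{1}{\mathrm{Corr}_g}$ on homogeneous $L$. Homogeneous linear polynomials form half of $\F_2[\mathbf{Z}]^{\le1}$, so this disagreement is at most $2\,\delta(g|_{\le1},\had{1}{\mathrm{Corr}_g})$, exactly as in the first-summand argument in the proof of \Cref{lemma:ldt-apply-og-ldt}. By \Cref{eqn:ldt-lines-close-deg-d-3}, which uses the first hypothesis together with \Cref{claim:ldt-f-close-had}, this is $\mathcal{O}(\delta_1)$. Combining the bounds gives $\mathcal{O}(c\,\delta_1)=\mathcal{O}_c(\delta_1)$.

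The main point requiring care is the conversion between "the sum of $\F_q$-values is nonzero" and "a random homogeneous-$L$ bit of the sum is nonzero," which relies on linearity of $\had{1}{\cdot}$ for homogeneous $L$. It also relies on $\mathrm{Corr}_g$ being a genuine $\F_q$-valued function, so that this linearity applies. Beyond that, only the regime restriction on $\delta_1$ needs bookkeeping.
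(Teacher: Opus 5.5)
Your proposal is correct and follows the paper's proof. Like the paper, you first lift the $(c+2)$-point check on $\mathrm{Corr}_g$ to a random homogeneous $L$, losing a factor of $2$. You then union-bound the $c+2$ marginally uniform points against $g$ using \Cref{eqn:ldt-lines-close-deg-d-3} and the density $1/2$ of homogeneous linear polynomials, add the second hypothesis, and finish with \Cref{lemma:syntactic-test-degree-reduction}, falling back on the trivial bound $\delta \le 1$ when $\delta_1$ exceeds a $c$-dependent threshold. The only cosmetic difference is that you write the lifting step as an inequality, whereas the paper notes it is an exact equality; this does not matter.
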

	\begin{proof}[Proof of \Cref{claim:ldt-dist-corr-g}]

		To upper bound the distance between $\mathrm{Corr}_{g}$ and $\mathcal{P}_{c}$,
		we will use $\stdldt$ and \Cref{lemma:syntactic-test-degree-reduction}.
		In particular, we will upper bound the following probability:
		\begin{equation}\label{eqn:ldt-lines-close-deg-d-1}
			\Pr_{\mathbf{u}, \mathbf{v}}
			\brac{ \stdldt^{\mathrm{Corr}_{g}}_{c}(; \mathbf{u}, \mathbf{v}) \neq 0}
			\; = \;
			\Pr_{\mathbf{u}, \mathbf{v}}\brac{
				\sum_{i = 0}^{c+1} \mathrm{Corr}_{g}(\mathbf{u} + \zeta_{i} \mathbf{v}) \neq 0}.
		\end{equation}
		Fix any pair \(\bu, \bv\) for which we have:
		\begin{align*}
			\sum_{i = 0}^{c+1} \mathrm{Corr}_{g}(\mathbf{u} + \zeta_{i} \mathbf{v}) \neq 0.
		\end{align*}
		Then for $L \sim \hpol$, we get,
		\begin{align*}
			\Pr_{L}\left[\sum_{i = 0}^{c+1} \had{r}{\mathrm{Corr}_{g}}((\mathbf{u} + \zeta_{i} \mathbf{v}), L) \neq 0\right]
			\; = \;
			\Pr_{L}\brac{ L \paren{ \rho \paren{ \sum_{i = 0}^{c+1} \mathrm{Corr}_{g}(\mathbf{u} + \zeta_{i} \mathbf{v}) } } }
			\; = \;
			\dfrac{1}{2},
		\end{align*}
		where for the final equality we use \Cref{obs:delta-fn-had-approx}.
		Thus,
		\begin{align*}
			\Pr_{\mathbf{u}, \mathbf{v}}\brac{
				\sum_{i = 0}^{c+1} \mathrm{Corr}_{g}(\mathbf{u} + \zeta_{i} \mathbf{v}) \neq 0}
			\; = \;
			2 \cdot \Pr_{\mathbf{u}, \mathbf{v}, L}
			\left[\sum_{i = 0}^{c+1} \had{r}{\mathrm{Corr}_{g}}((\mathbf{u} + \zeta_{i} \mathbf{v}), L) \neq 0\right].
		\end{align*}
		So our goal now reduces to getting an upper bound on the probability of the right-hand event.

		\noindent
		For random \(\bu, \bv\),
		the points \(\bu + \zeta_{i} \bv \) are (non-independent) uniformly random,
		so by a union bound using \Cref{eqn:ldt-lines-close-deg-d-3},
		along with the fact that homogeneous linear polynomials have density \(1/2\)
		among linear polynomials, we get
		\begin{align*}
			\Pr_{\bu, \bv, L}\left[\sum_{i = 0}^{c+1} \had{r}{\mathrm{Corr}_{g}}((\mathbf{u} + \zeta_{i} \mathbf{v}), L) \neq
			\sum_{i = 0}^{c+1} g[(\mathbf{u} + \zeta_{i} \mathbf{v}), L]\right]
			\leq
			2(c + 2) \mathcal{O}(\delta_{1}).
		\end{align*}
		From the second assumption of \Cref{lemma:std-ldt-hadamard-encoding}, we have
		\begin{align*}
			\Pr_{\bu, \bv, L}\left[\sum_{i = 0}^{c+1} g[(\mathbf{u} + \zeta_{i} \mathbf{v}), L]\neq 0\right]
			 & \; \leq \; \delta_{1}.
		\end{align*}
		Combining the last two inequalities, we get
		\begin{align*}
			\Pr_{\bu, \bv}\left[\sum_{i = 0}^{c+1} \mathrm{Corr}_{g}(\mathbf{u} + \zeta_{i} \mathbf{v}) \neq 0 \right]
			=  2\Pr_{\mathbf{u}, \mathbf{v}, L}\left[\sum_{i = 0}^{c+1} \had{r}{\mathrm{Corr}_{g}}((\mathbf{u} + \zeta_{i} \mathbf{v}), L) \neq 0\right]
			\; \leq \;
			\alpha(c)	\cdot \delta_{1},
		\end{align*}
		where \(\alpha(c)\) is some constant depending on \(c\).
		We will now apply \Cref{lemma:syntactic-test-degree-reduction},
		which holds when \(\alpha(c) \cdot \delta_{1}\)
		is less than \(1/2(c + 2)^{2}\).
		However if \(\alpha(c) \cdot \delta_{1} \geq 1/2(c + 2)^{2}\),
		then we have
		\begin{align*}
			\delta(\mathrm{Corr}_{g}, \mathcal{P}_{c})
			\leq 1 \leq
			2(c + 2)^{2} \alpha(c) \cdot \delta_{1},
		\end{align*}
		So independent of the value of \(\alpha(c) \cdot \delta_{1}\) we get
		\begin{align*}
			\delta(\mathrm{Corr}_{g}, \mathcal{P}_{c})
			\; \leq \;
			2 (c + 2)^{2} \alpha(c) \cdot \delta_{1} =
			\mathcal{O}_{c}(\delta_{1})
		\end{align*}
		finishing the proof of \Cref{claim:ldt-dist-corr-g}.
	\end{proof}
	Combining
	\Cref{eqn:ldt-lines-close-deg-d-3} and
	\Cref{claim:ldt-dist-corr-g}, we get,
	\begin{align*}
		\delta\paren{ g|_{\leq 1}, \hadfamily{1}{\mathcal{P}_{c}} }
		\; \leq \; \mathcal{O}(\delta_{1}) +  \mathcal{O}_{c}(\delta_{1})
		\; \leq \; &
		\mathcal{O}_{c}(\delta_{1}).
	\end{align*}
	Now let \(P\) be a degree-\(c\) polynomial such that
	$\delta\paren{ g|_{\leq 1}, \had{1}{P}}
		\; \leq \; \mathcal{O}_{c}(\delta_{1})$.
	Since \(P\) is a degree-\(c\) polynomial,
	we have from \Cref{fact:characterize-low-deg}
	\begin{align*}
		\sum_{i}^{c+1} \had{1}{P}( \mathbf{u} + \zeta_{i} \mathbf{v}, L)
		\; = \; \had{1}{ \sum_{i=1}^{c+1}  P(\mathbf{u} + \zeta_{i} \mathbf{v})}(L)
		\; = \; \had{1}{P}(\mathbf{u}, L).
	\end{align*}
	The above relation says that if for every $1 \leq i \leq c+1$,
	the query point
	$g[\mathbf{u} + \zeta_{i} \mathbf{v}, L]$
	equals
	$\had{1}{P}( \mathbf{u} + \zeta_{i} \mathbf{v}, L)$,
	then the sum returns
	$\had{1}{P}(\mathbf{u}, L)$.

	\noindent
	For every $1 \leq i \leq (c+1)$,
	the point $\mathbf{u} + \zeta_{i}  \mathbf{v}$,
	is uniformly distributed over $\F_{q}^{k}$,
	since $\mathbf{v}$ is uniformly distributed over $\F_{q}^{k}$ and $\zeta_{i} \neq 0$.
	For $L \sim \hpol$, we get
	\begin{align*}
		\Pr_{\bv, L}\brac{ \sum_{i=1}^{c+1} g[\mathbf{u} + \zeta_{i} \mathbf{v}, L] \; \ne \; \had{r}{P}(\mathbf{u}, L) }
		\; \leq \; (c+1) \cdot \Pr_{\mathbf{v}, L}\brac{ g(\mathbf{v}, L) \neq \had{r}{P}(\mathbf{v}, L) }.
	\end{align*}
	Since homogeneous linear polynomials $\hpol$ have density $1/2$ in the space of degree-$1$ polynomials $\F_{2}[\mathbf{Z}]^{\leq 1}$,
	for every $\bu \in \F_{q}^{k}$ we get
	\begin{align*}
		\Pr_{\bv, L}\brac{
			\sum_{i=1}^{c+1} g[\mathbf{u} + \zeta_{i} \mathbf{v}, L]
			\; \ne \;
			\had{r}{P}(\mathbf{u}, L) }
		\; \leq \; 2 (c+1) \cdot \delta\paren{g|_{\leq 1}, \had{1}{P}}
		= \mathcal{O}_{c}(\delta_{1})
	\end{align*}
	which finishes the proof.
\end{proof}

\section{Local Correction over Binary Alphabet}\label{sec:lc}
In this section, we discuss our local correction algorithm for degree-$1$ Hadamard encoding of a low-degree $\F_{q}$-polynomial, at a random linear polynomial in $\F_{2}[\mathbf{Z}]$. We start by developing some intuition for our test. Then we give a formal description in \Cref{algo:local-corr-F2}, state its guarantee in \Cref{thm:local-corr-F2}, and prove it in \Cref{app:proof-local-correction}. More particularly, we are interested in the following task:
\begin{center}
	\textit{\textcolor{emphcolor}{Suppose we have oracle access to $H: \F_{q}^{m} \times \F_{2}[\mathbf{Z}]^{\leq 1} \to \F_{2}$ that is close to $\had{1}{P}$ for a degree-$d$ polynomial. For a given $\mathbf{a} \in \F_{q}^{m}$ and a random linear polynomial $L$, we want to output $\had{1}{P}(L)$ with high probability.}}
\end{center}
We show that using degree-$1$ Hadamard encoding of $\setml^{\ast}(P_{\mathrm{lines}})$, we get a local correction algorithm that has constant query complexity over the alphabet $\F_{2}$. The algorithm is completely described in \Cref{algo:local-corr-F2}, its guarantee is stated in \Cref{thm:local-corr-F2}, and it is proved in \Cref{app:proof-local-correction}. Note that we are fixing $\mathbf{a} \in \F_{q}^{m}$ but allow the second coordinate to be random in $\hpol$. This might seem slightly unconventional for a local correction task. We start by explaining why this local correction suffices for us and give an intuition behind \Cref{algo:local-corr-F2}, before describing the algorithm.

\paragraph*{Why does the above local correction suffice?}In our PCP construction, we will have a situation similar to the following: Let $P_{1}$ and $P_{2}$ be degree-$d$ polynomials on $\F_{q}^{m}$ and we have oracles $\Pi_{1}$ and $\Pi_{2}$ that are close to $\had{1}{P_{1}}$ and $\had{1}{P_{2}}$ respectively. We would like to check whether $P_{1}(\mathbf{a})$ equals $P_{2}(\mathbf{a})$ for a fixed $\mathbf{a}$, via querying the oracles $\Pi_{1}$ and $\Pi_{2}$. Similar to \Cref{obs:delta-fn-had-approx}, if $P_{1}(\mathbf{a}) \neq P_{2}(\mathbf{a})$, then $\had{1}{P_{1}}(\mathbf{a}, L)$ is not equal to $\had{1}{P_{2}}(\mathbf{a}, L)$ for a constant-fraction of $L \in \hpol$. In short, we are actually interested only in local correction of the underlying low-degree polynomial, and that's why the second coordinate $L$ can be random. Also, the reason we want $L$ to be homogeneous and not arbitrary degree-$1$ is the following (it is similar to the reason we used homogeneous linear polynomials in $\vldt$, in \Cref{algo:new-low-deg-test-binary}): We would like to perform analogues of $\stdldt$ and $\stdcorr$ and $\had{1}{u+v}(Q)$ is not necessarily equal to $\had{1}{u}(Q) + \had{1}{v}(Q)$ for arbitrary $Q \in \F_{2}[\mathbf{Z}]^{\leq 1}$, but it does hold for $Q \in \hpol$.

\noindent
Our local correction algorithm over $\F_{2}$ has a close resemblance to our constant-query low-degree test over $\F_{2}$ (see \Cref{algo:new-low-deg-test-binary}). Recall the constant-query algorithm $\tabcorr$ over alphabet $\F_{q}^{d+1}$ (see \Cref{thm:local-correction}).
The idea is to simulate $\tabcorr$ using the two encodings, $\Psi$ (see \Cref{defn:degree-redn}) and $\had{1}{\cdot}$ (see \Cref{defn:deg-r-Hadamard}).
Similar to $\vldt$ (\Cref{algo:new-low-deg-test-binary}), we ask the prover to:
\begin{itemize}
	\item Firstly, for every line in $\F_{q}^{m}$, encode  its corresponding degree-$d$ univariate polynomial in the lines table using $\Psi_{d,c,m_{1}}$ for constant $c$.

	\item Secondly, now encode all the $\F_{q}$ functions using degree-$1$ Hadamard encoding.
\end{itemize}

We also want to emphasize that finally in our PCP construction, the local corrector will be applied after the low-degree test $\vldt$.
So the algorithm is designed while keeping in mind that ``the oracles are close to the expected oracles'',
i.e., if we expected the oracle to be degree-$1$ Hadamard encoding on a degree-$c$ polynomial, then it's close to such a function.
So it will be helpful to recall the guarantees provided by \Cref{thm:new-low-degree-test} before going to the local corrector.
We start by stating the expected oracles from an honest prover.

\paragraph*{Expected Oracles and Queries}
Suppose $P \in \mathcal{P}_{d}(m, \F_{q})$ is a degree-$d$ polynomial and let $\setml^{\ast}(P_{\mathrm{lines}})$ be function defined in \Cref{subsec:ldt-F2}. The expected oracles are:
\begin{gather*}
	\had{1}{P} : \F_{q}^{m} \times \F_{2}[\mathbf{Z}]^{\leq 1}  \to \F_{2} \\
	\had{1}{\setml^{\ast}(P_{\mathrm{lines}})} : \F_{q}^{2m} \times \F_{q}^{cm_{1}} \times \F_{2}[\mathbf{Z}]^{\leq 1} \to \F_{2}.
\end{gather*}
Suppose the verifier has access to oracles
$f: \F_{q}^{m} \times \F_{2}[\mathbf{Z}]^{\leq 1} \to \F_{2}$ and
$f' : \F_{q}^{2m} \times \F_{q}^{cm_{1}} \times \F_{2}[\mathbf{Z}]^{\leq 1} \to \F_{2}$.
Following $\tabcorr$, the verifier wants to check if
$f'[(\mathbf{a}, \mathbf{b}), \Phi(\lambda), \cdot]$ equals
$f[\mathbf{a} + \lambda \mathbf{b}, \cdot]$.
One would like to do something similar to Line $4$ of $\vldt$.
However, for this we need the guarantee that $f'[(\mathbf{a}, \mathbf{b}), \cdot, \cdot]$ is Hadamard encoding of a degree-$c$ polynomial in $\mathbf{X}$ variables.
From $\vldt$, we only have this guarantee for most of the lines (i.e., pairs $(\mathbf{a}, \mathbf{b})$),
but here $\mathbf{a}$ is fixed, and therefore the guarantee from $\vldt$ is not sufficient.
Thus we first do a $\hadtest$ and then an analogue of $\stdldt$ (similar to Line $3$ in $\vldt$) on the function
$f'[(\mathbf{a}, \mathbf{b}), \cdot, \cdot]$.
After this, the verifier can return the corrected value of $f'[(\mathbf{a}, \mathbf{b}), \Phi(0), L]$. This is similar to Line $4$ in $\vldt$, except there is no comparison against $f$.
We now describe our algorithm.

\begin{algobox}
	\begin{algorithm}[H]
		\caption{Local Correction over $\F_{2}$ - $\corr$}
		\label{algo:local-corr-F2}
		\DontPrintSemicolon

		\KwIn{
		Parameters $d,c$,
		Evaluation point $\mathbf{a} \in \F_{q}^{m}$,
		Oracles
		$f: \F_{q}^{m} \times \F_{2}[\mathbf{Z}]^{\leq 1} \to \F_{2}$,
		$f': \F_{q}^{2m} \times \F_{q}^{cm_{1}} \times \F_{2}[\mathbf{Z}]^{\leq 1} \to \F_{2}$, and
		$L \in \hpol$}

		\vspace{2mm}

		Sample
		$\mathbf{b} \sim \F_{q}^{m}$,
		$\mathbf{u}, \mathbf{v} \sim \F_{q}^{cm_{1}}$,
		$\lambda \in \F_{q}^{\times}$ \;

		\vspace{2mm}

		Run $\hadtest^{f'[(\mathbf{a}, \mathbf{b}), \mathbf{u}, \cdot]}_{1}$ \;

		\vspace{2mm}

		Check if $f'[(\mathbf{a}, \mathbf{b}), \mathbf{u}, L] + \sum_{i=1}^{c+1} f'[(\mathbf{a}, \mathbf{b}), \mathbf{u} + \zeta^{i} \mathbf{v}, L] \eqq 0$ \;

		\vspace{2mm}

		Check if
		$\sum_{i=1}^{c+1} f'[(\mathbf{a}, \mathbf{b}), \Phi(\lambda) + \zeta^{i} \mathbf{v}, L]  \eqq f[\mathbf{a} + \lambda \mathbf{b}, L]$ \;

		\vspace{2mm}

		\lIf{any of the above tests return \reject}{
			\Return{\reject}
		}

		\vspace{2mm}

		\lElse{\Return{ $\sum_{i=1}^{c+1} f'[(\mathbf{a}, \mathbf{b}), \Phi(0) + \zeta^{i} \mathbf{v}, L]$ }}
	\end{algorithm}
\end{algobox}

\noindent
\begin{theorem}[Local Correction over $\F_{2}$]\label{thm:local-corr-F2}
	There exists an absolute constant $C > 0$
	such that for every set of parameters
	$m_{1},c,d, d',q \in \mathbb{N}$
	satisfying
	$m_{1} \geq c \geq 2$, $d < m_{1}^{c} \leq d'/c$, \(q\) is a power of $2$ such that there exists an element $\zeta\in \F_q^\times$ of order $c+1$ and \(q> Cd'^{3}\),
	the following holds:
	\begin{enumerate}
		\item \textbf{Completeness:}
		      If there exists $P \in \mathcal{P}_{d}(m,\F_{q})$ such that $f = \had{1}{P}$ and $f' = \had{1}{P'}$, where $P' = \setml_{d,c,m_{1}}^{\ast}\paren{P_{\mathrm{lines}}}$,
		      then for every $\mathbf{a}$ and $L$, $\corr^{f,f'}_{d,c}(\mathbf{a},L ; \bb, \bu, \bv, \lambda)$ returns $\had{1}{P}(\mathbf{a}, L)$ with probability $1$.
		\item \textbf{Soundness:}
		      Assume there exists a polynomial
		      $P \in \mathcal{P}_{d'}(m,\F_{q})$ such that $f$ is $\delta$-close to $\had{1}{P}$.\newline
		      Then for every $f'$ and $\mathbf{a} \in \F_{q}^{m}$, the following holds:
		      If \(\corr^{f,f'}_{d,c}(\mathbf{a}, \cdot )\) rejects with
		      probability at most \(\eta\), then we have
		      \begin{align*}
			      \Pr_{L,\bb, \bu, \bv,\lambda}[
				      \corr^{f,f'}_{d,c}(\mathbf{a},L ; \mathbf{b}, \mathbf{u}, \mathbf{v}, \lambda)
				      \text{ returns }
				      \had{1}{P}(\mathbf{a}, L)]
			      \geq
			      1 - \mathcal{O}_{c}\left(d'/q + \eta + \delta\right),
		      \end{align*}
		      where in the above probability, $L \sim \hpol$.
	\end{enumerate}
	Furthermore, $\corr^{f,f'}_{d,c}(\mathbf{a}, L; )$ makes $\bigO(c)$ queries to the oracle, uses $\bigO(m \log q + cm_{1} \log q)$ bits of randomness, and runs in time $\mathrm{poly}(m_1^c, \log q)$.
\end{theorem}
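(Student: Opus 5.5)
Completeness is the easy half and I would do it first, mirroring the completeness of \Cref{thm:new-low-degree-test}. For honest oracles, $f'[(\mathbf{a},\mathbf{b}),\cdot,\cdot]$ is the degree-$1$ Hadamard encoding of the degree-$c$ polynomial $h_{\mathbf{a},\mathbf{b}}:=\setml(P_{\mathrm{lines}}(\mathbf{a},\mathbf{b}))$. Hence $\hadtest_1$ accepts. Line 3 holds by \Cref{fact:characterize-low-deg}, using the linearity of $L\circ\rho$ for homogeneous $L$. Line 4 holds by \Cref{fact:local-correction-via-lines} and \Cref{claim:eval-set-mult-degree-redn}, since $h_{\mathbf{a},\mathbf{b}}(\Phi(\lambda))=P(\mathbf{a}+\lambda\mathbf{b})$. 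The same identity at $\lambda=0$ shows that the output equals $L(\rho(P(\mathbf{a})))=\had{1}{P}(\mathbf{a},L)$.

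For soundness I would follow the proof of \Cref{lemma:ldt-apply-og-ldt}, but with $\mathbf{a}$ fixed. For each $\mathbf{b}$, let $Q_{\mathbf{b}}$ be the degree-$c$ polynomial whose $\had{1}{\cdot}$ encoding is closest to $f'[(\mathbf{a},\mathbf{b}),\cdot,\cdot]$, and set $F(\mathbf{b}):=Q_{\mathbf{b}}\circ\Phi$. This $F(\mathbf{b})$ is a univariate polynomial of degree at most $c(m_1-1)m_1^{c-1}\le d'$. Let $\mathrm{Rej}_{\mathbf{b}}$ be the rejection probability given $\mathbf{b}$, so that $\mathbb{E}_{\mathbf{b}}[\mathrm{Rej}_{\mathbf{b}}]\le\eta$. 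Applying \Cref{lemma:std-ldt-hadamard-encoding} with $g=f'[(\mathbf{a},\mathbf{b}),\cdot,\cdot]$ and $r=1$ gives
$\Pr_{\mathbf{v},L}\big[\sum_i f'[(\mathbf{a},\mathbf{b}),\mathbf{w}+\zeta^i\mathbf{v},L]\neq \had{1}{Q_{\mathbf{b}}}(\mathbf{w},L)\big]\le \mathcal{O}_c(\mathrm{Rej}_{\mathbf{b}})$ for every fixed $\mathbf{w}$. I would use this at both $\mathbf{w}=\Phi(\lambda)$ and $\mathbf{w}=\Phi(0)$. Averaging over $\mathbf{b}$, with probability $1-\mathcal{O}_c(\eta)$ the returned value equals $\had{1}{F(\mathbf{b})(0)}(L)$. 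Also, Line 4 passing (probability $\ge 1-\eta$) implies $\had{1}{F(\mathbf{b})(\lambda)}(L)=f[\mathbf{a}+\lambda\mathbf{b},L]$ up to $\mathcal{O}_c(\eta)$ error. The point $\mathbf{a}+\lambda\mathbf{b}$ is uniform, so $f[\mathbf{a}+\lambda\mathbf{b},L]=\had{1}{P}(\mathbf{a}+\lambda\mathbf{b},L)$ except with probability $2\delta$, using the density-$1/2$ argument for homogeneous $L$. Hence, outside an $\mathcal{O}_c(\eta+\delta)$ event over $(\mathbf{b},\lambda,L)$, we have $\had{1}{F(\mathbf{b})(\lambda)}(L)=\had{1}{P(\mathbf{a}+\lambda\mathbf{b})}(L)$.

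The main obstacle is converting this agreement at a single random $(\lambda,L)$ into the conclusion $F(\mathbf{b})(0)=P(\mathbf{a})$. The plan is to compare the two univariate polynomials $F(\mathbf{b})(Y)$ and $P(\mathbf{a}+Y\mathbf{b})$, both of degree at most $d'$. Call $\mathbf{b}$ \emph{bad} if these polynomials differ. For a bad $\mathbf{b}$, they agree on at most $d'$ values of $\lambda$. For each disagreeing $\lambda$, a random homogeneous $L$ separates the two values with probability exactly $1/2$ by \Cref{obs:delta-fn-had-approx}. So the agreement event has probability at most $1/2+d'/(q-1)$ given a bad $\mathbf{b}$. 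It follows that $\Pr[\mathbf{b}\text{ bad}]\cdot(1/2-d'/(q-1))\le\mathcal{O}_c(\eta+\delta)$, i.e. $\Pr[\mathbf{b}\text{ bad}]=\mathcal{O}_c(\eta+\delta+d'/q)$. When $\mathbf{b}$ is not bad, $F(\mathbf{b})(0)=P(\mathbf{a})$, and the returned value equals $\had{1}{P}(\mathbf{a},L)$ except on the $\mathcal{O}_c(\eta)$ self-correction error. A union bound then gives the claimed $1-\mathcal{O}_c(d'/q+\eta+\delta)$.

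The query and randomness counts are read off directly from \Cref{algo:local-corr-F2}. The running time is dominated by computing $\Phi(\lambda)$, which takes $\mathrm{poly}(m_1^c,\log q)$ field operations.
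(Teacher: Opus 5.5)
Your proposal is correct and takes essentially the same route as the paper. It uses the same proxies $Q_{\mathbf{b}}$ and $F(\mathbf{b})=Q_{\mathbf{b}}\circ\Phi$, applies \Cref{lemma:std-ldt-hadamard-encoding} for each $\mathbf{b}$ at $\Phi(\lambda)$ and $\Phi(0)$ and averages (the paper's \Cref{clm:lcf2}), uses the same three-term union bound, and ends with a univariate Schwartz--Zippel step on the line through $\mathbf{a}$. Your ``bad $\mathbf{b}$'' averaging is a multiplicative form of the paper's additive step and even drops the $d'/q$ term from $\Pr[\mathbf{b}\text{ bad}]$; just remember to add the rejection probability $\eta$ in the final union bound.
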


\noindent
We prove \Cref{thm:local-corr-F2} in \Cref{app:proof-local-correction}.

\section{Zero-on-Grid Test over Binary Alphabet}\label{sec:zero-test}
In this section,
we will describe a test to decide whether a low-degree polynomial vanishes on a subset of its domain via its Hadamard encoding. We start by developing some intuition for our test. Then we give a formal description in \Cref{algo:zero-on-variety}, state its guarantee in \Cref{lemma:zero-on-variety}, and prove it in \Cref{app:proof-zero-test}. We are interested in the following task:
\begin{center}
	\textit{{\textcolor{emphcolor}{Fix a subset $H^{m} \subset \F_{q}^{m}$.
					Given oracle access to the Hadamard encoding of a low-degree $m$-variate polynomial $P$,
					decide whether $P$ vanishes on $H^{m}$.}}}
\end{center}
In our PCP construction,
$H$ will be of size $\bigO(\log n)$ and $P$ will be of degree $\bigO(m|H|)$.
Let us first understand a test for a simpler task.
Assume we have oracle access to $P$ and we want a test over the alphabet $\F_{q}^{d+1}$.
This exact task was done in \cite{ABSSW-PCP-one-composition},
using the vanishing certificate polynomial (see \Cref{coro:meta-vanishing-poly}).
Recall that $P$ vanishes on $H^{m}$ if and only if
the vanishing certificate polynomial exists.
The test is as follows.
The verifier gets oracle access to $P$ and to the vanishing certificate $\mathcal{M}_{P}$
(for simplicity, assume for now that both the polynomials are low-degree polynomials of the correct degree).
The verifier samples $\mathbf{a} \sim \F_{q}^{m}$ and checks:
\begin{itemize}
	\item Is $\mathcal{M}_{P}(\mathbf{a}, \mathbf{0})$ equal to $0$?
	\item Is $\mathcal{M}_{P}(\mathbf{a}, Z_{H}(\mathbf{a}))$ equal to $P(\mathbf{a})$?
\end{itemize}
The above test has good soundness again because of the polynomial distance lemma over $\F_{q}$ (\Cref{thm:odlsz}).
In the construction of PCP in \cite{ABSSW-PCP-one-composition},
they only had the guarantee that the oracle,
which is supposed to be $\mathcal{M}_{P}$,
is in fact only close to a low-degree polynomial.
Since the two queried points
$(\mathbf{a}, \mathbf{0})$ and
$(\mathbf{a}, Z_{H}(\mathbf{a}))$ are not uniformly random,
they cannot directly apply the polynomial distance lemma.
Thus~\cite{ABSSW-PCP-one-composition} used local correction on the oracle at these points.
Note that $\mathcal{M}_{P}$ is a degree-$d$ polynomial,
which means the verifier does the local correction using the lines table,
$\tabcorr$ (see \Cref{algo:local-correction}).\\
Now, coming back to our setting where we only have access to Hadamard encodings and not the low-degree polynomials themselves.
In this case, the verifier samples
$\mathbf{a} \sim \F_{q}^{m}$, $L \sim \hpol$
and checks:
\begin{itemize}
	\item Is $\had{1}{\mathcal{M}_{P}}((\mathbf{a}, \mathbf{0}), L)$ equal to $0$?
	\item Is $\had{1}{\mathcal{M}_{P}}((\mathbf{a}, Z_{H}(\mathbf{a})), L)$ equal to $\had{1}{P}(\mathbf{a}, L)$?
\end{itemize}
Finally, the verifier needs to do a local correction at these points. In the above discussion, the verifier uses $\tabcorr$. To make it work over $\F_{2}$, the verifier instead uses $\corr$ (see \Cref{algo:local-corr-F2}). We now give the description of the test.

\begin{algobox}
	\begin{algorithm}[H]
		\caption{Zero-on-Grid Test over $\F_{2}$: $\vvanish$}
		\label{algo:zero-on-variety}

		\DontPrintSemicolon

		\KwIn{Subset $H\subseteq \F_q$, Parameters $d,c$, Oracle access to
		$f: \F_{q}^{m} \times \F_{2}[\mathbf{Z}]^{\leq 1} \to \F_{2}$,
		$\Pi: \F_{q}^{2m} \times \F_{2}[\mathbf{Z}]^{\leq 1} \to \F_{2}$, and
		$\Pi' : \F_{q}^{4m} \times \F_{q}^{cm_{1}} \times \F_{2}[\mathbf{Z}]^{\leq 1} \to \F_{2}$}

		Sample
		$\mathbf{a} \sim \F_{q}^{m}$,
		$\bm{\beta} \sim \F_{q}^{2m}$,
		$\mathbf{v} \sim \F_{q}^{cm_{1}}$,
		$\lambda \sim \F_{q}^{\times}$, and
		$L \sim \hpol$

		\vspace{2mm}

		Check if
		$\corr^{\Pi, \Pi'}_{d,c}((\ba, \mathbf{0}), L ; \bm{\beta},\bu, \bv, \lambda ) \; \eqq \; 0$ \;

		\vspace{2mm}

		Check if
		$\corr^{\Pi, \Pi'}_{d,c}((\ba, Z_{H}(\ba)), L ; \bm{\beta},\bu, \bv, \lambda ) \; \eqq \; f[\ba, L]$

		\vspace{2mm}

		\lIf{either of the above test returns $\reject$}{
			\Return{\reject}
		}

		\lElse{\Return{\accept}}

	\end{algorithm}
\end{algobox}

\noindent
\begin{theorem}[Zero-on-Grid Test]\label{lemma:zero-on-variety}
	There exists an absolute constant $C > 0$
	such that for every set of parameters
	$m, m_{1},c,d, d',q \in \mathbb{N}$
	satisfying
	$m_{1} \geq c \geq 2$, $d < m_{1}^{c} \leq d'/c$, \(q\) is a power of \(2\) such that there exists an element $\zeta\in \F_q^\times$ of order $c+1$, \(q> Cd'^{3}\),
	and \(H \subseteq \F_{q}\),
	the following holds:
	\begin{enumerate}
		\item \textbf{Completeness:}
		      Suppose $f = \had{1}{P}$ for a polynomial
		      $P \in \mathcal{P}_{d}(m, \F_{q})$ satisfying $P|_{H^{m}} \equiv 0$.
		      Let $M \in \mathcal{P}_{d}(2m, \F_{q})$
		      be the corresponding vanishing-certificate polynomial from \Cref{coro:meta-vanishing-poly}.
		      If $\,\Pi = \had{1}{M}$ and $\Pi' = \had{1}{M'}$ where
		      $M' = \setml_{d,c,m_{1}}^{\ast}(M_{\mathrm{lines}})$,
		      then $\vvanish^{f,\Pi,\Pi'}_{H}$ returns $\accept$ with probability $1$.

		\item \textbf{Soundness:}
		      Assume exists a polynomial $P \in \mathcal{P}_{d'}(m,\F_{q})$
		      such that $f$ is $\delta$-close to $\had{1}{P}$
		      and there exists a polynomial	$M \in \mathcal{P}_{d'}(2m, \F_{q})$
		      such that $\Pi$ is $\delta$-close to $\had{1}{M}$.\newline
		      If $P|_{H^{m}} \not \equiv 0$,
		      then for all $\Pi'$, $\vvanish^{f,\Pi,\Pi'}_{H}$ returns $\reject$ with probability at least
		      \begin{align*}
			      \eta :=
			      \Omega_{c}(1) - \frac{2d' |H|}{q} - \delta.
		      \end{align*}
		      Furthermore, $\vvanish^{f,\Pi,\Pi'}_{H}$ makes $\bigO(c)$ oracle queries,
		      uses $\bigO(m \log q + cm_{1} \log q)$ random bits,
		      and runs in time $\mathrm{poly}(m_1^c, \log q)$.
	\end{enumerate}
\end{theorem}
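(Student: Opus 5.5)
Completeness is a direct unwinding of definitions. If $P|_{H^m}\equiv 0$, then \Cref{coro:meta-vanishing-poly} gives $M$ of degree at most $d$ with $M(\ba,\mathbf{0})=0$ and $M(\ba,Z_H(\ba))=P(\ba)$ for every $\ba$. Since $\Pi=\had{1}{M}$ and $\Pi'=\had{1}{\setml^\ast(M_{\mathrm{lines}})}$, the completeness part of \Cref{thm:local-corr-F2} (applied with $2m$ variables) says that both calls to $\corr$ return $\had{1}{M}(\cdot,L)$ with probability $1$ and never reject. The first call therefore returns $L(\rho(0))=0$, because $L$ is homogeneous. The second returns $L(\rho(P(\ba)))=f[\ba,L]$. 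So both checks pass.

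For soundness I would argue by contradiction, with all probabilities over the verifier's coins. Suppose $\vvanish$ rejects with probability $\eta'<\eta$. Then each call to $\corr$, averaged over $\ba$, rejects with probability at most $\eta'$.

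The one subtle point is that \Cref{thm:local-corr-F2} is stated for a fixed evaluation point, while here the points $(\ba,\mathbf{0})$ and $(\ba,Z_H(\ba))$ are random. I would handle this by applying the soundness part of \Cref{thm:local-corr-F2} pointwise for each fixed $\ba$, with rejection probability $\eta_{\ba}$, and then averaging. Since the bound $1-\mathcal{O}_c(d'/q+\eta_{\ba}+\delta)$ is linear in $\eta_{\ba}$, averaging gives the following. Except with probability $\mathcal{O}_c(d'/q+\eta'+\delta)$, the first call returns $\had{1}{M}((\ba,\mathbf{0}),L)$ and the second returns $\had{1}{M}((\ba,Z_H(\ba)),L)$. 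The degree hypotheses needed for this are exactly the assumed $\delta$-closeness of $\Pi$ to $\had{1}{M}$ with $M\in\mathcal{P}_{d'}$. By $\delta$-closeness of $f$ and the density-$1/2$ argument for homogeneous $L$, we also have $f[\ba,L]=\had{1}{P}(\ba,L)$ except with probability $2\delta$.

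It then suffices to show that the ideal test rarely passes. The ideal test accepts iff $L(\rho(M(\ba,\mathbf{0})))=0$ and $L(\rho(M(\ba,Z_H(\ba))-P(\ba)))=0$. Let $\mathcal{M}(\mathbf{X},\mathbf{Y}):=M(\mathbf{X},\mathbf{Y})-M(\mathbf{X},\mathbf{0})$, which lies in $\mathbb{I}(\mathbf{Y})$ and has degree at most $d'$. If $M(\mathbf{X},\mathbf{0})\equiv 0$ and $M(\mathbf{X},Z_H(\mathbf{X}))\equiv P(\mathbf{X})$ as polynomials, then $\mathcal{M}$ would be a vanishing certificate for $P$. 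By \Cref{coro:meta-vanishing-poly} (really the trivial direction, that $P=\sum Y_i$-multiples evaluated at $Z_H$ vanishes on $H^m$), $P$ would then vanish on $H^m$, a contradiction. So one of the two polynomials $M(\mathbf{X},\mathbf{0})$ or $M(\mathbf{X},Z_H(\mathbf{X}))-P(\mathbf{X})$ is nonzero. These have degrees at most $d'$ and $d'|H|$ respectively.

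By \Cref{thm:odlsz}, the nonzero one is nonzero at a random $\ba$ with probability at least $1-d'|H|/q$. Given that, a random homogeneous $L$ detects it with probability $1/2$, as in \Cref{obs:delta-fn-had-approx}. Hence the ideal test rejects with probability at least $\tfrac12(1-d'|H|/q)$. Combining with the error terms gives
\[
\eta' \ge \tfrac12 - \tfrac{d'|H|}{2q} - \mathcal{O}_c(d'/q+\eta'+\delta).
\]
Rearranging yields a rejection probability of $\Omega_c(1)-2d'|H|/q-\mathcal{O}_c(\delta)$. The $\mathcal{O}_c$ constant in front of $\delta$ would be absorbed by tracking constants, or by accepting the form $\Omega_c(1)-\ldots$ with the $\Omega_c(1)$ scaled.

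I expect the main obstacle to be this bookkeeping: the $\eta'$ on the right side must be moved to the left, which forces a $1/(1+\mathcal{O}_c(1))$ factor and hence the $\Omega_c(1)$ form. The pointwise-then-average use of \Cref{thm:local-corr-F2} is mildly delicate but straightforward. The query and randomness bounds follow from those of $\corr$, since it is invoked twice.
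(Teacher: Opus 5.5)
Your proposal is correct and is essentially the paper's proof in contrapositive form. The paper shows that a small rejection probability forces $M(\mathbf{X},\mathbf{0})\equiv 0$ and $M(\mathbf{X},Z_H(\mathbf{X}))\equiv P(\mathbf{X})$, using the same union bound (\Cref{thm:local-corr-F2}, closeness of $f$, \Cref{obs:delta-fn-had-approx}, \Cref{thm:odlsz}), and then concludes by the easy direction of \Cref{coro:meta-vanishing-poly}. Your fixed-point-then-average use of \Cref{thm:local-corr-F2} and your $2\delta$ term for $f$ are more careful than the paper's write-up. The exact $-\delta$ form also follows directly: since $\eta'\ge 0$, you can enlarge the coefficient of $\eta'$ to match that of $\delta$ before dividing.
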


\noindent
We prove \Cref{lemma:zero-on-variety} in \Cref{app:proof-zero-test}.

\section{The PCP Verifier}\label{sec:pcp}
In this section, we give the description of a PCP verifier $\mathcal{V}$ and the expected proof for the language $3$-$\mathsf{COLOR}$.

\paragraph*{Setup}\label{para:params} Throughout this section, we will use $n$ to denote the number of vertices in an input graph $G = (V,E)$. We will identify (after adding a suitable number of isolated vertices) the vertices of the input graph $V$ with $H^{m}$ for suitable $m\in \N$ and a subset $H \subseteq \F_{q}$ (for an appropriate choice of the field size $q=2^t$ and the set size $h=|H|$).
Let $\rho:\F_q \to \F_2^t$ be an $\F_2$-linear bijection and $c\ge 2$ be a constant such that $\omega,\zeta\in \F_q^{\times}$ are of order $3$ and $c+1$ respectively (assume that $q-1$ is divisible by $3$ and $c+1$). Let $D,m_1 \in \N$ be given by $D=c_2hm$ (for some large enough constant $c_2\in \N$) and $(m_1-1)^c\le D < m_1^c$. Let $\Psi = \Psi_{D,c,m_{1}}$ and $\Psi^*=\Psi^*_{D,c,m_1}$ be from \Cref{defn:degree-redn}, $\Phi = \Phi_{c,m_{1}}$ be from \Cref{eqn:reverse-degree-redn}, and $\had{r}{\cdot}$ be from~\Cref{defn:deg-r-Hadamard}. Recall that $\cP_d(m,\F_q)$ denotes the family of $m$-variate degree-$d$ polynomials over $\F_q$.

\paragraph*{Expected Proof} Assume $G = (V,E) \in 3$-$\mathsf{COLOR}$ and let $\mathsf{Color}:H^m\to \{1,\omega,\omega^2\}$ be a proper 3-coloring of $G$. Before describing the PCP verifier, we will first describe the corresponding proof oracles. For this, we use the shorthand of $\widehat{f} = \LDE(f)$ to denote the {\em low-degree extension} of a function $f:H^{m'} \to \F_q$ (see~\Cref{fact:low-deg-extension}). There are $8=4\times 2$ proof oracles:
\begin{enumerate}
	\item Let $\chi:=\widehat{\mathsf{Color}} \in \cP_D(m,\F_q)$ and $\chi_{\mathrm{lines}}$ denote the lines table of $\chi$.\\ \\
	      \textcolor{magenta}{The proof consists of $\had{3}{\chi}: \F_{q}^{m} \times \F_{2}[\mathbf{Z}]^{\leq 3} \to \F_{2}$ and \newline
	      $\had{3}{\setml^*(\chi_{\mathrm{lines}})}:
		      \F_{q}^{2m} \times \F_{q}^{cm_{1}} \times \F_{2}[\mathbf{Z}]^{\leq 3}
		      \to
		      \F_{2}$.}

	\item Let $\mathsf{Val} \in \cP_D(m,\F_q)$ denote the {\em valid-coloring polynomial}, defined as:
	      \begin{align*}
		      \mathsf{Val}(\veca) \; := \; \prod_{i\in \set{1,\omega,\omega^{2}} }(\chi(\veca)-i) \; = \; \chi(\veca)^{3} - 1.
	      \end{align*}

	      \begin{observation}\label{obs:zero-valid-coloring}
		      Suppose $\chi: V \to \set{1,\omega,\omega^{2}}$ is an extension of a (not necessarily proper) 3-coloring of $G$. Then for every $\mathbf{a} \in V$,
		      \begin{align*}
			      \mathsf{Val}(\mathbf{a}) \; = \; 0.
		      \end{align*}
		      In other words, the polynomial $\mathsf{Val}$ is identically zero on the subset $H^{m} \subseteq \F_{q}^{m}$. It is easy to see that the converse also holds, i.e., if $\mathsf{Val}$ is identically zero on the subset $H^{m}$, then $\chi$ gives a valid coloring of $V$.
	      \end{observation}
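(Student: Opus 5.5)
The statement to prove is \Cref{obs:zero-valid-coloring}: if $\chi$ agrees on $V=H^m$ with a (not necessarily proper) $3$-coloring taking values in $\set{1,\omega,\omega^{2}}$, then $\mathsf{Val}$ vanishes on $H^m$; and conversely, vanishing of $\mathsf{Val}$ on $H^m$ forces $\chi$ to take only these values there. The plan is a pointwise computation on the cube roots of unity in $\F_q$, using nothing beyond the field facts of the setup.

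\textbf{Forward direction.} We first verify the polynomial identity
\[
\prod_{i\in\{1,\omega,\omega^2\}}(Y-i)=Y^3-1
\]
in $\F_q[Y]$. The three elements $1,\omega,\omega^2$ are pairwise distinct, since $\omega\neq 1$ has order $3$. Each of them is a root of $Y^3-1$. A monic cubic with three distinct roots factors as the product of the corresponding linear factors, which gives the identity.

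Now fix $\mathbf{a}\in V$. By hypothesis $\chi(\mathbf{a})\in\set{1,\omega,\omega^2}$, so one factor $\chi(\mathbf{a})-i$ of the product is zero. Hence $\mathsf{Val}(\mathbf{a})=\chi(\mathbf{a})^3-1=0$.

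\textbf{Converse.} Suppose $\mathsf{Val}(\mathbf{a})=0$ for some $\mathbf{a}\in H^m$. Then $\prod_i(\chi(\mathbf{a})-i)=0$. Since $\F_q$ is a field, it has no zero divisors, so some factor vanishes. Therefore $\chi(\mathbf{a})\in\set{1,\omega,\omega^2}$. This holds at every point of $H^m$, so $\chi|_{H^m}$ is a $3$-coloring of $V$.

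This is only a coloring with three colors, not necessarily a proper one. Properness is not part of the observation; it will be enforced by a separate edge constraint later in the construction.

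There is essentially no obstacle. The only points needing care are the existence of $\omega$, which is guaranteed because $3\mid q-1$ by the setup, and the distinctness of $1,\omega,\omega^2$.
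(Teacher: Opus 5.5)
Your proof is correct and follows the paper's reasoning: the paper defines $\mathsf{Val}$ directly as $\prod_{i\in\{1,\omega,\omega^2\}}(\chi-i)=\chi^3-1$ and treats both directions as immediate, which amounts to your zero-factor argument for one direction and the no-zero-divisors argument for the converse. Your explicit check that $1,\omega,\omega^2$ are distinct roots, so that the factorization $Y^3-1=(Y-1)(Y-\omega)(Y-\omega^2)$ holds in characteristic $2$, is a worthwhile detail the paper leaves unstated. That detail is also what the later soundness argument implicitly relies on when it works with $\widetilde{\chi}^3-1$.
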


	      \noindent
	      Since the polynomial $\mathsf{Val}$ vanishes on $H^{m}$, \Cref{coro:meta-vanishing-poly} tells us there exists a polynomial $\mathcal{M}_{\mathsf{Val}} \in \F_{q}[X_{1},\ldots,X_{m}, Y_{1}, \ldots, Y_{m}]$ satisfying the mentioned properties in \Cref{coro:meta-vanishing-poly}. Let $\mathcal{M}_{\mathsf{Val}, \mathrm{lines}}$ denote the lines table of $\mathcal{M}_{\mathsf{Val}}$.\\ \\
	      \textcolor{magenta}{The proof consists of $\had{1}{\mathcal{M}_{\mathsf{Val}}}: \F_{q}^{2m} \times \F_{2}[\mathbf{Z}]^{\leq 1} \to \F_{2}$ and\newline $\had{1}{\setml^*(\mathcal{M}_{\mathsf{Val}, \mathrm{lines}})}: \F_{q}^{4m} \times \F_{q}^{cm_{1}} \times \F_{2}[\mathbf{Z}]^{\leq 1} \to \F_{2}$.}

	\item Let $\chi'\in \cP_D(2m,\F_q)$ denote the polynomial defined as $$\chi'(\mathbf{a}, \mathbf{b}) = \chi(\mathbf{a}) - \chi(\mathbf{b}).$$\\
	      \textcolor{magenta}{The proof consists of $\had{3}{\chi'}: \F_{q}^{2m} \times \F_{2}[\mathbf{Z}]^{\leq 3} \to \F_{2}$ and \newline
	      $\had{3}{\Psi^*(\chi'_{\mathrm{lines}})}: \F_{q}^{4m} \times \F_{q}^{cm_{1}} \times \F_{2}[\mathbf{Z}]^{\leq 3} \to \F_{2}$.}
	\item Let $\mathsf{Prop} \in \cP_D(2m,\F_q)$ denote the {\em proper-coloring polynomial}, defined as:
	      \begin{align*}
		      \mathsf{Prop}(\veca,\vecb) \; := \;\widehat{E}(\veca,\vecb)\;\cdot\; \prod_{i\in \{1,\omega,\omega^{2}\}}(\chi'(\veca,\vecb)- i) \; = \widehat{E}(\veca,\vecb) \;\cdot \; (\chi'(\veca,\vecb)^{3}-1).
	      \end{align*}

	      \begin{observation}\label{obs:zero-proper-coloring}
		      Suppose $\chi: V \to \set{1,\omega,\omega^{2}}$ is an extension of a proper $3$-coloring of $G$. Then for every $\mathbf{a}, \mathbf{b} \in V$,
		      \begin{align*}
			      \mathsf{Prop}(\mathbf{a}, \mathbf{b}) \; = \; 0.
		      \end{align*}
		      In other words, the polynomial $\mathsf{Prop}$ is identically zero on $H^{2m}$. It is easy to see that the converse is also true, i.e., if $\mathsf{Prop}$ is identically zero on $H^{2m}$, then $\chi$ gives a proper $3$-coloring of $G$.
	      \end{observation}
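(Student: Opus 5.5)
We need to prove Observation~\ref{obs:zero-proper-coloring}: if $\chi$ restricted to $V=H^m$ is a proper $3$-coloring with values in $\{1,\omega,\omega^2\}$, then $\mathsf{Prop}$ vanishes on $H^{2m}$; and conversely, if $\mathsf{Prop}$ vanishes on $H^{2m}$, then $\chi$ gives a proper coloring. For the converse we assume that $\chi$ takes values in $\{1,\omega,\omega^2\}$ on $H^m$, i.e.\ that $\mathsf{Val}$ vanishes (Observation~\ref{obs:zero-valid-coloring}). Without that assumption $\chi$ need not be a coloring at all, so this is how the converse should be read. Here $\widehat{E}$ is the low-degree extension of the edge indicator $E:H^{2m}\to\{0,1\}$, so by Fact~\ref{fact:low-deg-extension} it agrees with $E$ on $H^{2m}$.

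The plan is a pointwise case analysis. Fix $(\veca,\vecb)\in H^{2m}$. If $(\veca,\vecb)\notin E$, then $\widehat{E}(\veca,\vecb)=0$ and so $\mathsf{Prop}(\veca,\vecb)=0$. If $(\veca,\vecb)\in E$, then $\widehat{E}(\veca,\vecb)=1$, and we need $\chi'(\veca,\vecb)^3\ne 1$ to be equivalent to $\chi(\veca)\neq\chi(\vecb)$.

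The one real subtlety is the algebra, since $\F_q$ has characteristic $2$. For colors $x,y\in\{1,\omega,\omega^2\}$ we have $x-y=x+y$. If $x=y$, then $x-y=0$ and $0^3-1\neq0$. If $x\ne y$, the pair $\{x,y\}$ is one of $\{1,\omega\}$, $\{1,\omega^2\}$, $\{\omega,\omega^2\}$. Since $1+\omega+\omega^2=0$ (because $\omega\neq 1$ satisfies $\omega^3=1$), the sum $x+y$ equals the third color, so it lies in $\{1,\omega,\omega^2\}$ and its cube is $1$. Hence $(\chi(\veca)-\chi(\vecb))^3-1=0$ exactly when $\chi(\veca)\neq\chi(\vecb)$.

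With this computation, both directions follow immediately. In the forward direction, properness gives $\chi(\veca)\neq\chi(\vecb)$ on every edge, so $\mathsf{Prop}$ is zero there; on non-edges it is zero through the $\widehat{E}$ factor. In the converse direction, $\mathsf{Prop}(\veca,\vecb)=0$ on an edge forces $\chi(\veca)^{}-\chi(\vecb)$ to have cube $1$, which by the case analysis rules out $\chi(\veca)=\chi(\vecb)$, so the coloring is proper. The main (mild) obstacle is only this characteristic-$2$ identity $x+y=$ third root, together with being explicit that the converse uses the validity guarantee $\chi(H^m)\subseteq\{1,\omega,\omega^2\}$.
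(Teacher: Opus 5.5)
Your argument is correct and is exactly the verification the paper leaves as ``easy to see'': the $\widehat{E}$ factor kills non-edges, and on edges the characteristic-$2$ fact that distinct $x,y\in\{1,\omega,\omega^2\}$ sum to the third root gives $\chi'(\veca,\vecb)^3=1$ iff the colors differ. (Your earlier sentence asking for ``$\chi'(\veca,\vecb)^3\ne 1$'' to be equivalent to $\chi(\veca)\neq\chi(\vecb)$ should read ``$=1$''.) Your caveat that the converse needs $\chi(H^m)\subseteq\{1,\omega,\omega^2\}$ is genuinely necessary, since coloring a $K_4$ by $0,1,\omega,\omega^2$ makes $\mathsf{Prop}$ vanish on $H^{2m}$. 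It also matches how the paper applies the observation in the soundness proof, namely jointly with \Cref{obs:zero-valid-coloring}.
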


	      Since $\mathsf{Prop}$ vanishes on $H^{2m}$,~\Cref{coro:meta-vanishing-poly} tells us there exists a polynomial $\mathcal{M}_{\mathsf{Prop}} \in \F_{q}[X_{1}, \ldots, X_{2m},$ $ Y_{1}, \ldots, Y_{2m}]$ satisfying the mentioned properties in \Cref{coro:meta-vanishing-poly}. Let $\mathcal{M}_{\mathsf{Prop}, \mathrm{lines}}$ denote the lines table of $\mathcal{M}_{\mathsf{Prop}}$. \\
	      \textcolor{magenta}{The proof consists of $\had{1}{\mathcal{M}_{\mathsf{Prop}}}: \F_{q}^{4m} \times \F_{2}[\mathbf{Z}]^{\leq 1} \to \F_{2}$ and \newline $\had{1}{\setml^*(\mathcal{M}_{\mathsf{Prop}, \mathrm{lines}})}: \F_{q}^{8m} \times \F_{q}^{cm_{1}} \times \F_{2}[\mathbf{Z}]^{\leq 1} \to \F_{2}$.}
\end{enumerate}

\paragraph*{Expected Queries}We now explain the queries our PCP verifier $\mathcal{V}$ is going to make while expecting the proof oracles as mentioned above. Firstly, $\mathcal{V}$ runs the low-degree test $\vldt$ on all the oracles. Guided by \Cref{obs:zero-valid-coloring}, $\mathcal{V}$ does the following:
\begin{itemize}
    \item Gets oracle access to a function that is supposed to be $\had{1}{\mathsf{Val}}$, using the oracle that is supposed to be $\had{3}{\chi}$. Here, by ``gets oracle access'' we mean that the function can be queried anywhere using constant many oracle queries to the oracle that is supposed to be $\had{3}{\chi}$ (we require constant many oracle queries so the final query complexity is a constant).
    \item Runs the zero-on-grid test $\vvanish$ on this new oracle.
\end{itemize}
In fact, the first step is slightly more subtle, which we elaborate on now. From \Cref{obs:zero-valid-coloring}, we see that $\had{1}{\mathsf{Val}}$ is essentially a degree-$3$ version of $\had{3}{\chi}$, i.e. for every $\mathbf{a} \in \F_{q}^{m}$ and $L' \in \F_{2}[\mathbf{Z}]^{\leq 1}$, we have, $\had{1}{\mathsf{Val}}(\mathbf{a}, L')$ is $\had{3}{\chi}(\mathbf{a}, \Lambda_{L'})$ where $\Lambda_{L'} \in \F_{2}[\mathbf{Z}]^{\leq 1}$ is given by \Cref{claim:checking-relation-via-Hadamard}. Observe that even for uniformly random $L \sim \F_{2}[\mathbf{Z}]^{\leq 1}$, the distribution of $\Lambda_{L'}$ is far from uniform distribution\footnote{An informal way to believe this is by noting that the number of $\Lambda_{L'}$'s is at most the number of degree-$1$ polynomials in $\F_{2}[\mathbf{Z}]$, which is quite small compared to the number of polynomials in $\F_{2}[\mathbf{Z}]^{\leq 3}$.} on $\F{2}[\mathbf{Z}]^{\leq 3}$. From $\vldt$, we only have the guarantee that we have an oracle that is close to $\had{3}{\chi}$, and thus we have to do self-correction of degree-$3$ Hadamard encoding.\newline
Again, guided by \Cref{obs:zero-proper-coloring}, $\mathcal{V}$ gets oracle access to a function that is supposed to be $\had{1}{\mathsf{Prop}}$, using the oracle that is supposed to be $\had{3}{\chi'}$, and then runs $\vvanish$ on it. Also, for the two $\vvanish$ tests to give a consistent result, $\mathcal{V}$ has to check that the oracles that are supposed to be $\had{3}{\chi}$ and $\had{3}{\chi'}$ are consistent.

\paragraph*{}Now we are ready to fully describe our PCP verifier $\mathcal{V}$ (\Cref{algo:final-verifier}). As we saw above, the expected oracles come in pairs $(\Pi_{f}, \Pi'_{f})$, where $\Pi$ is a degree-$r$ Hadamard encoding of a function $f$ and $\Pi'$ is a degree-$r$ Hadamard encoding of $\Psi^{\ast}(f_{\mathrm{lines}})$. We will use $(\widetilde{\Pi}_{f}, \widetilde{\Pi'}_{f})$ to denote the corresponding received oracles. Finally, let $\Lambda$ be as defined in \Cref{claim:checking-relation-via-Hadamard}. All randomness used in~\Cref{algo:final-verifier} is uniform and independent and $\mathbf{Z}=(Z_1,\dots,Z_t)$.

\begin{algobox}
	\begin{algorithm}[H]
		\caption{PCP Verifier $\mathcal{V}$}
		\label{algo:final-verifier}

		\DontPrintSemicolon
		{\nonl {\bf Parameters:}} $n,h,m,H,\F_q,t,\rho,c,\omega,\zeta,D,m_1,c_1,c_2$ as discussed above.\\
		\KwIn{Graph $G = (V,E)$ where $V = H^{m}$.}
		{\nonl \textbf{Oracles}:}
		\begin{multicols}{2}
			{\nonl{$ \widetilde{\Pi}_{\chi} : \F_{q}^{m} \times \F_{2}[\mathbf{Z}]^{\leq 3} \to \F_{2} $ },} \\
			{\nonl{$\widetilde{\Pi}_{ \mathcal{M}_{\mathsf{Val}} } : \F_{q}^{2m} \times \F_{2}[\mathbf{Z}]^{\leq 1} \to \F_{2}$},} \\
			{\nonl{$ \widetilde{\Pi}_{\chi'} : \F_{q}^{2m} \times \F_{2}[\mathbf{Z}]^{\leq 3} \to \F_{2} $ },} \\
			{\nonl{$\widetilde{\Pi}_{ \mathcal{M}_{\mathsf{Prop}}}  : \F_{q}^{4m} \times \F_{2}[\mathbf{Z}]^{\leq 1} \to \F_{2}$},} \\
			{\nonl{$\widetilde{\Pi'}_{\chi} : \F_{q}^{2m} \times \F_{q}^{cm_{1}} \times \F_{2}[\mathbf{Z}]^{\leq 3} \to \F_{2} $},} \\
			{\nonl{$\widetilde{\Pi'}_{ \mathcal{M}_{\mathsf{Val}}} : \F_{q}^{4m} \times \F_{q}^{cm_{1}} \times \F_{2}[\mathbf{Z}]^{\leq 1} \to \F_{2}$},} \\
			{\nonl{$ \widetilde{\Pi'}_{\chi'} : \F_{q}^{4m} \times \F_q^{cm_1} \times \F_{2}[\mathbf{Z}]^{\leq 3} \to \F_{2} $ },} \\
			{\nonl{$\widetilde{\Pi'}_{ \mathcal{M}_{\mathsf{Prop}}} : \F_{q}^{8m} \times \F_{q}^{cm_{1}} \times \F_{2}[\mathbf{Z}]^{\leq 1} \to \F_{2}$}.} \\
		\end{multicols}
		\vspace{4mm}


		Sample $\mathbf{a}, \mathbf{b} \sim \F_{q}^{m}, R\sim \F_2[\mathbf{Z}]^{\le 3},$ and $L\sim \hpol$

		Given $\veca',\vecb'\in \F_q^{m}$ and $L'\in \F_2[\mathbf{Z}]^{\le 1}$, let $\Lambda_{L'} \gets \Lambda_{Y^{3}-1, L'}$ and $\Lambda_{\veca',\vecb',L'} \gets \Lambda_{\LDE(E)(\mathbf{a'},\mathbf{b}')\cdot (Y^{3}-1), L'}$ as defined in \Cref{claim:checking-relation-via-Hadamard}

		\vspace{2mm}
		\textbf{Subroutine} \SetKwFunction{SC}{$\texttt{{\color{purple}SC}}$}
		\SC{$g: \F_{2}[\mathbf{Z}]^{\leq 3} \to \F_{2}$, \, $\Lambda \in \F_{2}[\mathbf{Z}]^{\leq 3}$}{  \KwRet{ $g[\Lambda+R] - g[R]$ }}

		\vspace{2mm}

		Given $\veca'\in \F_q^{m}$ and $L'\in \F_2[\mathbf{Z}]^{\le 1}$, let $\widetilde{\Pi}_{\mathsf{Val}}[\mathbf{a}', L'] \gets$ \SC{$\widetilde{\Pi}_{\chi}[\mathbf{a}',\cdot], \Lambda_{L'}$}

		\vspace{2mm}

		Given $\veca',\vecb'\in \F_q^{m}$ and $L'\in \F_2[\mathbf{Z}]^{\le 1}$, let $\widetilde{\Pi}_{\mathsf{Prop}}[(\mathbf{a}', \mathbf{b}'), L'] \gets$ \SC{$\widetilde{\Pi}_{\chi'}[(\mathbf{a}',\mathbf{b}'), \cdot], \Lambda_{\veca',\vecb',L'}$}

		\vspace{2mm}

		\lFor{every pair of oracles $(\widetilde{\Pi}, \widetilde{\Pi'})$}
		{
			Run $\vldt^{\widetilde{\Pi}, \widetilde{\Pi'}}_{D}$
		}

		\vspace{2mm}

		\label{line:zero-val} Run $\vvanish_H^{\widetilde{\Pi}_{\mathsf{Val}}, \widetilde{\Pi}_{ \mathcal{M}_{\mathsf{Val}} }, \widetilde{\Pi}_{ \mathcal{M}_{\mathsf{Val}} }'}$

		\vspace{2mm}

		Check if $\widetilde{\Pi}_{\chi'}[(\mathbf{a}, \mathbf{b}), L] \eqq \widetilde{\Pi}_{\chi}[\mathbf{a}, L] - \widetilde{\Pi}_{\chi}[\mathbf{b}, L]$ \;


		\vspace{2mm}

		Run $\vvanish_H^{\widetilde{\Pi}_{\mathsf{Prop}}, \widetilde{\Pi}_{ \mathcal{M}_{\mathsf{Prop}} }, \widetilde{\Pi}_{ \mathcal{M}_{\mathsf{Prop}} }'}$

		\vspace{2mm}

		\lIf{any of the above tests return \reject}{
			\Return{\reject}
		}
		\lElse{\Return{\accept}}
	\end{algorithm}
\end{algobox}

\noindent
\pcpthm*

\begin{proof}[Proof of~\Cref{thm:pcp}]
	We claim that~\Cref{algo:final-verifier} is the desired PCP verifier for an appropriate setting of parameters as follows.
	\paragraph{Choice of Parameters} Let  $c,c',c_1,c_2\in \N$ such that $c+1=2^{c'}\pm 1$ be sufficiently large constants\footnote{In fact, taking $c=4,c'=2,c_1=1,c_2=10$ suffices.}. Given $n\in \N$, let $h:=\lceil \log n\rceil$, $m:=\lceil\log n/\log \log n\rceil$ (so $h^m \ge n$)\footnote{We can add a suitable number of isolated vertices to make $|V|=h^m$.}, $t := 2c' \lceil c_1 \log(hm)\rceil$ (so $2^t-1$ is divisible by $3$ and $c+1$), $D=c_2hm$, and $m_1:=\lfloor D^{1/c}\rfloor + 1$. Let $q:=2^t=\poly\log n$ and $H\subseteq \F_q$ be an arbitrary subset of size $h$.

	\paragraph{Proof size, Query complexity, and Running time} We now analyze the resulting complexity of our PCP verifier: We first note that $m_1=\bigO(m)$ assuming\footnote{Indeed, for the same reason, we cannot set the value of $c$ to be $1$ as that would make $\lines{m}$ at least $\Theta(hm)$, which is too large in our setting since we want the final oracle proof sizes $q^{O(m+\lines{m})}$ to be $\poly(n)$.} $c\ge 3$. Hence the total proof size is $q^{\bigO(m)}\cdot q^{\bigO(cm_1)}\cdot 2^{\bigO(t^3)} = q^{\bigO(m)}\cdot \poly(n) = \poly(n)$. The number of random bits used by the verifier (including all the subroutines) is $\bigO(m\log q + cm_1\log q + t^3)=\bigO(\log n)$. Since there are $\bigO(1)$ calls to $\SC,\widetilde{\Pi}_{\mathsf{Val}},\widetilde{\Pi}_{\mathsf{Prop}}, \vldt$, and $\vvanish$, the total number of oracle queries made by the verifier is also $\bigO(1)$ as required. Finally, the total running time is dominated by $q^{\bigO(m)}=\poly(n)$ needed to compute $\LDE(E)$ in Line 2.

	\paragraph*{Completeness}Suppose $\mathsf{Color}:H^m\to \{1,\omega,\omega^2\}$ is a proper 3-coloring of $G$.
	Recall the expected proof oracles from the beginning of \Cref{sec:pcp}. In particular, $\chi=\widehat{\mathsf{Color}}=\LDE(\mathsf{Color})$ and $\widehat{E} = \LDE(E)$. Consider the proof oracles as
	\begin{gather*}
		\widetilde{\Pi}_{\chi} \, = \, \had{3}{\chi},  \quad \widetilde{\Pi'}_{\chi} = \had{3}{\Psi^*(\chi_{\mathrm{lines}})}, \\
		\widetilde{\Pi}_{\mathcal{M}_{\mathsf{Val}}} \, = \, \had{1}{\mathcal{M}_{\mathsf{Val}}}, \quad \widetilde{\Pi'}_{\mathcal{M}_{\mathsf{Val}}} \, = \, \had{1}{\Psi^*(\mathcal{M}_{\mathsf{Val},\mathrm{lines}})}, \\
		\widetilde{\Pi}_{\chi'} \, = \, \had{3}{\chi'},  \quad \widetilde{\Pi'}_{\chi'} = \had{3}{\Psi^*(\chi'_{\mathrm{lines}})}, \\
		\widetilde{\Pi}_{\mathcal{M}_{\mathsf{Prop}}} \, = \, \had{1}{\mathcal{M}_{\mathsf{Prop}}}, \quad
		\widetilde{\Pi'}_{\mathcal{M}_{\mathsf{{Prop}}}} \, = \, \had{1}{\Psi^*(\mathcal{M}_{\mathsf{Prop},\mathrm{lines}})}.
	\end{gather*}
	We claim that with these oracles, the verifier $\cV$ returns $\accept$ with probability $1$. We first remark that these oracles are well-defined since all the polynomials considered above have degree at most $5hm \le D$ (taking $c_2\ge 5$). From the completeness part of \Cref{thm:new-low-degree-test}, all the tests $\vldt^{\widetilde{\Pi},\widetilde{\Pi'}}$ in Line 6 always return $\accept$.

	We now move to the zero tests. We first note that the ${\SC(g,\Lambda})$ calls in Lines 4 and 5 return $g[\Lambda]$ using the linearity of $\had{3}{\cdot}$. We now claim that the intermediate oracle function defined in Line 4 is $\widetilde{\Pi}_{\mathsf{Val}} \equiv \had{1}{\mathsf{Val}}$: this follows because for all $\veca'\in \F_q^{m}$ and $L'\in \F_2[\mathbf{Z}]^{\le 1}$, using~\Cref{claim:checking-relation-via-Hadamard}, we have
	$$\widetilde{\Pi}_{\mathsf{Val}}[\veca',L'] = \widetilde{\Pi}_\chi[\veca',\Lambda_{Y^3-1,L'}] = L'(\rho(\chi(\veca')^3-1)) = \had{1}{\mathsf{Val}}(\veca',L').$$
	Next, we observe that the check in Line 8 always passes since $L$ is homogeneous degree-$1$:
	$$\widetilde{\Pi}_{\chi'}[(\veca,\vecb),L] = L(\rho(\chi'(\veca,\vecb))) = L(\rho(\chi(\veca)-\chi(\vecb))) = L(\rho(\chi(\veca)))-L(\rho(\chi(\vecb))) =  \widetilde{\Pi}_\chi[\veca,L]-\widetilde{\Pi}_\chi[\vecb,L].$$
	Finally, we have $\widetilde{\Pi}_{\mathsf{Prop}} \equiv \had{1}{\mathsf{Prop}}$ since for all $\veca',\vecb'\in \F_q^m$ and $L'\in \F_2[\mathbf{Z}]^{\le 1}$:
	\begin{align*}\widetilde{\Pi}_{\mathsf{Prop}}[(\veca',\vecb'),L'] & = \widetilde{\Pi}_{\chi'}[(\veca',\vecb'),\Lambda_{\widehat{E}(\veca',\vecb') \cdot (Y^3-1),L'}] \\
                                                                  & = L'(\rho(\widehat{E}(\veca',\vecb')\cdot(\chi'(\veca',\vecb')^3-1)))                            \\
                                                                  & = \had{1}{\mathsf{Prop}}((\veca',\vecb'),L').\end{align*}
	Hence, using the completeness property of $\vvanish$ (\Cref{lemma:zero-on-variety}) along with~\Cref{obs:zero-valid-coloring} and~\Cref{obs:zero-proper-coloring}, we conclude that the tests in Lines 7 and 9 of $\cV$ also always return $\accept$.

	\paragraph*{Soundness}

	Suppose the rejection probability of $\cV$ is at most $\gamma$, where $\gamma=\gamma(c,c_1,c_2)>0$ is a sufficiently small constant that we will choose later, for an input graph $G$ and some received proof oracles $\widetilde{\Pi}_\chi,\widetilde{\Pi'}_\chi,\widetilde{\Pi}_{\mathcal{M}_{\mathsf{Val}}},\widetilde{\Pi'}_{\mathcal{M}_{\mathsf{Val}}},\widetilde{\Pi}_{\chi'},\widetilde{\Pi'}_{\chi'},\widetilde{\Pi}_{\mathcal{M}_{\mathsf{Prop}}},\widetilde{\Pi'}_{\mathcal{M}_{\mathsf{Prop}}}$. We will show that $G \in 3$-$\mathsf{COLOR}$. Indeed, we will show that $\widetilde{\Pi}_\chi$ is close to (the Hadamard encoding of a low-degree-extension of) a proper 3-coloring of $G$.\newline
	Since $\cV$ rejects if and only if any of the internal tests reject, we observe that each call of $\vldt$ in Line 6 rejects with probability at most $\gamma$. Hence, by the soundness guarantee of $\vldt$ (\Cref{thm:new-low-degree-test}), we note that there exist polynomials $\widetilde{\chi} \in \cP_{4cD}(m,\F_q)$, $\widetilde{\mathcal{M}_{\mathsf{Val}}} \in \cP_{4cD}(2m,\F_q)$, $\widetilde{\chi'} \in \cP_{4cD}(2m,\F_q)$, and $\widetilde{\mathcal{M}_{\mathsf{Prop}}} \in \cP_{4cD}(4m,\F_q)$ such that they are $\bigO_{c}(\gamma)$-close to their respective oracles, i.e.,
	\begin{align}
		\delta\paren{\widetilde{\Pi}_\chi, \;\had{3}{\widetilde{\chi}}}, \quad \delta\paren{\widetilde{\Pi}_\chi|_{\le 1},\had{1}{\widetilde{\chi}}} \; \leq \; \bigO_c(\gamma),\label{eqn:close-oracles}                      \\
		\delta\paren{\widetilde{\Pi}_{\mathcal{M}_{\mathsf{Val}}}, \; \had{1}{\widetilde{\mathcal{M}_{\mathsf{Val}}}}} \; \leq \; \bigO_c(\gamma),\label{eqn:close-oracles-val}                                                \\
		\delta\paren{\widetilde{\Pi}_{\chi'}, \; \had{3}{\widetilde{\chi'}}}, \quad \delta\paren{\widetilde{\Pi}_{\chi'}|_{\le 1}, \; \had{1}{\widetilde{\chi'}}} \; \leq \; \bigO_c(\gamma),\label{eqn:close-oracles-chi-prm} \\
		\delta\paren{\widetilde{\Pi}_{\mathcal{M}_{\mathsf{Prop}}}, \; \had{1}{\widetilde{\mathcal{M}_{\mathsf{Prop}}}}} \; \leq \; \bigO_c(\gamma).\label{eqn:close-oracles-prop}
	\end{align}
	Now we build oracles based on the above polynomials and argue that they are close to the corresponding intermediate oracles constructed in~\Cref{algo:final-verifier}. In particular, let
	$\widetilde{\mathsf{Val}}\in \cP_{12cD}(m,\F_q)$, $\widetilde{\chi''}\in \cP_{4cD}(2m,\F_q)$, and $\widetilde{\mathsf{Prop}}\in \cP_{20cD}(2m,\F_q)$ be the polynomials defined by
	\begin{align*}
		\widetilde{\mathsf{Val}}(\veca'):=\widetilde{\chi}(\veca')^3-1, \quad \widetilde{\chi''}(\veca',\vecb'):=\widetilde{\chi}(\veca')-\widetilde{\chi}(\vecb'), \quad \widetilde{\mathsf{Prop}}(\veca',\vecb'):=\widehat{E}(\veca',\vecb')\cdot(\widetilde{\chi''}(\veca',\vecb')^3-1).
	\end{align*}

	We quickly argue that our constructed polynomial $\widetilde{\chi''}$ in fact is identical to the polynomial $\widetilde{\chi'}$ that was guaranteed from $\vldt$ (see \Cref{eqn:close-oracles-chi-prm}).\\

	\begin{claim}\label{clm:chi-chi}
		$\widetilde{\chi ''}$ and $\widetilde{\chi '}$ are identical as polynomials.
	\end{claim}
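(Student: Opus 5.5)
The plan is to show that $\widetilde{\chi''}$ and $\widetilde{\chi'}$ agree on more than a $4cD/q$ fraction of $\F_q^{2m}$. Both are polynomials of degree at most $4cD$ in $2m$ variables, so the corollary of \Cref{thm:odlsz} then makes them identical, since $4cD/q$ is tiny given $q > C(4cD)^3$. To get this agreement, I would use the consistency check in Line 8 of \Cref{algo:final-verifier}. It rejects with probability at most $\gamma$, so
\begin{align*}
\Pr_{\veca,\vecb,L}\brac{\widetilde{\Pi}_{\chi'}[(\veca,\vecb),L] \neq \widetilde{\Pi}_{\chi}[\veca,L]-\widetilde{\Pi}_{\chi}[\vecb,L]} \;\le\; \gamma,
\end{align*}
where $L\sim\hpol$.

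The first step is to replace each oracle entry by the Hadamard encoding of the corresponding polynomial. By \Cref{eqn:close-oracles}, $\widetilde{\Pi}_\chi|_{\le 1}$ is $\bigO_c(\gamma)$-close to $\had{1}{\widetilde{\chi}}$. Homogeneous linear polynomials have density $1/2$ among degree-$\le 1$ polynomials, and $\veca,\vecb$ are each uniform. Hence $\widetilde{\Pi}_\chi[\veca,L]=L(\rho(\widetilde{\chi}(\veca)))$ and $\widetilde{\Pi}_\chi[\vecb,L]=L(\rho(\widetilde{\chi}(\vecb)))$ each fail with probability at most $2\cdot\bigO_c(\gamma)$. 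In the same way, \Cref{eqn:close-oracles-chi-prm} gives $\widetilde{\Pi}_{\chi'}[(\veca,\vecb),L]=L(\rho(\widetilde{\chi'}(\veca,\vecb)))$ except with probability $\bigO_c(\gamma)$, because $(\veca,\vecb)$ is uniform on $\F_q^{2m}$.

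Next I combine these by a union bound. Since $L$ is homogeneous linear and $\rho$ is $\F_2$-linear, the right-hand side becomes $L(\rho(\widetilde{\chi}(\veca)-\widetilde{\chi}(\vecb)))=L(\rho(\widetilde{\chi''}(\veca,\vecb)))$. So
\begin{align*}
\Pr_{\veca,\vecb,L}\brac{\had{1}{\widetilde{\chi'}}((\veca,\vecb),L)\neq\had{1}{\widetilde{\chi''}}((\veca,\vecb),L)} \;\le\; \bigO_c(\gamma).
\end{align*}
The second equality of \Cref{obs:delta-fn-had-approx} says this probability equals $\tfrac12\,\delta(\widetilde{\chi'},\widetilde{\chi''})$, which gives $\delta(\widetilde{\chi'},\widetilde{\chi''})\le \bigO_c(\gamma)$.

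Finally I fix the parameters. I choose $\gamma$ small enough in terms of $c$ that $\bigO_c(\gamma)<1/2$. The field condition gives $4cD/q<1/2$. Together these make the agreement fraction $1-\bigO_c(\gamma)$ exceed $4cD/q$, and the polynomial distance lemma yields $\widetilde{\chi''}\equiv\widetilde{\chi'}$.

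The only delicate point is the bookkeeping between closeness over all of $\F_2[\mathbf{Z}]^{\le 1}$ and probabilities over $L\sim\hpol$, which costs the factor of $2$. Everything else is a routine union bound, so I do not expect a real obstacle.
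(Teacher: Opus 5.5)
Your proposal is correct and matches the paper's proof essentially step for step. You start from the low rejection probability of the Line 8 consistency check and replace each queried entry by the corresponding $\had{1}{\cdot}$ value via \Cref{eqn:close-oracles} and \Cref{eqn:close-oracles-chi-prm}, paying a factor of $2$ for homogeneous $L$; a union bound with \Cref{obs:delta-fn-had-approx} then gives $\delta(\widetilde{\chi'},\widetilde{\chi''}) \le \bigO_c(\gamma)$, \Cref{thm:odlsz} finishes, and your explicit use of the $\F_2$-linearity of $L\circ\rho$ is just the step the paper leaves implicit.
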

	\begin{proof}[Proof of \Cref{clm:chi-chi}]
		Since the check in Line 8 rejects with probability at most $\gamma$, we have
		\begin{align*}
			\Pr\bigg[\widetilde{\Pi}_{\chi'}[(\mathbf{a}, \mathbf{b}), L] \ne \widetilde{\Pi}_{\chi}[\mathbf{a}, L] - \widetilde{\Pi}_{\chi}[\mathbf{b}, L]\bigg] \; \leq \; \gamma.
		\end{align*}
		From~\Cref{eqn:close-oracles} and~\Cref{eqn:close-oracles-chi-prm}, we have
		\begin{align*}
			\Pr\bigg[\widetilde{\Pi}_{\chi}[\veca,L] \ne \had{1}{\widetilde{\chi}}(\veca,L)\bigg], \quad \Pr\bigg[\widetilde{\Pi}_{\chi'}[(\veca,\vecb),L] \ne \had{1}{\widetilde{\chi'}}((\veca,\vecb),L)\bigg] \; \leq \; 2\cdot \bigO_c(\gamma).
		\end{align*}
		Combining the above three inequalities by a union bound and~\Cref{obs:delta-fn-had-approx}, we get
		$$\Pr_{\veca,\vecb\sim \F_q^m}[\widetilde{\chi'}(\veca,\vecb) \ne \widetilde{\chi}(\veca)-\widetilde{\chi}(\vecb)] \le \bigO_c(\gamma),$$ which by~\Cref{thm:odlsz} implies that $\widetilde{\chi'} \equiv \widetilde{\chi''}$ (taking $\gamma$ to be sufficiently smaller than $1-4cD/q$). This finishes the proof of \Cref{clm:chi-chi}.
	\end{proof}

	In the rest of the proof, we will show that
	\begin{equation}\label{eqn:val-tilde-zero}
		\widetilde{\mathsf{Val}}|_{H^m} \; \equiv \; 0, \text{~and}
	\end{equation}
	\begin{equation}\label{eqn:prop-tilde-zero}
		\widetilde{\mathsf{Prop}}|_{H^{2m}} \; \equiv \; 0.
	\end{equation}
	If both \Cref{eqn:val-tilde-zero} and \Cref{eqn:prop-tilde-zero} hold, then the converse direction of \Cref{obs:zero-valid-coloring} and \Cref{obs:zero-proper-coloring} imply that $\widetilde{\chi}|_{V}$ is a proper 3-coloring of $G$, which would give us the desired soundness guarantee. Thus it remains to show \Cref{eqn:val-tilde-zero} and \Cref{eqn:prop-tilde-zero} hold.

	\paragraph*{Proof of \Cref{eqn:val-tilde-zero}}
	Observe that for a uniformly random input in the domain of $\widetilde{\Pi}_{\mathsf{Val}}$, the two query points inside the $\SC$ call in Line 4 are marginally uniform in the domain of $\widetilde{\Pi}_\chi$ because $R \sim \F_{2}[\mathbf{Z}]^{\leq 3}$. Now, suppose $\veca'\in \F_q^m,L'\in \F_2[\mathbf{Z}]^{\le 1},$ and $R\in \F_2[\mathbf{Z}]^{\le 3}$ are such that $\widetilde{\Pi}_\chi[\veca',\Lambda_{L'}+R] = \had{3}{\widetilde{\chi}}(\veca',\Lambda_{L'}+R)$ and $\widetilde{\Pi}_\chi[\veca',R] = \had{3}{\widetilde{\chi}}(\veca',R)$. Then, by the definition in Line 4 and using linearity of degree-$3$ Hadamard encoding and \Cref{claim:checking-relation-via-Hadamard}, we have,
	\begin{align*}
		\widetilde{\Pi}_{\chi}[\mathbf{a}', \Lambda_{L'}+R] - \widetilde{\Pi}_{\chi}[\mathbf{a}', R] \; = \; \had{1}{\widetilde{\mathsf{Val}}}(\veca',L').
	\end{align*}
	
	Thus, by a union bound and~\Cref{eqn:close-oracles}, we get
	\begin{align}\label{eqn:exp-r-close}
		\E_{R\sim \F_2[\mathbf{Z}]^{\le 3}}\brac{\delta(\widetilde{\Pi}_{\mathsf{Val}}, \had{1}{\widetilde{\mathsf{Val}}}) } \; \leq \; 2\cdot \delta\paren{\widetilde{\Pi}_\chi,\had{3}{\widetilde{\chi}}} \;
		\leq \; \bigO_{c}(\gamma).
	\end{align}
	Since $\cV$ rejects with probability at most $\gamma$ by assumption, we also have
	$$\Pr_{R\sim \F_2[\mathbf{Z}]^{\le 3}}\bigg[\vvanish_H^{\widetilde{\Pi}_{\mathsf{Val}}, \widetilde{\Pi}_{ \mathcal{M}_{\mathsf{Val}} }, \widetilde{\Pi}_{ \mathcal{M}_{\mathsf{Val}} }'}\text{~returns~}\reject\bigg] \le \gamma,$$ where the probability is also over the randomness of the test.

	Combining the above two bounds, we observe that there exists $R\in \F_2[\mathbf{Z}]^{\le 3}$ such that both the following two conditions hold simultaneously for some $\gamma'=\bigO_c(\gamma)$:
	\begin{align*}\delta(\widetilde{\Pi}_{\mathsf{Val}}, \had{1}{\widetilde{\mathsf{Val}}})
		\le \gamma'
		\text{~and~} \Pr\bigg[\vvanish_H^{\widetilde{\Pi}_{\mathsf{Val}}, \widetilde{\Pi}_{ \mathcal{M}_{\mathsf{Val}} }, \widetilde{\Pi}_{ \mathcal{M}_{\mathsf{Val}} }'}\text{~returns~}\reject\bigg] \le \gamma',\end{align*}
	where we fix $\widetilde{\Pi}_{\mathsf{Val}}$ corresponding to the above $R$ in the $\SC$ call used to define $\widetilde{\Pi}_{\mathsf{Val}}$.

	Using the above bounds along with~\Cref{eqn:close-oracles-val} and applying the soundness part of $\vvanish$ (\Cref{lemma:zero-on-variety}), we conclude that $\widetilde{\mathsf{Val}}|_{H^m} \equiv 0$ assuming that $\gamma' < \eta$, where $\eta:=\Omega_c(1)-100cDh/q-\gamma'$ is from~\Cref{lemma:zero-on-variety}. The inequality $\gamma'<\eta$ indeed holds true for some suitable $\gamma'=\Omega_c(1)$, by using $q=\omega(h^2m)$. Thus we have established~\Cref{eqn:val-tilde-zero}.

	\paragraph*{Proof of \Cref{eqn:prop-tilde-zero}}The argument is similar to above, but we work with the polynomial $\widetilde{\chi'}$ instead of $\widetilde{\chi}$. Suppose $\veca',\vecb'\in \F_q^m, L'\in \F_2[\mathbf{Z}]^{\le 1}_{\hom},$ and $R\in \F_2[\mathbf{Z}]^{\le 3}$ are such that
	\begin{gather*}
		\widetilde{\Pi}_{\chi'}[(\mathbf{a}',\mathbf{b}'), \Lambda_{\veca',\vecb',L'}+R] \;  =  \; \had{3}{\widetilde{\chi'}}((\veca',\vecb'),\Lambda_{\veca',\vecb',L'}+R) \\
		\widetilde{\Pi}_{\chi'}[(\mathbf{a}',\mathbf{b}'), R] \; = \; \had{3}{\widetilde{\chi'}}((\veca',\vecb'),R)
	\end{gather*}
	Then by the definition of $\widetilde{\Pi}_{\mathsf{Prop}}$ in Line 5 of \Cref{algo:final-verifier}, we get that $\widetilde{\Pi}_{\mathsf{Prop}}[(\veca',\vecb'),L']$ is equal to
	\begin{align*}
		\had{3}{\widetilde{\chi'}}((\veca',\vecb'),\Lambda_{\veca',\vecb',L'})
		 & = \Lambda_{\veca',\vecb',L'}(\rho(\widetilde{\chi'}(\veca',\vecb')))                                                                      \\
		 & = L'(\rho(\widehat{E}(\veca',\vecb')\cdot(\widetilde{\chi'}(\veca',\vecb')^3-1))) \tag{using \Cref{claim:checking-relation-via-Hadamard}} \\
		 & = L'(\rho(\widetilde{\mathsf{Prop}}(\veca',\vecb'))) \tag{using \Cref{clm:chi-chi}}                                                       \\
		 & = \had{1}{\widetilde{\mathsf{Prop}}}((\veca',\vecb'),L').
	\end{align*}
	Thus, using~\Cref{eqn:close-oracles-chi-prm} and following a similar argument as in the derivation of~\Cref{eqn:exp-r-close}, we obtain
	\begin{align*}
		\E_{R\sim \F_2[\mathbf{Z}]^{\le 3}}[\delta(\widetilde{\Pi}_{\mathsf{Prop}}, \had{1}{\widetilde{\mathsf{Prop}}})]
		\le \bigO_c(\gamma).
	\end{align*}
	Finally, using the fact that $\widetilde{\Pi}_{\mathcal{M}_{\mathsf{Prop}}}$ is $\bigO_c(\gamma)$-close to degree-$4cD$ (i.e.,~\Cref{eqn:close-oracles-prop}) and the soundness guarantee of $\vvanish_H^{\widetilde{\Pi}_{\mathsf{Prop}}, \widetilde{\Pi}_{\mathcal{M}_{\mathsf{Prop}}},\widetilde{\Pi'
		}_{\mathcal{M}_{\mathsf{Prop}}}}$, we conclude that
	\begin{align*}
		\widetilde{\mathsf{Prop}}|_{H^{2m}} \equiv 0,
	\end{align*}
	completing the proof of~\Cref{eqn:prop-tilde-zero} and~\Cref{thm:pcp}.
\end{proof}

\section*{Acknowledgments}
We would like to thank Oded Goldreich and Prahladh Harsha for conversations leading to this work. GPT-5.6 Sol was used for proofreading. No part of the article is AI-generated.

\printbibliography[
	heading=bibintoc,
	title={References}
] 

\addtocontents{toc}{\protect\setcounter{tocdepth}{1}}

\appendix

\section{\texorpdfstring{Proofs from~\Cref{sec:encodings}}{Proof of the claim relating Hadamard encodings}}\label[appendix]{app:encodings-app}

\begin{proof}[Proof of~\Cref{claim:eval-set-mult-degree-redn}]
	For an integer $0\le k\le d$, let $[k_{c-1},\dots, k_0]$ denote its $m_1$-ary representation. Letting $P(Y)=\sum_{k=0}^d \alpha_k\cdot Y^k$, we have
	\begin{align*}\setml_{d,c,m_{1}}(P)                            & = \sum_{k=0}^d \alpha_k\cdot X_{0,k_0}\cdot X_{1,k_1}\cdots X_{c-1,k_{c-1}}
              \intertext{which then implies}
              (\setml_{d,c,m_{1}}(P))(\Phi_{c,m_{1}}(\lambda)) & = \sum_{k=0}^d \alpha_k\cdot \lambda^{k_0}\cdot \lambda^{k_1\cdot m_1} \cdots \lambda^{k_{c-1}\cdot m_1^{c-1}} \\
                                                               & = \sum_{k=0}^d \alpha_k\cdot \lambda^{k} = P(\lambda).\end{align*}
	This finishes the proof of \Cref{claim:eval-set-mult-degree-redn}.
\end{proof}

\begin{proof}[Proof of \Cref{claim:checking-relation-via-Hadamard}]
	We first show it when \(L\) is homogeneous so
	\(L(x + y) = L(x) + L(y)\).
	Let \(v_{i} := \rho^{-1}(e_{i})\),
	where \((e_{i})\) is the standard basis for \(\F_{2}^{t}\).
	Let
	\begin{align*}
		P(Y) = \sum_{k=0}^{r} c_{k} Y^{k}.
	\end{align*}

	In this case we have for any element \(\lambda = \sum_{i \in [t]} \alpha_{i} v_{i} \in \F_{q}\),
	\begin{align*}
		\had{1}{P(\lambda)}(L) =
		L(\rho(P(\lambda))) =
		\sum^{r}_{k=0} L(\rho(c_{k} \lambda^{k}))
		=
		\sum^{r}_{k=0} \sum_{i_{1},\ldots, i_{k} \in [t]} L(\rho(c_{k} v_{i_{1}} \cdots v_{i_{k}}) )\alpha_{i_{1}}\cdots\alpha_{i_{k}}.
	\end{align*}
	We note that this expression is a degree \(r\) polynomial in the \(\alpha_{i}\)'s
	and define this to be \(\Lambda_{P,L}(\alpha_{1}, \ldots, \alpha_{t})\).
	We then have
	\begin{align*}
		\had{r}{\lambda}(\Lambda_{P,L})
		= \Lambda_{P,L}(\rho(\lambda))
		= \Lambda_{P,L}(\alpha_{1}, \ldots, \alpha_{t}),
	\end{align*}
	showing it has the wanted property.

	For \(L' = L + b\) a sum of a homogeneous linear	function and a constant \(b \in \F_{2}\),
	we set \(\Lambda_{P,L'} := \Lambda_{P,L}+b\).
	We then have
	\begin{align*}
		\had{1}{P(\lambda)}(L') =\had{1}{P(\lambda)}(L) + b
		= \had{r}{\lambda}(\Lambda_{P,L}) + b
		= \had{r}{\lambda}(\Lambda_{P,L'}).
	\end{align*}
	This finishes the proof of \Cref{claim:checking-relation-via-Hadamard}.
\end{proof}

\section{Analysis of \texorpdfstring{$\mathsf{HadTest}$}{HadTest}}\label[appendix]{app:syntactic-test-gen-Hadamard}

To prove \Cref{lemma:syntactic-test-gen-Hadamard}, we will use the local testability of the Hadamard code.\\

\begin{theorem}[BLR Linearity Testing~\cite{BLR,BLR2}]\label{thm:BLR}
	Let $r \in \mathbb{N}$ denote the degree parameter, and let
	$H: \F_{2}[Z_{1},\ldots,Z_{t}]^{\leq r} \to \F_{2}$ be arbitrary. Let
	\begin{align*}
		\varepsilon \; := \; \Pr_{R, R' \, \sim \, \F_{2}[\mathbf{Z}]^{\leq r} }[H[R + R'] \; \neq \; H[R] + H[R']]
	\end{align*}
	denote the rejection probability of the linearity test. Then the following holds:
	\begin{enumerate}
		\item If $H$ is $\had{r}{u}$ for some $u \in \F_{q}$, then $\varepsilon = 0$.

		\item There is a linear function
		      \(\mathcal{L}: \F_{2}[\mathbf{Z}]^{\leq r} \to \F_{2}\)
		      with $\delta(H, \mathcal{L}) \leq \varepsilon$.
	\end{enumerate}
\end{theorem}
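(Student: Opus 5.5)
Completeness is immediate. If $H=\had{r}{u}$, then $H(R)=R(\rho(u))$, so $H$ is the evaluation map at the point $\rho(u)$. Evaluation is $\F_2$-linear in $R$, so $H[R+R']=H[R]+H[R']$ for all $R,R'$, and $\varepsilon=0$.

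For the soundness part I would use the standard BLR analysis, applied to the group $G=\F_2[\mathbf{Z}]^{\le r}$. This is a finite $\F_2$-vector space, isomorphic to $\F_2^{N}$ with $N=\binom{t}{\le r}$. The test is exactly the BLR test on $H:G\to\F_2$, so the plan is the Fourier-analytic proof.

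\textbf{Step 1.} Write $h(R)=(-1)^{H(R)}$. The characters of $G$ are $\chi_\alpha(R)=(-1)^{\langle \alpha,R\rangle}$ for $\alpha\in G^*\cong\F_2^N$. These are exactly the linear functions $\mathcal L_\alpha:G\to\F_2$, written multiplicatively.

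\textbf{Step 2.} Expand the test acceptance probability:
\[
1-2\varepsilon=\E_{R,R'}[h(R)h(R')h(R+R')]=\sum_\alpha \hat h(\alpha)^3 .
\]

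\textbf{Step 3.} Bound the right-hand side using Parseval, $\sum_\alpha\hat h(\alpha)^2=1$:
\[
\sum_\alpha \hat h(\alpha)^3\le \max_\alpha\hat h(\alpha)\cdot\sum_\alpha\hat h(\alpha)^2=\max_\alpha \hat h(\alpha).
\]
So some $\alpha$ satisfies $\hat h(\alpha)\ge 1-2\varepsilon$.

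\textbf{Step 4.} Convert the correlation into a distance. Since $\hat h(\alpha)=1-2\delta(H,\mathcal L_\alpha)$, the bound from Step 3 gives $\delta(H,\mathcal L_\alpha)\le\varepsilon$.

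The only point needing care is that the statement uses relative Hamming distance over all of $\F_2[\mathbf{Z}]^{\le r}$ and the uniform distribution on $R,R'$. Both match the group setting exactly, so no constant is lost. I do not expect a real obstacle: the theorem is the classical BLR result with the domain relabeled as an $\F_2$-vector space.
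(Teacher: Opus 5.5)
Your proof is correct. The computation $1-2\varepsilon=\sum_\alpha\hat h(\alpha)^3\le\max_\alpha\hat h(\alpha)$, combined with $\hat h(\alpha)=1-2\delta(H,\mathcal{L}_\alpha)$, gives exactly $\delta(H,\mathcal{L}_\alpha)\le\varepsilon$. The step $\hat h(\alpha)^3\le(\max_\beta\hat h(\beta))\,\hat h(\alpha)^2$ is valid whatever the signs of the coefficients, so no case is missed.

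The paper does not prove this theorem; it imports it from the linearity-testing literature. Your argument is the standard Fourier-analytic proof of Bellare et al. It is the route that yields the stated bound with constant one and no smallness assumption on $\varepsilon$.

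The original majority-vote self-correction argument of Blum, Luby and Rubinfeld would only give $\delta(H,\mathcal{L})\le 2\varepsilon$, and only for $\varepsilon$ below a fixed constant. That would still suffice for the paper's use in the \textsf{HadTest} analysis after adjusting constants, but not for the theorem as literally written.
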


\noindent
\Cref{thm:BLR} relates the rejection probability to distance for degree-$1$ Hadamard codes.
Using this inductively, we will also prove \Cref{lemma:syntactic-test-gen-Hadamard}.

\begin{proof}[Proof of \Cref{lemma:syntactic-test-gen-Hadamard}]
	For completeness of the algorithm,
	we observe that if $H = \had{r}{u}$ for some $u \in \F_{q}$,
	then $H$ passes every test in $\hadtest$
	and thus $\hadtest^{H}_{r}$ always returns $\accept$.

	We now focus on the soundness of $\hadtest^{H}_{r}$.
	The following events that can make the algorithm reject:
	\begin{itemize}
		\item Event \(\mathcal{E}_{\mathrm{lin}}\) (line 2):
		      \(H[R + R'] \neq H[R] + H[R']\).
		\item Event \(\mathcal{E}_{\mathrm{unit}}\) (line 3):
		      \(H[1 + R] - H[R] \neq 1\).
		\item Event \(\mathcal{E}_{\deg 1}\) (line 4):
		      \(H[L + R] - H[R] \neq H[L]\).
		\item For each \(2 \leq k \leq r\), event \(\mathcal{E}_{\mathrm{mult},k}\)
		      (line 7):
		      \begin{align*}
			      H[R + L \cdot P_{k}] - H[R] \; \neq \;
			      (H[R+L] - H[R]) \cdot (H[R+P_k]-H[R]).
		      \end{align*}
	\end{itemize}
	We will first show that if
	\(\mathcal{E}_{\mathrm{lin}}\), \(\mathcal{E}_{\mathrm{unit}}\),
	and \(\mathcal{E}_{\mathrm{mult},k}\) for every \(2 \leq k \leq r\) all happen with
	probability less than \(\delta\), then \(H\) is \(\delta\)-close to a
	degree-\(r\) Hadamard encoding \(\had{r}{u}\) for some \(u \in \F_{q}\).

	First note that if event \(\mathcal{E}_{\mathrm{lin}}\)
	happens with probability less than \(\delta\),
	then by \Cref{thm:BLR},
	there exists a linear function
	\(\mathcal{L}: \F_{2}[Z_{1}, \ldots, Z_{t}]^{\leq r} \to \F_{2}\),
	such that \(\delta(H, \mathcal{L}) \leq \delta\).

	We will now show that if \(\mathcal{E}_{\mathrm{unit}}\) and \(\mathcal{E}_{\mathrm{mult},k}\) for
	\(2 \leq k \leq r\) all happen
	with probability less than \(\delta\),
	then \(\mathcal{L}\) is a Hadamard encoding \(\had{r}{u}\),
	for some \(u \in \F_{q}\).
	We do this by inductively proving that
	\(\mathcal{L}|_{\leq k} = \had{k}{u}\), where $\mathcal{L}|_{\leq k}$ is the restriction of $\mathcal{L}$ to $\F_{2}[\mathbf{Z}]^{\leq k}$.

	\paragraph*{Base case} Let \(k = 1\). By a union bound, we have
	\begin{align*}
		\Pr_{R\, \sim \, \F_{2}[\mathbf{Z}]^{\leq r}}
		[ \mathcal{L}(1 + R) \neq H[1+R] \; \vee \;
			\mathcal{L}(R) \neq H[R]] \leq 2\delta.
	\end{align*}
	Since the event \(\mathcal{E}_{\mathrm{unit}}\) happens with probability
	less than \(\delta\), we also have
	\begin{align*}
		\Pr_{R\, \sim \, \F_{2}[\mathbf{Z}]^{\leq r}}
		[H[1+R] - H[R] \neq 1] \leq \delta.
	\end{align*}
	Since \(\delta < \frac{1}{3}\),
	this implies that \(\mathcal{L}(1)=1\).
	Then \(\mathcal{L}\)
	is the restriction of a ring homomorphism to degree at most 1,
	since it is linear and \(\mathcal{L}(1)=1\).
	Define $\bm{\alpha} \in \F_{2}^{t}$ as $\alpha_{j} = \mathcal{L}(z_{j})$ and let $u:= \rho^{-1}(\bm{\alpha}) \in \F_{q}$.
	It is easy to see that $\mathcal{L}|_{\leq 1}$ is equal to $\had{1}{u}$.

	\paragraph*{Induction Step} We now assume that \(k \geq 2\).
	By induction hypothesis, we have
	\begin{align*}
		\mathcal{L}|_{\leq k-1} = \had{k-1}{u}.
	\end{align*}
	For any function \(f: \F_{2}[\mathbf{Z}]^{\leq r} \to \F_{2}\),
	define the function
	\begin{gather*}
		A_{f, k}:\F_{2}[\mathbf{Z}]^{\leq 1} \times \F_{2}[\mathbf{Z}]^{\leq k-1} \quad \to \quad \F_{2}  \\
		(L,P) \quad \mapsto \quad f(L \cdot P).
	\end{gather*}
	If \(f\) is a linear function, then \(A_{f,k}\) is a degree 2 polynomial
	in the coefficients of \(L\) and \(P\).
	Assume for contradiction that
	\(\mathcal{L}|_{\leq k} \neq \had{k}{u}\), i.e., there exists a degree-$k$ polynomial in $\F_{2}[\mathbf{Z}]^{\leq k}$ on which $\mathcal{L}$ and $\had{k}{u}$ disagree. Since both $\mathcal{L}$ and $\had{k}{u}$ are linear, we can assume there exists a monomial $\mathfrak{m} \in \F_{2}[\mathbf{Z}]^{\leq k}$ such that $\mathcal{L}(\mathfrak{m}) \neq \had{k}{u}(\mathfrak{m})$. Furthermore, from the induction hypothesis, we know that $\mathcal{L}$ agrees with $\had{k}{u}$ on all degree-$\leq (k-1)$ monomials, and thus $\mathfrak{m}$ must be of degree exactly $k$. This implies there exists $L \in \F_{2}[\mathbf{Z}]^{\leq 1}$ and $P \in \F_{2}[\mathbf{Z}]^{\leq k-1}$ such that \(A_{\mathcal{L}, k} \neq A_{\had{r}{u}, k}\).
	Hence, by the polynomial distance lemma over small fields (\Cref{thm:odlsz}),
	\begin{align*}
		\Pr_{L,P}[\mathcal{L}(L \cdot P) \neq \had{r}{u}(L \cdot P)] =
		\Pr_{L,P}[A_{\mathcal{L}, k}(L,P) \neq A_{\had{r}{u}, k}(L, P)] \geq \frac{1}{4}.
	\end{align*}
	On the other hand, we also have
	\begin{align*}
		\had{r}{u}(L \cdot P) =
		\had{k-1}{u}(L) \cdot \had{k-1}{u}(P) =
		\mathcal{L}(L) \cdot \mathcal{L}(P),
	\end{align*}
	where the first equality follows because $\had{r}{u}$ is a ring homomorphism and the second equality follows from the induction hypothesis that $\mathcal{L}|_{\leq k-1} = \had{k-1}{u}$ (since \(\deg(L),\deg(P) \leq k-1\)). Thus we have the following lower bound:
	\begin{equation}\label{eqn:had-test-lower-bound}
		\Pr_{L,P}[\mathcal{L}(L \cdot P) \neq \mathcal{L}(L) \cdot \mathcal{L}(P)] \; \geq \; \dfrac{1}{4}.
	\end{equation}

	\noindent
	Now we give an upper bound on this probability. Using $\delta(\mathcal{L}, H) \leq \delta$ and union bound, we get,
	\begin{align*}
		\Pr_{L, R}[\mathcal{L}(L) \neq
		H(L + R) - H(R)] \leq 2 \delta \\
		\Pr_{P, R}[\mathcal{L}(P) \neq
		H(P + R) - H(R)] \leq 2 \delta \\
		\Pr_{L, P, R}[\mathcal{L}(L \cdot P) \neq
			H(L \cdot P + R) - H(R)] \leq 2 \delta,
	\end{align*}
	since all the points \(H\) is evaluated on are marginally uniformly random
	polynomials of degree at most \(r\).
	It follows that
	\begin{gather*}
		\Pr_{L,P}[\mathcal{L}(L \cdot P) \neq \mathcal{L}(L) \cdot \mathcal{L}(P)] \\
		\leq 6 \cdot \delta +
		\Pr_{L,P, R}[\underbrace{H[R + L \cdot P] + H[R] \; \neq \;  (H[R+L] + H[R]) \cdot (H[R+P] + H[R])}_{\mathcal{E}_{\mathrm{mult},k}}].
	\end{gather*}
	In the above inequality, the second summand is exactly the probability of the event $\mathcal{E}_{\mathrm{mult},k}$, which we assume to be at most $\delta$. Substituting it above and using \Cref{eqn:had-test-lower-bound}, we get,
	\begin{align*}
		\dfrac{1}{4} \; \leq \; \Pr_{L,P}[\mathcal{L}(L \cdot P) \neq \mathcal{L}(L) \cdot \mathcal{L}(P)] \; \leq \; 7 \delta.
	\end{align*}
	If we assume that \(\delta < \frac{1}{28}\),
	we get a contradiction to our assumption that $\mathcal{L}|_{\leq k} \neq \had{k}{u}$.
	By induction, we get that $\mathcal{L} = \had{r}{u}$.

	Lastly, we show that the restriction of \(H\) to degree 1 is also close
	to \(\had{1}{u}\).
	Since \(\delta(H, \had{r}{u}) \leq \delta\),
	we have for any \(L \in \F_{2}[\mathbf{Z}]^{\leq 1}\)
	\begin{align*}
		\Pr_{R}\brac{ \had{r}{u}(L + R) \neq H(L + R) \; \vee \; \had{r}{u}(R) \neq H(R) } \; \leq \; 2 \delta.
	\end{align*}
	We then have
	\begin{align*}
		\Pr_{L}\brac{ H[L] \neq \had{1}{u}(L) } \;
		\leq \; 2 \delta \, + \,
		\Pr_{L}[ \; \underbrace{H[L] \neq H[L + R] + H[R]}_{\mathcal{E}_{\deg 1}} \;] \;
		\leq  \; 3 \delta.
	\end{align*}
	This finishes the proof of \Cref{lemma:syntactic-test-gen-Hadamard}.
\end{proof}

\section{Proof of Local Correction}\label[appendix]{app:proof-local-correction}
\begin{proof}[Proof of \Cref{thm:local-corr-F2}]
	For completeness, the argument is almost verbatim to the completeness argument of \Cref{algo:new-low-deg-test-binary} (see the proof of \Cref{thm:new-low-degree-test}).
	Lines 2, 3, and 4 always return $\accept$ and Line 6 returns $$\sum_{i=1}^{c+1} f'[(\mathbf{a}, \mathbf{b}), \Phi(0) + \zeta^{i} \mathbf{v}, L] = \sum_{i=1}^{c+1} \had{1}{\Psi^*(P_{\mathrm{lines}})}[(\mathbf{a}, \mathbf{b}), \Phi(0) + \zeta^{i} \mathbf{v}, L] ,$$ which we show is equal to $\had{1}{P}(\veca,L)$. We note that since $\Psi(P_{\mathrm{lines}}(\veca,\vecb))$ is a degree-$c$ polynomial, using~\Cref{fact:local-correction-via-lines} we have that
	$$\sum_{i=1}^{c+1} \Psi(P_{\mathrm{lines}}(\veca,\vecb))(\Phi(0)+\zeta^i \mathbf{v}) = \Psi(P_{\mathrm{lines}}(\veca,\vecb))(\Phi(0)) = P_{\mathrm{lines}}(\veca,\vecb)(0) = P(\veca).$$ Hence, using linearity of $\rho$ and $L$, we conclude that $\corr$ indeed returns $L(\rho(P(\veca))) = \had{1}{P}(\mathbf{a}, L)$.

	\paragraph*{Soundness:}
	Assume there exists a polynomial \(P \in \mathcal{P}_{d'}(m,\F_{q})\)
	such that $f$ is $\delta$-close to $\had{1}{P}$.
	Let \(\eta\) be the probability that \(\corr^{f,f'}(\ba, L)\) rejects.
	We first introduce some notation similar to \Cref{thm:new-low-degree-test},
	which will be useful for the proof.

	\textbf{Notation:}
	For every pair $(\mathbf{a}, \mathbf{b}) \in \F_{q}^{2m}$,
	for every $\mathbf{u} \in \F_{q}^{cm_{1}}$,
	and for every $L \in \hpol$,
	we will use
	$\stdcorr^{f'[(\mathbf{a}, \mathbf{b}), \cdot, L]}_{c}(\mathbf{u}; \mathbf{v})$
	to denote the following sum:
	\begin{align*}
		\stdcorr^{f'[(\mathbf{a}, \mathbf{b}), \cdot, L]}_{c}(\mathbf{u}; \mathbf{v})
		\; := \;
		\sum_{i = 1}^{c+1} f'[(\mathbf{a}, \mathbf{b}), \, \mathbf{u} + \zeta^{i} \mathbf{v}, \, L].
	\end{align*}
	Note that $\corr^{f,f'}_{d,c}(\mathbf{a}, L ; \mathbf{b}, \mathbf{u}, \mathbf{v}, \lambda)$ returns $\stdcorr^{f'[(\mathbf{a}, \mathbf{b}), \cdot, L]}_{c}(\Phi(0); \mathbf{v})$,
	if it does not reject.
	We define \(Q_{\ba, \bb}\) as the degree-\(c\) polynomial,
	whose degree-\(1\) Hadamard encoding is	closest to
	\(f'[(\ba, \bb), \cdot, \cdot]\),
	and \(F(\ba, \bb) = Q_{\ba, \bb} \circ \Phi\).
	Note that $\deg (F(\ba, \bb)) \leq c (m_{1}-1)m_{1}^{c-1} \leq c m_{1}^{c} \leq d'.$

	There are three events which can make the test reject:
	\begin{itemize}
		\item Event $\mathcal{E}_{1}$: $\hadtest^{f'[(\mathbf{a}, \mathbf{b}), \bu, \cdot]}_{1}$ returns \reject.
		\item Event $\mathcal{E}_{2}$: $f'[(\mathbf{a}, \mathbf{b}), \mathbf{u}, L]+\sum_{i=1}^{c+1} f'[(\mathbf{a}, \mathbf{b}), \mathbf{u} + \zeta^{i} \mathbf{v}, L]$ is not equal to $0$.
		\item Event $\mathcal{E}_{3}$: $\sum_{i=1}^{c+1} f'[(\mathbf{a}, \mathbf{b}), \Phi(\lambda) + \zeta^{i} \mathbf{v}, L]$ is not equal to $f[\ba + \lambda \bb, L]$.
	\end{itemize}

	If the first two events happen with low probability,
	then by \Cref{lemma:std-ldt-hadamard-encoding},
	with high probability \(f'[(\ba, \bb), \cdot , \cdot]\)
	must be close to the Hadamard encoding of a degree-$c$  multivariate polynomial, which by definition is \(Q_{\ba, \bb}\).
	This then implies that we can do standard local correction on
	\(f'[(\ba, \bb), \Phi( \lambda ) , \cdot ]\)
	to get the value of
	$\had{1}{Q_{\mathbf{a},\mathbf{b}}}[\Phi( \lambda ), \cdot ]$ for any $\lambda \in \F_{q}$.
	We state it more formally in the next claim.

	\begin{claim}\label{clm:lcf2}
		Suppose that \(\corr^{f,f'}_{c}(\ba, L)\) rejects
		with probability less than \(\eta\)
		for a random \(L\).
		Then for any \(\lambda \in \F_{q}\) we have
		\begin{align*}
			\Pr_{\bb, \bv, L} \left[
			\stdcorr^{f'[(\ba, \bb), \cdot, L]}_{c}( \Phi(\lambda) ; \bv)
			\neq
			\had{1}{F(\ba, \bb)}(\lambda, L)
			\right]
			\leq
			\mathcal{O}_{c}(\eta).
		\end{align*}
		where the probability is over
		$\bb \sim \F_{q}^{m}$,
		$\bv \sim \F_{q}^{cm_{1}}$, and
		$L \sim \hpol$.
	\end{claim}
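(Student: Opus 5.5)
The plan is to reduce \Cref{clm:lcf2} to \Cref{lemma:std-ldt-hadamard-encoding}, applied separately for each fixed $\bb$, and then average over $\bb$. This is the same argument as for the third summand in the proof of \Cref{lemma:ldt-apply-og-ldt}.

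First I fix $\bb$ and set $\rej_{\bb}$ to be the probability, over $\bu,\bv,\lambda,L$ and the internal coins of $\hadtest$, that $\corr^{f,f'}_{d,c}(\ba,L)$ rejects given this $\bb$. By assumption $\E_{\bb}[\rej_{\bb}]\le\eta$. Let $g:=f'[(\ba,\bb),\cdot,\cdot]$, viewed as a function $\F_q^{cm_1}\times\F_2[\bZ]^{\le 1}\to\F_2$. Conditioned on $\bb$, events $\mathcal{E}_1$ and $\mathcal{E}_2$ are exactly the two hypotheses of \Cref{lemma:std-ldt-hadamard-encoding} for $g$ with $r=1$, $k=cm_1$ and $\zeta_0=0$. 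Each of these events has probability at most $\rej_{\bb}$, so the lemma applies with $\delta_1=\rej_{\bb}$.

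The lemma then gives, for every point $\bw\in\F_q^{cm_1}$, and in particular for $\bw=\Phi(\lambda)$,
\[
\Pr_{\bv,L}\Big[\stdcorr^{g[\cdot,L]}_{c}(\Phi(\lambda);\bv)\neq \had{1}{P_{\bb}}(\Phi(\lambda),L)\Big]\le \mathcal{O}_c(\rej_{\bb}).
\]
Here $P_{\bb}$ is the degree-$c$ polynomial whose degree-$1$ Hadamard encoding is closest to $g$. By the definition used in the proof (the same argmin as \Cref{eqn:defn-proxy-lines-table}), $P_{\bb}=Q_{\ba,\bb}$. The conclusion holds for every point, including the non-uniform point $\Phi(\lambda)$, and that pointwise guarantee is exactly what is needed here. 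Since $F(\ba,\bb)(\lambda)=Q_{\ba,\bb}(\Phi(\lambda))$, the right-hand side equals $\had{1}{F(\ba,\bb)}(\lambda,L)$.

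Finally I average over $\bb\sim\F_q^m$. The bound is linear in $\rej_{\bb}$ with a constant depending only on $c$, so averaging gives $\mathcal{O}_c(\E_{\bb}[\rej_{\bb}])\le\mathcal{O}_c(\eta)$, which holds for every fixed $\lambda$ as claimed.

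The step needing care is that \Cref{lemma:std-ldt-hadamard-encoding} is valid for all values of $\delta_1$, including large ones. Otherwise the averaging would need a Markov cutoff. If the lemma is only valid for small $\delta_1$, I would split the $\bb$'s by whether $\rej_{\bb}\le\delta_0$: for the good ones I apply the lemma, and for the bad ones I bound the probability by $1$. Markov's inequality shows the bad $\bb$'s contribute at most $\eta/\delta_0=\mathcal{O}_c(\eta)$.

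A second point to check is that the definitions match: the closest-polynomial choice in the lemma must coincide with $Q_{\ba,\bb}$. Ties in the argmin do not matter, because whenever the lemma's bound is nontrivial the closest polynomial is unique, since distinct degree-$c$ polynomials have Hadamard encodings at distance about $1/2$.
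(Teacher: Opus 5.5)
Your proposal is correct and matches the paper's proof. Like the paper, you fix $\mathbf{b}$, apply \Cref{lemma:std-ldt-hadamard-encoding} to $f'[(\mathbf{a},\mathbf{b}),\cdot,\cdot]$ with $\delta_1=\rej_{\mathbf{b}}$ at the point $\Phi(\lambda)$, identify the closest polynomial with $Q_{\mathbf{a},\mathbf{b}}$ so the target is $\had{1}{F(\mathbf{a},\mathbf{b})}(\lambda,L)$, and average using $\E_{\mathbf{b}}[\rej_{\mathbf{b}}]\le\eta$. Your extra caveats are harmless but not needed: the lemma's $\mathcal{O}_c(\delta_1)$ bound holds for every $\delta_1$ (its proof already handles large $\delta_1$, and any bound exceeding $1$ is trivial), and its conclusion holds for any minimizer, so tie-breaking is irrelevant.
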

	We assume the claim for now,
	and prove it in the end of the section.

	If the event \(\mathcal{E}_{3}\) also happens with low probability,
	then the local correction of \(f'\) is close to \(f\),
	which also is close to \(\had{1}{P}\) from our assumption.
	We then get the following bound by the
	union bound
	for a random \(\lambda \in \F_{q}^{\times}\)
	\begin{align*}
		             & \Pr_{\phantom{\bv,}\bb, \lambda, L} \left[\had{1}{F(\ba, \bb)}(\lambda, L)
		\neq \had{1}{P}(\ba+\bb \lambda, L) \right]                                               \\
		\;\leq \;    &
		\Pr_{\bb, \bv, \lambda, L} \left[
		\had{1}{F(\ba, \bb)}(\lambda, L)
		\neq
		\stdcorr^{f'[(\ba, \bb), \cdot, L]}_{c}( \Phi(\lambda) ; \bv)
		\right]                                                                                   \\
		\;+\;        &
		\Pr_{\bb, \bv, \lambda, L} \left[
		\stdcorr^{f'[(\ba, \bb), \cdot, L]}_{c}( \Phi(\lambda) ; \bv)
		\neq f[\ba + \lambda \bb, L] \right]                                                      \\
		\;		+   \;   &
		\Pr_{\phantom{\bv,} \bb,\lambda, L} \left[
			f[\ba + \lambda \bb, L]
		\neq \had{1}{P}(\ba + \lambda \bb, L) \right]                                             \\
		\;	\leq   \; &
		\mathcal{O}_{c}( \eta) + \eta + 2 \delta.
	\end{align*}
	The first summand is bounded by \Cref{clm:lcf2}.
	The second summand is one of the rejection criteria for the test
	so it is bounded by \(\eta\).
	The last summand is bounded by \(2\delta\),
	by the assumption that \(\had{1}{P}\)
	is \(\delta\)-close to \(f\),
	and half of linear polynomials are homogeneous.
	Applying \Cref{obs:delta-fn-had-approx},
	we then get
	\begin{align*}
		\Pr_{\bb, \lambda} \left[
			F(\ba, \bb)(\lambda)
			\neq
			P(\ba+\bb \lambda)
			\right]
		\;\leq \;
		\mathcal{O}_{c}(\eta + \delta).
	\end{align*}
	Apply \Cref{thm:local-correction}\footnote{One can also simply apply~\Cref{thm:odlsz} here instead.} on \(P\) and \(F\) to get
	\begin{align*}
		\Pr_{\mathbf{b}, \lambda}\brac{ F(\ba, \bb)(0)  = P(\mathbf{a})
			\quad \text{OR} \quad
			F(\ba, \bb)(\lambda) \neq P(\ba + \bb \lambda)}
		\; \geq \;
		1- \dfrac{d'}{q-1}.
	\end{align*}
	Combining the two inequalities together with \Cref{obs:delta-fn-had-approx},
	we get
	\begin{align*}
		\Pr_{\mathbf{b}, L}\brac{
			\had{1}{F(\ba, \bb)}(0, L)
			\neq \had{1}{P}(\mathbf{a}, L)}
		\; \leq \;
		\mathcal{O}_{c}( d'/q + \eta + \delta).
	\end{align*}
	Then applying \Cref{clm:lcf2} with \(\lambda = 0\)
	we have
	\begin{align*}
		\Pr_{\mathbf{b}, \bv, L}\brac{
		\stdcorr^{f'[(\ba, \bb), \cdot, L]}_{c}(\Phi(0) ; \bv)
		\neq
		\had{1}{P}(\mathbf{a}, L)}
		\; \leq \;
		\mathcal{O}_{c}( d'/q + \eta + \delta) + \mathcal{O}_{c}(\eta)
		=
		\mathcal{O}_{c}( d'/q + \eta + \delta).
	\end{align*}
	Since \(\corr\) returns \(\stdcorr^{f'[(\ba, \bb), \cdot, L]}_{c}(\Phi(0) ; \bv)\)
	whenever it does not reject, and it rejects with probability \(\eta\),
	a union bound gives
	\begin{align*}
		\Pr\brac{\corr^{f,f'}_{d,c}(\ba, L) \text{ does not return } \had{1}{P}(\ba, L)}
		\; \leq \;
		\eta + \mathcal{O}_{c}( d'/q + \eta + \delta).
	\end{align*}
	This finishes the proof of \Cref{thm:local-corr-F2}, assuming the proof of \Cref{clm:lcf2}.
\end{proof}
We end by proving \Cref{clm:lcf2}.

\begin{proof}[Proof of \Cref{clm:lcf2}]
	For \(\bb \in \F_{q}^{m}\), set
	\begin{align*}
		\rej_{\bb} :=
		\Pr_{L, \bv, \lambda}
		\left[\corr^{f,f'}_{c}(\ba, L ; \bb, \bv, \lambda) \text{ returns \reject }\right].
	\end{align*}
	Now fix \(\bb \in \F_{q}^{m}\).
	We have
	\(\Pr[\mathcal{E}_{1} | \bb],\Pr[\mathcal{E}_{2} | \bb] \leq \rej_{\bb}\),
	so applying \Cref{lemma:std-ldt-hadamard-encoding}
	to \(f'[(\ba, \bb), \cdot, \cdot]\), we get that for every $\mathbf{u} \in \F_{q}^{cm_{1}}$,
	\begin{align*}
		\Pr_{\bv, L} \left[
		\stdcorr^{f'[(\ba, \bb), \cdot, L]}_{c}(\bu ; \bv)
		\neq
		\had{1}{Q_{\ba, \bb}}(\bu, L)
		\right]
		\; \leq \;
		\mathcal{O}_{c}(\rej_{\bb}).
	\end{align*}
	We will use this inequality with \(\bu = \Phi(\lambda)\).
	In this case by definition we have \(F(\ba, \bb)(\lambda) = Q_{\ba,
			\bb}(\Phi(\lambda))\).
	Averaging over \(\bb\)
	and using that
	\(\mathbb{E}_{\bb}[\rej_{\bb}] \leq \eta\)
	we then get
	\begin{align*}
		\Pr_{\bb, \bv, L}
		\left[
		\stdcorr^{f'[(\ba, \bb), \cdot, L]}_{c}(\Phi(\lambda) ; \bv)
		\neq
		\had{1}{F(\ba, \bb)(\lambda)}(L)
		\right]
		\leq
		\mathcal{O}_{c}(\eta),
	\end{align*}
	giving the wanted claim.
\end{proof}

\section{Proof of Zero-on-Grid Test}\label[appendix]{app:proof-zero-test}
\begin{proof}[Proof of \Cref{lemma:zero-on-variety}]
	We start by discussing completeness.

	\paragraph*{Completeness}
	Line $2$ and Line $3$ always pass because of \Cref{coro:meta-vanishing-poly} and the completeness property of $\corr$.

	\paragraph*{Soundness}
	We will consider the following events (we will hide the internal randomness of $\corr$ for sake of clarity in writing):
	\begin{enumerate}
		\item Event $\mathcal{E}_{1}$:
		      $\corr^{\Pi, \Pi'}_{d}((\mathbf{a}, \mathbf{0}), L)$
		      returns \reject{}
		      or is not equal to $0$, and
		\item Event $\mathcal{E}_{2}$:
		      $\corr^{\Pi, \Pi'}_{d}((\mathbf{a}, Z_{H}(\mathbf{a})), L)$
		      returns \reject{}
		      or is not equal to $f[\mathbf{a}, L]$.
	\end{enumerate}
	Suppose the probability of $\mathcal{E}_1 \vee \mathcal{E}_2$ is at most $\eta$.
	We will then show that \(P|_{H^{m}} \equiv 0\).
	To do this, we show that the polynomial \(M \in \mathcal{P}_{d'}(2m, \F_{q}) \)
	from the soundness hypothesis of \Cref{lemma:zero-on-variety} satisfies the properties from
	\Cref{coro:meta-vanishing-poly}, i.e.,
	\begin{enumerate}
		\item $M(\mathbf{X}, \mathbf{0}) = 0$, as polynomials.
		\item $M(\mathbf{X}, Z_{H}(\mathbf{X})) = P(\mathbf{X})$, as polynomials (here $P \in \mathcal{P}_{d'}(m, \F_{q})$ from the soundness hypothesis of \Cref{lemma:zero-on-variety}).
	\end{enumerate}
	These two conditions imply that \(P|_{H^{m}} \equiv 0\).
	To see this the first condition implies that \(M\) can be written on the
	form
	\begin{align*}
		M(\bx, \by) = \sum_{i=1}^{k} M_{i}(\bx, \by) \cdot y_{i}
	\end{align*}
	for some \(k \in \N\). From the second condition we then have
	\begin{align*}
		P(\bx) =
		M(\bx, Z_{H}(\bx)) =
		\sum_{i=1}^{k} M_{i}(\bx, \by) \cdot Z_{H}^{(i)}(\bx),
	\end{align*}
	so since \(Z_{H}(\bx)\) vanishes on \(H^{m}\),
	we have \(P\) also vanishes on \(H^{m}\).
	We will now show that both conditions hold.

	For the first condition we have the following union bound:
	\begin{align*}
		\Pr_{\ba, L}\brac{
		\had{1}{M}((\mathbf{a}, \mathbf{0}), L)
		\; \neq \; 0 }
		\; \leq \; &
		\Pr_{\ba, L}\brac{
		\had{1}{M}((\mathbf{a}, \mathbf{0}), L)
		\; \neq \;
		\corr^{\Pi, \Pi'}_{d}((\mathbf{a}, \mathbf{0}), L)} \\
		\; + \;    &
		\Pr_{\ba, L}\brac{
		\corr^{\Pi, \Pi'}_{d}((\mathbf{a}, \mathbf{0}), L)
		\; \neq \;
		0
		}                                                   \\
		\; \leq \; &
		\alpha(c) \cdot \left( d'/q + \eta + \delta\right)
	\end{align*}
	where the first summand is bounded by
	\(
	\alpha(c) \cdot \left( d'/q + \eta + \delta\right)
	\)
	from \Cref{thm:local-corr-F2},
	where \(\alpha(c)\) is a large constant depending on \(c\),
	and the second summand is bounded by \(\eta\)
	by assumption that event $\mathcal{E}_{1}$ happens with probability at most \(\eta\).
	From \Cref{obs:delta-fn-had-approx},
	we then get
	\begin{align*}
		\Pr_{\ba}\brac{M(\mathbf{a}, \mathbf{0}) \; \neq \; 0 }
		\; \leq \; &
		2 \alpha(c) \cdot \left(d'/q + \eta + \delta\right)
		< 1 - \frac{d'}{q},
	\end{align*}
	where the last inequality holds under the assumption that
	\(\eta < 1/(2 \alpha(c)) -2d'/q -\delta\).
	Since $M$ is a polynomial of degree $d'$, from \Cref{thm:odlsz},
	it follows that
	\(M(\mathbf{X}, \mathbf{0}) = 0 \).
	This shows that $M$ satisfies the first condition.

	The proof that the second condition also holds for \(M\) is almost
	identical. Recall that every polynomial in $Z_{H}(\mathbf{X})$ is a degree $(|H|-1)$ polynomial,
	so $M(\mathbf{X}, Z_{H}(\mathbf{X}))$ is a degree-$(d'|H|)$ polynomial.
	By a union bound we have
	\begin{align*}
		           & \Pr_{\ba, L}\brac{
		\had{1}{M}((\ba, Z_{H}(\ba)), L)
		\; \neq \; \had{1}{P}(\ba, L) }              \\
		\; \leq \; &
		\Pr_{\ba, L}\brac{
		\had{1}{M}((\ba, Z_{H}(\ba)), L)
		\; \neq \;
		\corr^{\Pi, \Pi'}_{d}((\ba, Z_{H}(\ba)), L)} \\
		\; + \;    &
		\Pr_{\ba, L}\brac{
		\corr^{\Pi, \Pi'}_{d}((\ba, Z_{H}(\ba)), L)
		\; \neq \;
		f[\ba, L]
		}                                            \\
		\; + \;    &
		\Pr_{\ba, L}\brac{
			f[\ba, L]
			\; \neq \;
			\had{1}{P}(\ba, L)
		}                                            \\
		\; \leq \; &
		\alpha(c) \cdot \left( d'/q + \eta + \delta\right)
	\end{align*}
	where the first two summands are bounded as above,
	with the addition that the last summand is bounded by \(\delta\)
	by assumption.
	Using \Cref{obs:delta-fn-had-approx}, we get,
	\begin{align*}
		\Pr_{\ba}\brac{
			M(\mathbf{a}, Z_{H}(\ba))
			\; \neq \; P(\ba) }
		\; \leq \;
		2\alpha(c) \cdot \left( \frac{d'|H| }{q} + \eta + \delta\right)
		\; < \;
		1 - \dfrac{d'|H|}{q},
	\end{align*}
	where the last inequality holds under the assumption that
	\(\eta < 1/(2 \alpha(c)) -2d'|H|/q -\delta\). From \Cref{thm:odlsz},
	it then follows that
	\(M(\mathbf{X}, Z_{H}(\mathbf{X})) = P(\mathbf{X}) \) as polynomials.
	The two criteria of \Cref{coro:meta-vanishing-poly}
	are then satisfied,
	so we must have \(P|_{H^{m}} \equiv 0\) finishing the proof.
\end{proof}

\section{Honest Prover and Full PCP Verifier}\label[appendix]{app:full-verifier}

We give a complete description of the PCP verifier for the language $3$-$\mathsf{COLOR}$ in \Cref{algo:pcp-verifier-final}. After that, in \Cref{algo:compl-prover}, we give the description of an honest prover which takes as input a graph in $3$-$\mathsf{COLOR}$ and outputs proof oracles that the verifier of \Cref{algo:pcp-verifier-final} always accepts. Given a proper 3-coloring, an honest prover can construct these proof oracles in polynomial time by running~\Cref{algo:compl-prover}.

\paragraph{Comparison with~\Cref{algo:final-verifier}} We note that~\Cref{algo:pcp-verifier-final} is essentially equivalent to~\Cref{algo:final-verifier}, except here we unravel all the subroutines of~\Cref{algo:final-verifier} into a single algorithm. However, we make some minor changes to make the overall description simpler, so some ingredients in this version are redundant. In particular, we use degree-3 Hadamard encodings (and either verify these using some additional checks or simply ignore the additional information) for all the proof oracles, as opposed to using degree-1 encodings for some oracles like in~\Cref{algo:final-verifier}. For example, we incorporate the checks of $\hadtest$ into two lines (Lines 12 and 13) compactly. Comparing at a more granular level, the $\texttt{SC}$ subroutine serves the same purpose as it does in~\Cref{algo:final-verifier} and the $\texttt{LC}$ subroutine simulates $\stdcorr$ (which is used in $\vldt$ and $\vvanish$). Lines 10 to 15 correspond to the initial low-degree tests in Line 6 of~\Cref{algo:final-verifier}, Line 21 corresponds to Line 8 of~\Cref{algo:final-verifier}, and the remaining Lines 17 to 23 correspond to the zero-on-grid tests in Lines 7 and 9 of~\Cref{algo:final-verifier}. Similarly,~\Cref{algo:compl-prover} generates proof oracles that are (almost) identical to the ones constructed in the completeness proof of~\Cref{algo:final-verifier}.

In the following two algorithms, all the randomness used throughout the algorithm (highlighted in color \rand{red} or \randsecond{dark blue}) is sampled independently from an underlying uniform distribution. Recall that $\set{A \to B}$ refers to the set of functions from $A$ to $B$, $\cP_d(m,\F_q)$ denotes degree-$d$ functions over $\F_q^m$, and for a function $f:H^m\to \F_q$ with $H\subseteq \F_q$, $\LDE(f) \in \cP_{(|H|-1)m}(m,\F_q)$ denotes the low-degree extension of $f$ (see \Cref{fact:low-deg-extension}).

\newgeometry{top=1in, bottom=1in, left=0.7in, right=0.7in}
\begin{algobox}
	\begin{algorithm}[H]
		\caption{The PCP Verifier}
		\label{algo:pcp-verifier-final}

		\DontPrintSemicolon

		\SetKwProg{Fn}{Function}{}{}

		{\nonl {\bf Constants:}} Constants $c,c',c_1,c_2\in \N$ such that $c+1 = 2^{c'}\pm 1$. \;
		{\nonl {\bf Parameters:}} Given $n\in \N$, let $h:=\lceil \log n\rceil$, $m:=\lceil\log n/\log \log n\rceil$, $t := 2c'\lceil c_1 \log(hm)\rceil$, $D:=c_2hm$, and $m_1:=\lfloor D^{1/c}\rfloor+1$. Let $q:=2^t$, $H\subseteq \F_q$ be of size $h$, $\rho:\F_q \to \F_2^t$ be an $\F_2$-linear bijection, and $\omega,\zeta\in \F_q^{\times}$ be of order $3$ and $c+1$ respectively. Let $\mathcal{F}(m')$ be $\{ \F_{q}^{m'} \times \mathcal{P}_{3}(t,\F_{2}) \to \F_{2} \}$ and $\mathcal{F}_{1}(m')$ be $\{ \F_{q}^{m'} \times \F_{q}^{cm_{1}} \times \mathcal{P}_{3}(t,\F_{2}) \to \F_{2} \}$.

		\vspace{1mm}

		\KwIn{Graph $G=(V,E)$ over $n$ vertices, where $V=H^m$.}

		\vspace{1mm}

		{\nonl {\bf Oracles:} $\chi \in \mathcal{F}(m)$, $\lines{\chi} \in \mathcal{F}_{1}(2m)$, $\chi' \in \mathcal{F}(2m)$, $\lines{\chi'} \in \mathcal{F}_{1}(4m)$, $\zero{A} \in \mathcal{F}(2m)$, $\lines{\zero{A}} \in \mathcal{F}_{1}(4m)$, $\zero{B} \in \mathcal{F}(4m)$, $\lines{\zero{B}} \in \mathcal{F}_{1}(8m)$.}

		\vspace{3mm}

		Sample $\rand{\veca},\rand{\vecb}\sim  \F_q^m$, $\rand{\bm{\alpha}} \sim \F_{q}^{2m}$, $\rand{\bm{\beta}} \sim \F_{q}^{4m}$, $\rand{\mathbf{u}}, \rand{\mathbf{v}} \sim \F_{q}^{cm_{1}}$, $\rand{\lambda} \sim \F_{q}^{\times}$, $\rand{P_{i}} \sim \cP_{i}(t,\F_{2})$ for $0 \leq i \leq 3$, $\rand{R} \sim \cP_3(t,\F_2)$, \(\rand{s} \sim \F_{2}\), and $\rand{L}\sim \cP_1(t,\F_2)$ with $L(0) = 0$\;

		\SetKwBlock{Subroutines}{Subroutines}{}
		\SetKwFunction{SC}{$\texttt{SC}$}
		\SetKwFunction{LC}{$\texttt{LC}$}

		\Subroutines{
		\SC{$g: \cP_{3}(t,\F_{2}) \to \F_{2}$, \, $P \in \cP_{3}(t,\F_{2})$}{  \KwRet{ $g[P+\rand{R}] - g[\rand{R}]$ }}

		\vspace{1mm}

		\LC{$\theta: \F_{q}^{cm_{1}} \times \cP_{3}(t,\F_{2})$, \, $\mathbf{u_0} \in \F_{q}^{cm_{1}}$}{  \KwRet{ $ - \sum_{i=1}^{c+1} \theta[\mathbf{u}_0 + \zeta^{i} \rand{\mathbf{v}}, \rand{L}] $ }}
		}

		$Z_{H, \rand{\veca}} \gets ((\prod_{\gamma\in H} (\rand{a_i}-\gamma))_{i=1}^m)$ and $Z_{H, \rand{\veca}, \rand{\vecb}} \gets ((\prod_{\gamma\in H} (\rand{a_i}-\gamma))_{i=1}^{m}, (\prod_{\gamma\in H} (\rand{b_{i}}-\gamma))_{i=1}^{m})$ \;

		$\Phi_{\rand{\lambda}} \gets (1,\rand{\lambda}^{m_1^i},\dots,\rand{\lambda}^{(m_1-1)\cdot m_1^i})_{0\le i<c}$ and $\Phi_{0} \gets (1,0,\dots,0)_{0\le i<c}$ \;

		For $\gamma \in \F_{q}$, interpolate $\Lambda_{\gamma} \in \cP_{3}(t,\F_{2})$ s.t. $\Lambda_{\gamma}(z) = \rand{L} \circ \rho \circ (\gamma Y^{3} - \gamma) \circ \rho^{-1}(z)~\forall z \in \F_2^t$ \;

		\vspace{1mm}

		$\theta_{A,1} \gets A_{01}[(\rand{\mathbf{a}}, \mathbf{0}, \rand{\bm{\alpha}}), \cdot, \cdot]$ and $\theta_{A,2} \gets A_{01}[(\rand{\mathbf{a}}, Z_{H, \rand{\veca}}, \rand{\bm{\alpha}}), \cdot, \cdot]$ \; $\theta_{B,1} \gets B_{01}[(\rand{\mathbf{a}}, \rand{\mathbf{b}}, \mathbf{0}, \rand{\bm{\beta}}), \cdot, \cdot]$ and $\theta_{B,2} \gets B_{01}[(\rand{\mathbf{a}}, \rand{\mathbf{b}}, Z_{H, \rand{\veca}, \rand{\vecb}}, \rand{\bm{\beta}}), \cdot, \cdot]$ \;

		\vspace{1mm}

		\tikz[remember picture,overlay]\node (Astart) {};
		\For{$(\Pi,\lines{\Pi}) \in \{(\chi,\chi_{1}), (\chi', \chi'_{1}), (A_{0}, \lines{A_{0}}), (B_{0}, \lines{B_{0}}) \}$ where $\Pi \in \mathcal{F}(m')$ and $\lines{\Pi}\in \mathcal{F}_{1}(2m') $}{

		Sample $(\randsecond{\veca'},\randsecond{\vecb'}) \sim \F_q^{2m'}$, let $f\gets \Pi[\randsecond{\veca'},\cdot]$ and $f'\gets {\lines{\Pi}}[(\randsecond{\veca'},\randsecond{\vecb'}),\cdot,\cdot]$ \;

		Is \SC{$g, \rand{P_{i}} + \rand{s}$} $\eqq$ $g[\rand{P_{i}}] + \rand{s}$ for $0 \leq i \leq 3$ and $g \in \set{f,f'[\rand{\mathbf{u}}, \cdot]}$ \;

		Is \SC{$g, \rand{L} \cdot \rand{P_{i}}$} $\eqq$ \SC{$g, \rand{L}$} $\cdot$ \SC{$g, \rand{P_{i}}$} for $i \in \set{1,2}$ and $g \in \set{f,f'[\rand{\mathbf{u}}, \cdot]}$ \;

		Is $f'[\rand{\mathbf{u}}, \rand{L}] + \sum_{i=1}^{c+1} f'[\rand{\mathbf{u}} + \zeta^{i} \rand{\mathbf{v}}, \rand{L}] \, \eqq \, 0$   \;

		Is $\sum_{i=1}^{c+1}  f'[\Phi_{\rand{\lambda}} + \zeta^{i} \rand{\mathbf{v}}, \rand{L}] \, \eqq \, \Pi[\randsecond{\mathbf{a}'} + \rand{\lambda} \randsecond{\mathbf{b}'} , \rand{L}]$}
		\tikz[remember picture,overlay]\node (Aend) {};
		\begin{tikzpicture}[remember picture,overlay]
			\coordinate (Atop) at ($(Astart)+(-1.3cm,0.8ex)$);
			\coordinate (Abottom) at ($(Atop |- Aend)+(0,-0.8ex)$);

			\draw[decorate,
				decoration={brace,mirror,amplitude=4pt},
				thick,
				purple] (Atop) -- (Abottom);
			\node[rotate=90,anchor=south]
			at ($(Atop)!0.5!(Abottom)+(-0.2cm,0)$)
			{\small \textcolor{purple}{Low-Degree Tests}};
		\end{tikzpicture}

		\tikz[remember picture,overlay]\node (Bstart) {};
		\For{$\theta \in \set{\theta_{A,1}, \theta_{A,2}, \theta_{B,1}, \theta_{B,2}}$}{
			Is \SC{$\theta[\rand{\mathbf{u}}, \cdot], \rand{P_{i}}+\rand{s}$} $\eqq$ $\theta[\rand{\mathbf{u}}, \rand{P_{i}}] + \rand{s}$ for $0\le i\le 1$ and \LC{$\theta, \rand{\mathbf{u}}$} $\eqq \; \theta[\rand{\mathbf{u}}, \rand{L}]$ \;
		}

		Is \LC{$\theta_{A,1}, \, \Phi_{\rand{\lambda}}$} $\eqq$ $A_{0}[(\rand{\mathbf{a}}, \mathbf{0}) + \rand{\lambda} \rand{\bm{\alpha}}, \rand{L}]$ and \LC{$\theta_{A,1}, \, \Phi_{0}$} $\eqq$ $0$ \;

		Is \LC{$\theta_{A,2}, \, \Phi_{\rand{\lambda}}$} $\eqq$ $A_{0}[(\rand{\mathbf{a}}, Z_{H, \rand{\veca}}) + \rand{\lambda} \rand{\bm{\alpha}}, \rand{L}]$ and \LC{$\theta_{A,2}, \, \Phi_{0}$} $\eqq$ \SC{$\chi[\rand{\mathbf{a}}, \cdot], \Lambda_{1}$}

		Is $\chi'[(\rand{\veca},\rand{\vecb}),\rand{L}] \eqq \chi[\rand{\veca}, \rand{L}] - \chi[\rand{\vecb}, \rand{L}]$ \;

		Is \LC{$\theta_{B,1}, \, \Phi_{\rand{\lambda}}$} $\eqq$ $B_{0}[(\rand{\mathbf{a}}, \rand{\mathbf{b}}, \mathbf{0}) + \rand{\lambda} \rand{\bm{\beta}}, \rand{L}]$ and \LC{$\theta_{B,1}, \, \Phi_{0}$} $\eqq$ $0$ \;

		Is \LC{$\theta_{B,2}, \, \Phi_{\rand{\lambda}}$} $\eqq$ $B_{0}[(\rand{\mathbf{a}}, \rand{\mathbf{b}}, Z_{H, \rand{\veca}, \rand{\vecb}}) + \rand{\lambda} \rand{\bm{\beta}}, \rand{L}]$ and \LC{$\theta_{B,2}, \, \Phi_{0}$} $\eqq$ \SC{$\chi'[(\rand{\mathbf{a}}, \rand{\mathbf{b}}), \cdot], \Lambda_{\LDE(E)(\rand{\mathbf{a}}, \rand{\mathbf{b}})}$}
		\tikz[remember picture,overlay]\node (Bend) {};

		\begin{tikzpicture}[remember picture,overlay]
			\coordinate (Btop) at ($(Bstart)+(-1.3cm,0.8ex)$);
			\coordinate (Bbottom) at ($(Btop |- Bend)+(0,-0.8ex)$);

			\draw[decorate,
				decoration={brace,mirror,amplitude=4pt},
				thick,
				purple] (Btop) -- (Bbottom);
			\node[rotate=90,anchor=south]
			at ($(Btop)!0.5!(Bbottom)+(-0.2cm,0)$)
			{\small \textcolor{purple}{$3$-Coloring Tests}};
		\end{tikzpicture}

		\lIf{any of the above tests return $\reject$}{\Return{$\reject$}}
		\lElse{{\Return{$\accept$}}}
	\end{algorithm}
\end{algobox}

\newgeometry{top=1in, bottom=1in, left=0.7in, right=0.7in}
\begin{algobox}
	\begin{algorithm}[H]
		\caption{Completeness Prover}
		\label{algo:compl-prover}

		\DontPrintSemicolon

		\SetKwProg{Fn}{Function}{}{}

		{\nonl {\bf Parameters:}} Given $n,h,m,c,c_1,c_2,t,D,m_1,q \in \N$, $\omega\in \F_q$, $\rho:\F_q\to \F_2^t$, $H\subseteq \F_q$, $V=H^m$, $\mathcal{F}(\cdot)$, $\mathcal{F}_1(\cdot)$ as in~\Cref{algo:pcp-verifier-final}.\\
		\KwIn{A 3-colorable graph $G=(V,E)$ and its proper 3-coloring $\mathsf{Color}:V\to \{1,\omega,\omega^2\}$.\\}
		{\nonl {\bf Output:} $\chi \in \mathcal{F}(m)$, $\chi_1 \in \mathcal{F}_{1}(2m)$, $\chi'\in \mathcal{F}(2m), \lines{\chi'}\in \mathcal{F}_{1}(4m)$, $\zero{A} \in \mathcal{F}(2m), \lines{\zero{A}} \in \mathcal{F}_{1}(4m)$, $\zero{B} \in \mathcal{F}(4m)$, $\lines{\zero{B}} \in \mathcal{F}_1(8m)$.}\\

		\SetKwBlock{Subroutines}{Subroutines}{}

		\SetKwFunction{routinehad}{$\mathrm{Had}$}
		\SetKwFunction{routinepsi}{$\Psi$}
		\SetKwFunction{psionlines}{$\Psi^\ast$}
		\SetKwFunction{routinelines}{$\mathcal{L}$}
		\SetKwFunction{polydiv}{PolyDivide}
		\SetKwFunction{multdiv}{MultiDivide}

		\vspace{1mm}

		\Subroutines{

		\Fn{\routinelines{$P\in \cP_D(k,\F_q)$}}{
		\KwRet{%
		$\mathcal{L}(P)\in
			(\F_q^{D+1})^{q^{2k}}$
		defined by $\mathcal{L}(P)[\mathbf{a},\mathbf{b}] = P(\mathbf{a}+\mathbf{b} X)$ for $\veca,\vecb\in \F_q^k$
		}\;
		}

		\Fn{\routinepsi{$(\alpha_0,\ldots,\alpha_D)\in\F_q^{D+1}$}}{
		\For{$0\le i\le D$}{
			$\mathrm{rep}(i)\gets(i_0,\ldots,i_{c-1})$
			such that
			$i=\sum_{j=0}^{c-1}i_jm_1^j$ and $0\le i_j<m_1$\;
		}

		\KwRet{%
		$\Psi(\alpha_0,\ldots,\alpha_D)\in
			(\F_q)^{q^{cm_1}}$
		defined by $\Psi(\alpha_{0},\ldots,\alpha_{D})[\mathbf{u}] = \sum_{j=0}^{D} \alpha_{j} \cdot u_{0,\mathrm{rep}(j)[0]} \cdots u_{c-1,\mathrm{rep}(j)[c-1]}$ for $\mathbf{u} \in \F_{q}^{cm_{1}}$
		}\;
		}

		\Fn{\psionlines{$F\in(\F_q^{D+1})^{q^{2k}}$}}{
		\KwRet{
		$\Psi^\ast(F)\in
			(\F_q)^{q^{2k}\times q^{cm_1}}$
		defined by
		$\Psi^\ast(F)[\mathbf a,\mathbf b,\mathbf u]
			=$
		\routinepsi{$F[\mathbf a,\mathbf b]$}$[\mathbf u]$ for $(\mathbf{a}, \mathbf{b}) \in \F_{q}^{2k}$ and $\mathbf{u} \in \F_{q}^{cm_{1}}$
		}
		}

		\Fn{\polydiv{$P=P(\mathbf{X}),Z=Z(X_i)$}}{
			\KwRet{%
				polynomials $Q,R$ such that
				$P=Q\cdot Z+R$ and $\deg_{X_i}(R)<\deg(Z)$
			}\;
		}

		\Fn{\multdiv{$P,Z_1,\ldots,Z_k$}}{
			$\mathrm{List}\gets\emptyset$,  $R\gets P$\;

			\For{$1\le i\le k$}{
				$(Q_i,R) \gets$ \polydiv{$R,Z_i$}\;
				$\mathrm{List}\gets(\mathrm{List}, Q_i)$\;

			}
			\KwRet{$\mathrm{List}$}
		}

		\Fn{\routinehad{$f \in \{\mathcal D \to \F_q\}$}}{
			\KwRet{%
				$\mathrm{Had}(f)\in
					\F_2^{|\mathcal D|\times|\cP_3(t,\F_2)|}$
				defined by
				$\mathrm{Had}(f)[\mathbf a,P]
					=P(\rho(f(\mathbf a)))$
				for $\mathbf a\in\mathcal D$ and $P\in \cP_3(t,\F_2)$
			}\;
		}
		}

		\vspace{1mm}

		$Z_{H}(X) \gets \prod_{\gamma\in H} (X-\gamma)$

		\vspace{1mm}

		$(A^{(1)}, \ldots, A^{(m)}) \gets$ \multdiv{$\LDE${$(\mathsf{Color})$}$^{3}-1$, $Z_H(X_1),\dots, Z_H(X_m)$}{} \;

		\vspace{1mm}

		$\widetilde{\chi} \gets$ $\LDE(\mathsf{Color})(\mathbf{Y_{1}}) - \LDE(\mathsf{Color})(\mathbf{Y_{2}})$

		\vspace{1mm}

		$(B^{(1)}, \ldots, B^{(2m)}) \gets$ \multdiv{$\LDE({E})$ $\cdot (\widetilde{\chi}^{3}-1)$, $Z_{H}(X_{1}),\dots, Z_{H}(X_{2m})$  }{} \;

		\vspace{1mm}

		\Return{ \routinehad{$\LDE${$(\mathsf{Color})$}}, \routinehad{\psionlines{\routinelines{$\LDE${$(\mathsf{Color})$}}}}, \routinehad{$\widetilde{\chi}$}, \routinehad{\psionlines{\routinelines{$\widetilde{\chi}$}}}, \newline \routinehad{$\sum_{i=1}^{m} A^{(i)}(\mathbf{X}) \cdot Y_{i}$}, \routinehad{\psionlines{\routinelines{$\sum_{i=1}^{m} A^{(i)}(\mathbf{X}) \cdot Y_{i}$}}},\newline
			\routinehad{$\sum_{i=1}^{2m} B^{(i)}(\mathbf{X}) \cdot Y_{i}$}, \routinehad{\psionlines{\routinelines{$\sum_{i=1}^{2m} B^{(i)}(\mathbf{X}) \cdot Y_{i}$}} }}

	\end{algorithm}
\end{algobox}

\end{document}